\newcommand{\longv}[1]{#1}
\newcommand{\shortv}[1]{}
\tikzset{
    >=stealth',
    pil/.style={
           ->,
           thick,
           shorten <=2pt,
           shorten >=2pt,}
}
\newcommand\mmodif[1]{\ifinalign@ #1 \else \ifmmode #1 \else $#1$\xspace \fi \fi} 
\newcommand\at{@} 
\newcommand\+{\mathrm{+}}                
\newcommand\pt{\mathrm{\cdot}}           
\newcommand\dash{\mathrm{\text{-}}}
\newcommand\FV{\mmodif{\mathrm{FV}}}     
\newcommand\true{\mathbb{t}} 
\newcommand\false{\mathbb{f}} 
\newcommand\rta{\mmodif{\rightarrow}}
\newcommand\Rta{\mmodif{\Rightarrow}}
\newcommand\da{\mathrm{\downarrow}}
\newcommand\Da{\mathrm{\Downarrow^h}}
\newcommand\ua{\mathrm{\uparrow}}
\newcommand\Ua{\mathrm{\Uparrow^h}}
\newcommand\cons{\mathrm{\rta}}
\newcommand\ruledarrow[3]{\stackrel{#1}{\rta}\!\!{}_{#2}^{#3}}
\newcommand\Lra{\Leftrightarrow}
\newcommand\llb{\llbracket}
\newcommand\rrb{\rrbracket}
\newcommand\llc{(\! |}
\newcommand\rrc{|\! )}
\newcommand\I{\mmodif{\boldsymbol{I}}}
\newcommand\0{\mmodif{\boldsymbol{0}}}
\newcommand\eps{\mmodif{\boldsymbol{\epsilon}}}
\newcommand\bareps{\mmodif{\boldsymbol{\bar\epsilon}}}
\newcommand\Dinf{\mmodif{D_\infty}}
\newcommand\Pinf{\mmodif{P_{\!\infty}}}
\newcommand\BA{\boldsymbol{B}}
\newcommand\B[1]{\boldsymbol{B}(#1)}
\newcommand\dapprox[1]{\boldsymbol{ap}(#1)}
\newcommand\Nat{\mmodif{\mathbb{N}}}
\newcommand\ext[1]{\bold{ext}_{#1}}
\newcommand\Lamb{\mmodif{\Lambda}}
\newcommand\Lam[1]{\mmodif{\Lambda_{\tau,#1}}}
\newcommand\Lamcont[1]{\mmodif{\Lambda^{\!\protect\llc\pt\protect\rrc}_{\tau,#1}}}
\newcommand\Test[1]{\mmodif{\boldsymbol{T}_{\tau,#1}}}
\newcommand\Testc[1]{\mmodif{\boldsymbol{T}^{\protect\llc\pt\protect\rrc}_{\tau,#1}}}
\newcommand\Var{\mmodif{\protect\mathbb{Var}}}
\newcommand\Comp[1]{\mmodif{\bar #1}}               
\newcommand\Rea{\mathfrak{R}}                        
\newcommand\Reb{\mathfrak{Q}}                        
\newcommand\leobtau[1]{\mmodif{\sqsubseteq_{\tau(#1)}}}
\newcommand\equivobtau[1]{\mmodif{\equiv_{\tau(#1)}}}
\newcommand\Lcalcul{$\lambda$-calculus\xspace}
\newcommand\newsym[1]{\mmodif{#1}}
\newcommand\newsyminv[1]{}
\newcommand\newsyminvinv[2]{}
\newcommand\newsyminvsec[2]{}
\newcommand\newsymprem[2]{\mmodif{#1}}
\newcommand\newsymsec[2]{\mmodif{#2}}
\newcommand\newsyminvinvinv[3]{}
\newcommand\newdef[1]{{\em #1}}
\newcommand\newdefinv[1]{}
\newcommand\newdefinvinv[2]{}
\newcommand\newdefprem[2]{{\em #1}}
\newcommand\newdefpremsec[2]{{\em #1 #2}}
\newcommand\newdefsecpreminv[3]{{\em #2 #1}}
\newcommand\newdefinvinvinv[3]{}
\newtheorem{definition}{Definition}
\newtheorem{theorem}[definition]{Theorem}
\newtheorem{example}[definition]{Example}
\newtheorem{lemma}[definition]{Lemma}
\newtheorem{remark}[definition]{Remark}
\newtheorem{conjecture}{Conjecture}
\newtheorem{proposition}{Proposition}
\newenvironment{boite}{\vspace{-5pt}\begin{framed}\begin{center}\vspace{-10pt}}
                       {\end{center}\vspace{-10pt}\end{framed}\vspace{-5pt}}
\title{Refining Properties of Filter Models:\quad\quad Sensibility, Approximability and Reducibility}
\author[1]{Flavien Breuvart}
\affil[1]{LIPN, UMR 7030, Univ Paris 13, Sorbonne Paris Cit\'e, France \\
  \texttt{breuvart@lipn.univ-paris13.fr}}
\title{Refining Properties of Filter Models:\quad\quad\quad Sensibility, Approximability and Reducibility\footnote{A full version of the paper is available at \cite{B.ApproxLong}, \url{http://arxiv.org/abs/1801.05153}}}
\titlerunning{Refining Properties of Filter Models}
\author[1]{Flavien Breuvart}
\affil[1]{LIPN, UMR 7030, Univ Paris Nord, Sorbonne Paris Cit\'e, France \\
  \texttt{breuvart@lipn.univ-paris13.fr}}
\authorrunning{F.\ Breuvart}  
\subjclass{***}
\keywords{untyped lambda-calculus, denotational semantics, realisability, sensibilty.}
\begin{document}

\maketitle

\begin{abstract}
In this paper, we study the tedious link between the properties of sensibility and approximability of models of untyped $\lambda$-calculus. Approximability is known to be a slightly, but strictly stronger property that sensibility. However, we will see that so far, each and every (filter) model that have been proven sensible are in fact approximable. We explain this result as a weakness of the sole known approach of sensibility: the Tait reducibility candidates and its realizability variants.

In fact, we will reduce the approximability of a filter model $D$ for the $\lambda$-calculus to the sensibility of $D$ but for an extension of the $\lambda$-calculus that we call $\lambda$-calculus with $D$-tests. Then we show that traditional proofs of sensibility of $D$ for the $\lambda$-calculus are smoothly extendable for this $\lambda$-calculus with $D$-tests. 
\end{abstract}

\section*{Introduction}
%

{\bf Sensibility.}
It is the ability, for a model, to distinguish non terminating programs from meaningful ones by collapsing the interpretations of the formers (Def.~\ref{def:sens}). Through Curry-Howard isomorphism, it also corresponds to the consistence of the internal theory of the model. This shows the importance in understanding sensibility, but also the undecidability of such a property. 

Such profound but undecidable results are often targets for classification into a hierarchy of subclasses, serving as grinding stone for proof techniques. Here we take an unorthodox approach consisting in classifying sensible models by using as discriminator a slightly stronger property called ``approximability''. To our surprise, we found out that available methods to prove sensibility (reducibility) where not powerful enough to distinguish sensibility from approximability.

%

\smallskip \noindent
{\bf Approximability.} The approximation theorem (Def.~\ref{def:approx}) is an important concept when considering denotational models of the head reduction. In order to study head reduction, $\lambda$-calculists systematically use B\"ohm trees, which are basically normal forms of a degenerated $\lambda$-calculus using an error symbol (Def.~\ref{def:omegastuff}). Such objects are able to approximate terms, the same way as partial evaluations approximate the notion of evaluation. A model is approximable if the interpretation of a term is the limit of its finite B\"ohm approximants; i.e., infinite behaviors are, in the model, limits of finite ones. 

This notion has been extensively studied \cite[Section~III.17.3]{BarDekSta} and this article presents a new sufficient condition for approximability, the {\em weak positivity} by far encompassing any previous results on approximability (of filter models). As a property on models, approximability is supposed to be strictly stronger than sensibility. Indeed, approximability implies that the interpretation of any diverging terms (and only those) are collapsed into the interpretation of the error symbol $\Omega$. This inclusion is supposed to be strict as, for example, approximable models are not able to distinguish the Turing fixpoint from the Church fixpoint. In fact, there is a continuity of sensible but non-approximable $\lambda$-theories, it is surprising that we are not able to model any of those.

\longv{\smallskip} \noindent
{\bf Reducibility.} 
In this title, ``Reducibility'' refers to Tait reducibility methods~\cite{Tait67} and its modern extensions (including realisability). These methods used to prove structural properties of type systems and models, such as sensibility and approximability but also more practical properties~\cite{ThForFree}. For type systems, it consists of constructing saturated sets of terms with the wanted property by induction on types, and then in proving that every typable term has been included. For denotational models, the method is more subtle due to the structure not being inductive : one must find a fixpoint to be able to apply the method, but the fixpoint does not need to be computable or constructive in any way.
\shortv{\vspace{-0.1em}}

In Section~\ref{ssec:realisability}, we use the sensibility and the approximability as a grinding stone to perform yet a new dissection of those reducibility/realisability methods. We try to be as general as possible until the last moment in order to get the the coarsest possible characterization, but also in order to point over the specific weaknesses of the method. We will discuss in the conclusion and along the paper why we were not able to fill the gap between approximability and sensibility. In particular, we insist on the link between this obstacle and the difficulty to perform fixpoint on non-monotonous functions.

\longv{\smallskip} \noindent
{\bf Filter Models.}
Introduced in the 80’s using the notion of type as the elementary brick for their construction, filter models \cite{CDHL84} (Def.\ref{def:filterModels}) are extracted from a type theory with simple types enlarged by intersection types and subtyping. Formally, the interpretation of a $\lambda$-term is the filter generated by the set of its types. Variations on the intersection type theory induce different filter models. The resulting class essentially corresponds to the class of Scott complete lattices.  
\shortv{\vspace{-0.1em}}

Filter models (and domains) form one of the classes of models of untyped $\lambda$-calculus that have been the more broadly studied, but properties such as sensibility and approximability are yet to be understood perfectly.
In particular, a simple bibliographical analysis show that that the theoretically huge gap between sensible and approximable models have never been filed by any model. The best advancements toward this direction are covered by the third part of ``Lambda-calculus with types''  \cite{BarDekSta}.

\longv{\smallskip}  \noindent
{\bf $\lambda$-calculi with tests.} In order to exhibit the link between sensibility and approximability, we are using $\lambda$-calculi with tests of Section~\ref{sec:D-tests}. These are syntactic extensions of the untyped $\lambda$-calculus with operators defining types of the underlying intersection type system. We will see (Sec.~\ref{ssec:BTandTests}) that the approximability of a filter model $D$ is equivalent to the sensibility of the same model $D$ for the $\lambda$-calculus with $D$-tests $\Lam D$ (with respect to a notion of head convergence). This theorem brings together the notions of sensibility and approximability in a very novel way!
\shortv{\vspace{-0.1em}}

The calculi with tests played a central role in this paper. The idea of test mechanisms as syntactic extensions of the \Lcalcul
was first used by Bucciarelli {\em et al.} \cite{BCEM11} and developed further by the author \cite{Bre14,These,B.H*OpeAspect} for Krivine-models. The one presented in this paper is yet an other generalization to the broader (extensional and distributive) filter models. Originally inspired from Wadsworth's labeled $\lambda\bot$-calculus \cite{Wad76} and Girard experiments \cite{Gir87,DeFal00}, they are syntactic extensions of the $\lambda$-calculus with operators defining compact elements of the given models. Expressing the model in the syntax allows perform inductions directly on the reduction steps, rather than on the construction of B\"ohm trees.

\longv{\smallskip}  \noindent
{\bf Content.}
Section~\ref{ssec:preliminaries} will focus on preliminaries, with mostly standard presentations of the untyped lambda-calculus, the filter models and the B\"oms trees. In Section~\ref{sec:D-tests}, we present the $\lambda$-calculi with tests, mostly following previous works of the author~\cite{Bre14}; we give their syntax, their interpretation in filter models, and finally their main properties. Section~\ref{ssec:BTandTests} is short but central in this paper: we present here the collapse of the notions of approximability and sensibility at the level of test extensions.

In a Section~\ref{ssec:realisability}, we will present a standard proof of sensibility by reducibility adapted to $\lambda$-calculi with tests. Using our new equivalence between sensibility for this calculus with tests and approximability, this {\em a priori} standard proof of sensibility becomes a non-standard proof of approximability! This allows us to describe a condition for approximability that encompasses every known sensible extensional filter models, bringing these two properties closer than we believed them to be. 



\section{Preliminaries}
  \label{ssec:preliminaries}
  \subsection{The \Lcalcul}

In this paper, we only consider the minimal untyped $\lambda$-calculus with the contextual and/or the head reduction, in the pure tradition of Barendregt book~\cite{Barendregt}. 
$\lambda$-terms are defined up to $\alpha$-equivalence by the following grammar using notation ``{\em \`a la} Barendregt'' (where variables are denoted $x,y,z...$): 
\begin{center}
\begin{tabular}{l @{\hspace{2em}} c @{\hspace{4em}} c @{\ \ } c @{\quad} l}
($\lambda$-terms) & $\Lamb$: & $M,N$ & $::=$ & $x\quad |\quad \lambda x.M\quad |\quad M\ N$
\end{tabular}
\end{center}
We let $\FV(M)$ denote the set of free variables of a $\lambda$-term $M$. 
We let $M[N/x]$ denote the capture-free substitution of $x$ by $N$.
The $\lambda$-terms are subject to the $\beta$-reduction:
$$(\beta) \quad\quad (\lambda x.M)\ N\  \ruledarrow{\beta}{}{}\ M[N/x] $$
The writing $C\llc M\rrc$ denotes the term obtained by filling the holes of $C$ by $M$. 
The small step reduction~$\rta$ is the closure of $(\beta)$ by any context, and $\rta_h$ is the closure of $(\beta)$ by the rules:
\begin{center}
  \AxiomC{$M\rta_h M'$}
  \UnaryInfC{$\lambda x.M\rta_h \lambda x.M'$}
  \DisplayProof\hskip 10pt
  \AxiomC{$M\rta_h M'$}
  \AxiomC{\hspace{-3pt}$M$ is an application}
  \BinaryInfC{$M\ N\rta_h M'\ N$}
  \DisplayProof
\end{center}
The transitive reduction $\rta^*$ (resp $\rta_h^*$) is the reflexive transitive closure of $\rta$ (resp $\rta_h$). \\
The big step head reduction, denoted $M\Da N$, is $M\rta_h^*N$ for $N$ in a {\em head-normal form}, {\em i.e.}, of the form
\shortv{$\lambda x_1...x_k.y\ M_1\cdots M_k$\,, for $M_1,...,M_k$ any terms.}
\longv{$$\lambda x_1...x_k.y\ M_1\cdots M_k\ , \quad\quad \text{for }M_1,...,M_k \text{ any terms.}$$}
 We write $M\Da$ for the ({\em head}) {\em convergence}, {\em i.e.}, whenever there is $N$ such that $M\Da N$. We write $M\Ua$ for the divergence.

Other notions of convergence exist (strong, lazy, CbV...), but we focus on head convergence.

  \subsection{Filter Models} 
  \label{ssec:K-models}
%

We introduce here the main object of this article: distributive extensional filter models (DEFiM). 

Despite corresponding to reflexive complete lattices (endowed with continuous functions), we are not using this presentation to describe filter models, but rather its dual representation by Stone duality: the sup-lattice of compact elements. The following presentation is rather standard, and the notations can be find here~\cite{CaSa09} for example. This presentation has the advantage to match the representation of the interpretation of terms as intersection types derivations, as we will see in Proposition~\ref{prop:eqIT}.

The models consists of a set $D$ of ``types'' (or compact elements), and two operations: the intersection $\wedge$ (characterizing the induced order) and the functional arrow $\rta$ (characterizing the reflexive embedding). Moreover, we will consider extensionality, which means that the $\eta$-conversion is viable, it is enabled by (and is equivalent to) \longv{\todo{ref?} }the existence of a specific function $\ext{D}:D\rta\mathcal P_f(D{\times} D)$.

\begin{definition}[\cite{CDHL84}]\label{def:filterModels}
  A \newdef{filter model} is a triple $(D,\wedge,\cons)$ where:
  \begin{itemize}
  \item $D=(|D|,\wedge)$ is a pointed meet-semilattice, with $\omega$ and $\ge_D$ denoting top element and the order\shortv{.}\longv{:
    \begin{align*}
       \hspace{-2em}
      \alpha\wedge\alpha & = \alpha &
      \alpha\wedge\beta & = \beta\wedge\alpha &
      \alpha\wedge(\beta\wedge\gamma) & = (\alpha\wedge\beta)\wedge\gamma &
      \alpha\wedge\omega &= \alpha &
      (\alpha\ge_D \beta\ \Lra\ \alpha\wedge \beta = \beta)
    \end{align*}
    }
  \item \newsym{\protect \cons} is a binary operation on $D$ such that for any finite sequence $(\alpha_i,\beta_i)\in (D\times D)^n$: 
    $$\gamma\cons\delta\ge_D\bigwedge_i\alpha_i\cons\beta_i \quad\quad \Lra \quad\quad \delta\ge_D\bigwedge_{\{i\mid\gamma\le\alpha_i\}}\beta_i,$$
    in particular, $\gamma\cons\delta=\omega\ $ iff $\ \delta=\omega$.
    \end{itemize}
  A filter model is \emph{extensional} whenever there is a function $\ext{D}: D\cons \mathcal{P}_f(D\times D)$ that associates to each $\alpha\in D$ a finite subset $\ext{D}(\alpha)\subseteq D\times D$ such that:
     $$\alpha = \bigwedge_{(\beta,\gamma)\in \ext{D}(\alpha)} \beta\cons\gamma  $$
     \longv{
     It is free to consider that the image of $\ext{D}(\alpha)$ by $\rta$ is an anti-chain in the sens that for any pair $(\beta,\gamma)\in \ext{D}(\alpha)$ and any finite subset $I\subseteq \ext{D}(\alpha)$ with at least $2$ element:\todo{is it used?}
     $$ \bigwedge_{(\beta',\gamma')\in I}(\beta'\cons\gamma') \not\in Im(\cons) \quad\quad \text{and} \quad\quad \alpha\neq\bigwedge_{(\beta',\gamma')\in \ext{D}(\alpha)-(\beta,\gamma)}\beta'\cons\gamma' $$
     In particular $(\beta,\omega)\in \ext{D}(\alpha)$ implies $\alpha=\omega$, moreover $\ext D(\omega)=\{(\beta,\omega)\}$ for some arbitrary $\beta$ since $\beta\cons\omega = \omega$.\\
     }
     Unfortunately, the choice of the function $\ext{D}$ is generally not unique or even canonical. In order remove any influence from this choice, we restrict our study to \emph{distributive} filter models. A filter model $D$ is distributive whenever any $\alpha \ge \beta\wedge\gamma$ is accessible in the sens that there exists a decomposition $\alpha=\beta'\wedge\gamma'$ such that $\beta'\ge_D\beta$ and $\gamma'\ge_D\gamma$.
\end{definition}\longv{\todo{discussion on $\rta$ and reflexivity?}}

For short, we call \emph{DEFiM} the distributive extensional filter models.
By abuse of notation we may write \longv{the quadruple} $(D,\wedge,\rta,\ext{D})$ simply as $D$ when it is clear from the context that we are referring to a DEFiM. 



Creating a DEFiM from scratch is often heavy, as they have to satisfy complex rules even forcing the model to be an infinite object. Fortunately, there is a way to automatically infer the required properties from a smaller (often finite) core object. This core object is a partial DEFiM which is a basically a subset of a DEFiM. 

\begin{definition}\label{def:partialKweb}
  An \newdef{partial filter model} is a triple $(E,\wedge,\rta)$ satisfying the axioms of filter models except that $\rta$ is partially defined and for any $\alpha,(\beta_i)_{i\le n}\in E^{n+1}$:
  $$(\forall i\le n, \alpha\cons \beta_i\ \text{defined }) \quad\quad \Rta \quad\quad \alpha\cons \bigwedge_i\beta_i\ \ \text{defined as }\ \bigwedge_i(\alpha\cons\beta_i) $$
  It is a \newdef{partial DEFiM} if $\ext{E}$ is defined and $E$ satisfies the other axioms of DEFiMs.
\end{definition}

\shortv{
\begin{prop}
  Any partial DEFiM $E$ can be extensionaly completed into a DEFiM $\Comp{E}$ which is the smallest one containing $E$.
\end{prop}
\begin{proof}
   By recursively adding missing arrows and intersections.
\end{proof}
}
\longv{
\begin{definition}\label{def:Compl}
  The \newdefpremsec{completion}{of a partial DEFiM} $(E,\wedge,\rta,\ext{E})$ is the union 
  $$\Comp{E}\ :=\ \left(\:\bigcup_{n\in\mathbb{N}}E_n \:,\: \bigcup_{n\in\mathbb{N}}(\wedge_n) \:,\: \bigcup_{n\in\mathbb{N}}(\rta_n)\:,\:\bigcup_{n\in\mathbb{N}}\ext{E_n}\:\right)$$
  of partial completions $(E_n,\wedge_n,\rta_n,\ext{E_n})$ that are partial DEFiM defined by induction on $n$:\\
  The initialization $(E_0,\wedge_0,\rta_0,\ext{E_0}):=(E,\wedge,\rta,\ext{E})$ is performed by the partial DEFiM, and we continue by completing:
  \begin{itemize}
  \item $|E'_{n+1}| := \mathcal{P}_f(|E_n|\uplus (|E_n|^2{-}Dom(\rta_n)))$,
    for readability, use $a,b..$ for elements of $|E'_{n+1}|$ and we write $\alpha\cons_{\!*}\beta$ for $(\alpha,\beta)$ in the second component,
  \item $\cons'_{n+1}$ is defined only over $|E_n|^2\subseteq |E'_{n+1}|^2$ by $\{\alpha\} \cons'_{n+1}\{\beta\} := \{\alpha \cons_n\beta\}$ whenever $(\alpha,\beta)\in Dom(\cons_n)$ and by $\{\alpha\} \cons'_{n+1}\{\beta\} := \{\alpha\cons_*\beta\}$ whenever $(\alpha,\beta)\in|E_n|^2-Dom(\cons_n)$,
  \item $\ext{n+1}'$ is defined over $|E'_{n+1}|$ by $\ext{n+1}'(a)=\{\ext{n}(\alpha)\mid \alpha\in E_n\cap a\}\wedge \{(\alpha,\beta) \mid \alpha\cons_*\beta \in a \}$.
  \item $|E_{n+1}| := |E'_{n+1}|/_{\equiv}$ is the quotient of $|E'_{n+1}|$ by the equivalence $a\equiv b$ whenever:
    $$
      \forall (\alpha,\gamma)\in \ext{n+1}'(a),\quad \gamma \ge\!\! \bigwedge_{\{(\beta,\delta)\in \ext{n+1}(b)\mid \alpha\le \beta\}}\!\!\!\!\!\!\delta \quad\quad\quad\quad\quad
      \forall (\beta,\delta)\in \ext{n+1}'(b),\quad \delta \ge\!\! \bigwedge_{\{(\alpha,\gamma)\in \ext{n+1}(a)\mid \beta\le \alpha\}}\!\!\!\!\!\!\gamma
    $$
  \item $\wedge_{n+1}$, $\cons_{n+1}$ and $\ext{n+1}$ are the quotients of $\wedge$, $\cons'_{n+1}$ and $\ext{n+1}'$ by $\equiv$ (notice that $\cons'_{n+1}$ only need to be defined for one element equivalent class for $\cons'_{n+1}$ to be defined).
  \end{itemize}
  We consider that $E_n\subseteq E_{n+1}$ since for each $\alpha\in |E_n|$, $\{\alpha\}$ is in a different equivalence class.
\end{definition}
}

\longv{
\begin{remark}\label{rk:completion}
  The completion of a partial filter model $(E,\rta,\ext{E})$ is well defined and corresponds to the coarsest DEFiM $\Comp{E}$ containing $E$. In particular, any DEFiM model $D$ is the completion of itself: $D=\Comp{D}$.
\end{remark}
}

\begin{example}\label{example:1}
  Most filter models found in the literature can in fact be given as extensional completions of extremely simple partial filter models. Here are some example, the three first one are from the literature and the two last one are fully expressing the power of the extensional completion:
  \begin{enumerate}
  \item \label{enum:Dinf} \newdef{Scott's $\protect\Dinf$} \newsyminvsec{K-model}{\protect\Dinf} \cite{Scott} is the completion of \vspace{-0.3em}
    \begin{align*}
      |D| &:= \{\omega,*\},   &   \omega \wedge * &:= *,   &   \omega \cons * &:= * &  \ext{D}(*):= \{(\omega,*)\}.
    \end{align*}
    Notice, that $*\cons *$ is undefined in $D$ so that we need the completion.
  \item \label{enum:Park} \newdef{Park's $\protect\Pinf$} \newsyminvsec{K-model}{\protect\Pinf} \cite{Par76} is the completion of\vspace{-0.3em}
     \begin{align*}
      |P| &:= \{\omega,*\},   &   \omega\wedge * &:= *,   &  *\cons * &:= * &  \ext{P}(*) &:= \{(*,*)\}.
    \end{align*}
  \item \label{enum:D*} \newsymsec{K-model}{Norm} or \newsymsec{K-model}{\protect D^*_\infty} \cite{CDZ87} is the completion of\vspace{-0.3em}
    \begin{align*}
      |D^*| &:= \{\omega,p,q\},   &   \omega\wedge p &:= p & \omega\wedge q&:= q & p\wedge q &:= q\\
       p\cons q &:= q & q\cons p &:= p & \ext{D^*}(q)&:=\{(p,q)\} &\ext{D^*}(p)&:=\{(q,p)\},
    \end{align*}
  \item \label{enum:Z} ${Z_\infty}$ is the completion of \vspace{-0.3em}
    \begin{align*}
      |Z| &:= \{\underline n\mid n\ge 0\},   &   n \wedge \omega &:= n,   &   \omega \cons \underline {n\+1} &:= \underline n &  \ext{D}(n):= \{(\omega,\underline {n\+1})\}.
    \end{align*}
  \item \label{enum:U} ${U_\infty}$ is the completion of \vspace{-0.3em}
    \begin{align*}
      |U| &:= \{\underline n\mid n\ge 0\},   &   n \wedge \omega &:= n,   &   \underline {n\+1} \cons \underline {n\+1} &:= \underline n &  \ext{D}(n):= \{(\underline {n\+1},\underline {n\+1})\}.
    \end{align*}
  \end{enumerate}
\end{example}

\longv{
\begin{remark}\label{rk:completion}
  The completion of a partial filter model is in fact the free completion in the sens that for any partial DEFiM $E\subseteq D$ contains in a DEFiM $D$, there is a function $\phi: \Comp E\rta D$ stable in $E$ such that $\llb .\rrb_{\Comp E} \subseteq \llb .\rrb_{D}$, where $\llb .\rrb_{\Comp E}$ (resp. $\llb .\rrb_{D}$) is the interpretation of the $\lambda$-calculus into $\Comp E$ (resp. $D$) as defined below.
\end{remark}
}

Filter models where introduced so that the interpretation of the \Lcalcul into a given $D$ can be equivalently characterized by a specific {\em intersection type system}, whose types are elements $\alpha\in D$ and with $\wedge$ modeling the intersection and $\cons$ the logical implication.

\longv{
\begin{figure*}
  \begin{center}
    $\llb x_i \rrb_D^{\vec x}  =  \{(\vec \alpha,\beta)\ |\ \beta\ge\alpha_i\}$
    \hspace{5em}
    $ \llb \lambda y.M \rrb_D^{\vec x}  =  \{(\vec \alpha,\bigwedge_i(\beta_i\cons\gamma_i))\ |\ \forall i,\ (\vec \alpha\beta_i,\gamma_i)\in\llb M\rrb_D^{\vec xy}\}$
    \vspace{0.3em}\\
    $\!\llb M\ N \rrb_D^{\vec x}  =  \{(\vec \alpha,\bigwedge_i\beta_i)\ |\ \exists \vec\gamma_i,(\vec \alpha,\bigwedge_i(\gamma_i\cons\beta_i))\in\llb M\rrb_D^{\vec x}\ \wedge (\vec \alpha,\bigwedge_i\gamma_i)\in\llb N\rrb_D^{\vec x}\}$
  \end{center}
  \caption{Direct interpretation of $\Lamb$ in the model $D$}\label{fig:intLam}
\end{figure*}
\begin{figure*}
  \begin{center}
    \AxiomC{$\phantom{M}$}
    \UnaryInfC{$x:\alpha\vdash x:\alpha$}
    \DisplayProof\hskip 50pt
    \AxiomC{$\Gamma\vdash M:\alpha$}
    \UnaryInfC{$\Gamma,x:\beta\vdash M:\alpha$}
    \DisplayProof\hskip 50pt
    \AxiomC{$\Gamma\vdash M:\beta$}
    \AxiomC{$\alpha\ge\beta$}
    \BinaryInfC{$\Gamma\vdash M:\alpha$}
    \DisplayProof\\[0.5em]
    \AxiomC{$\Gamma,x:\alpha\vdash M:\beta$}
    \UnaryInfC{$\Gamma\vdash \lambda x.M:\alpha\cons\beta$}
    \DisplayProof\hskip 30pt
    \AxiomC{$\Gamma\vdash M:\alpha\cons\beta$}
    \AxiomC{$\Gamma\vdash N:\alpha$}
    \BinaryInfC{$\Gamma\vdash M\ N:\beta$}
    \DisplayProof\hskip 30pt
    \AxiomC{$\Gamma\vdash M:\alpha$}
    \AxiomC{$\Gamma\vdash M:\beta$}
    \BinaryInfC{$\Gamma\vdash M:\alpha\wedge\beta$}
    \DisplayProof
  \end{center}
  \caption{Intersection types for the $\lambda$-calculus in $D$} \label{fig:tyLam1}
\end{figure*}
}
\begin{definition}[Interpretation of $\lambda$-terms]
  In Figure~\ref{fig:intLam}, we give the interpretation of $M$ into a filter model $D$. The interpretation $\llb M\rrb^{x_1...x_n}_{D}$ of $M$ is suppose to be a morphism (Scott-continuous function) from $D^{\vec x}$ to $D$ where $\vec x$ is a superset of the free variables of $M$. Concretely, we use the Cartesian closedness of the underlying domain category do define $\llb M\rrb^{x_1...x_n}_{D}$ as a downward-close subsets of $(D^{op})^{\vec x}\times D$.

  In Figure~\ref{fig:tyLam1}, we give the intersection-type assignment corresponding to $D$. Notice that we can infer typing sequents for the form $\Gamma\vdash M:\alpha$ for $\Gamma=(x_1:\alpha_1,...,x_n:\alpha_n)$ an environment defined (at least) over all free variables of $M$.
\end{definition}

\begin{example}
  \vspace{-0.5em}
  \begin{align*}
    \llb\lambda x.y\rrb_{D}^y &= \left\{((\alpha), \bigwedge_i(\beta_i\cons\alpha'_i))\ \middle|\  \forall i,\ \alpha'_i\ge_{D}\alpha\right\},
    \quad \llb\lambda x.x\rrb_{D}^y = \left\{((\alpha), \bigwedge_i(\beta_i\cons\beta'_i))\ \middle|\  \forall i,\ \beta'_i\ge_{D}\beta_i\right\},\\
    \llb\I\rrb_{D} &= \left\{\bigwedge_i(\alpha_i\cons\alpha'_i)\ \middle|\  \forall i,\ \alpha'_i\ge_{D}\alpha_i\right\},\\
    \llb \underline{1}\rrb_{D} &=  \left\{ \bigwedge_i(\alpha_i\cons\alpha'_i)\ \middle|\ \exists \vec \beta',\vec\gamma',\vec \beta,\vec\gamma,
      \begin{matrix} 
        &\bigwedge_i\alpha'_i=\bigwedge_j(\beta'_j\cons\gamma'_j),
        &\bigwedge_j\gamma'_j = \bigwedge_k\gamma'_k, \\
        &\bigwedge_i\alpha_i\le\bigwedge_k(\beta_k\cons\gamma_k),
        &\bigwedge_j\beta'_j\le\bigwedge_k\beta_k,\\
      \end{matrix}
    \ \right\}.
  \end{align*}
  In the last two cases, terms are interpreted in an empty environment. We, then, omit the empty sequence associated with the empty environment, {\em e.g.}, $\alpha\cons \beta\cons\alpha$ stands for $((),\alpha\cons \beta\cons\alpha)$.\\
  We can verify that extensionality holds, indeed $\llb \underline{1}\rrb_{D}=\llb\I\rrb_{D}$. To prove it we use $\ext{D}$ as the witness function for both existential.
\end{example}

\begin{prop}\label{prop:eqIT}
  Let $M$ be a term of \Lamb and $D$ a filter model, the following statements are equivalent representations of the interpretation of $M$ in $D$:
  \begin{itemize}
  \item $(\vec \alpha,\beta)\in\llb M\rrb_D^{\vec x}$ for the interpretation defined in Figure~\ref{fig:intLam},
  \item the type judgment $\vec x:\alpha\vdash M:\beta$ is derivable by the rules of Figure~\ref{fig:tyLam1}.
  \end{itemize}
\end{prop}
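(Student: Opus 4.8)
The plan is to prove the two representations coincide by establishing both implications, each by an induction of a different flavour: the implication ``membership $\Rightarrow$ derivability'' goes by structural induction on $M$, while ``derivability $\Rightarrow$ membership'' goes by induction on the typing derivation. Before either, I would isolate three \emph{closure properties} of the interpretation that mirror exactly the three structural rules of Figure~\ref{fig:tyLam1}, namely: (S)~\emph{subsumption}, $(\vec\alpha,\beta)\in\llb M\rrb_D^{\vec x}$ and $\beta'\ge_D\beta$ imply $(\vec\alpha,\beta')\in\llb M\rrb_D^{\vec x}$; (I)~\emph{intersection}, $(\vec\alpha,\beta),(\vec\alpha,\gamma)\in\llb M\rrb_D^{\vec x}$ imply $(\vec\alpha,\beta\wedge\gamma)\in\llb M\rrb_D^{\vec x}$; and (W)~\emph{weakening}, $\llb M\rrb_D^{\vec x}$ does not depend on the components attached to variables outside $\FV(M)$. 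These are proved by one simultaneous induction on $M$, unfolding the clauses of Figure~\ref{fig:intLam}.

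For the forward direction I would proceed by cases on $M$. If $M=x_i$ and $\beta\ge_D\alpha_i$, I build $x_i:\alpha_i\vdash x_i:\alpha_i$ by the axiom, add the remaining variables by weakening, and reach $\beta$ by subsumption (since $\beta\ge_D\alpha_i$). If $M=\lambda y.N$, then by definition $(\vec\alpha,\delta)\in\llb M\rrb_D^{\vec x}$ means $\delta=\bigwedge_i(\beta_i\cons\gamma_i)$ with $(\vec\alpha\beta_i,\gamma_i)\in\llb N\rrb_D^{\vec x y}$; the induction hypothesis yields $\vec x{:}\vec\alpha,y{:}\beta_i\vdash N:\gamma_i$, the abstraction rule gives $\vec x{:}\vec\alpha\vdash\lambda y.N:\beta_i\cons\gamma_i$, and the intersection rule assembles $\delta$. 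If $M=N_1\,N_2$, I read off the witnesses $\vec\gamma_i$ from the definition, apply the induction hypothesis to get $\vec x{:}\vec\alpha\vdash N_1:\bigwedge_i(\gamma_i\cons\beta_i)$ and $\vec x{:}\vec\alpha\vdash N_2:\bigwedge_i\gamma_i$, then for each $i$ use subsumption to descend to $\gamma_i\cons\beta_i$ and to $\gamma_i$ (both are $\ge_D$ the respective meets), apply the application rule to obtain $\beta_i$, and finish with the intersection rule.

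For the backward direction, induction on the derivation handles the structural rules immediately by invoking (W), (S) and (I), while the logical rules are dealt with by reading the corresponding clause of Figure~\ref{fig:intLam} with a singleton meet ($n=1$): the abstraction rule over $\vec x{:}\vec\gamma,y{:}\alpha\vdash N:\beta$ gives $(\vec\gamma,\alpha,\beta)\in\llb N\rrb_D^{\vec x y}$ by hypothesis, which is exactly the condition for $(\vec\gamma,\alpha\cons\beta)\in\llb\lambda y.N\rrb_D^{\vec x}$; and the application rule gives $(\vec\gamma,\alpha\cons\beta)\in\llb N_1\rrb$ and $(\vec\gamma,\alpha)\in\llb N_2\rrb$, which witness $(\vec\gamma,\beta)\in\llb N_1 N_2\rrb$ by taking $\gamma_1=\alpha$.

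The routine cases above are essentially bookkeeping; the genuine technical core, and the step I expect to be the main obstacle, is the proof of the closure properties, specifically subsumption~(S) in the abstraction and application cases. There the output is a meet of arrows $\bigwedge_i(\beta_i\cons\gamma_i)$, and closing under an arbitrary $\delta'\ge_D\delta$ forces one to re-decompose $\delta'$ as a meet of arrows compatible with the interpretation of the subterm. This is precisely where the defining arrow axiom of the filter model, $\gamma\cons\delta\ge_D\bigwedge_i\alpha_i\cons\beta_i \Leftrightarrow \delta\ge_D\bigwedge_{\{i\mid\gamma\le_D\alpha_i\}}\beta_i$, must be used to transfer order information between arrow types and their codomains, in combination with the induction hypothesis applied to the subterm. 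Once (S), (I) and (W) are secured, the two inductions close without further difficulty.
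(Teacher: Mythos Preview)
Your proposal is correct and considerably more detailed than the paper's own proof, which is the single sentence ``By structural induction on the grammar of $\Lambda$.'' Your plan is the standard unpacking of that sentence: the forward direction is indeed a structural induction on $M$, and your identification of the arrow axiom $\gamma\cons\delta\ge_D\bigwedge_i\alpha_i\cons\beta_i \Leftrightarrow \delta\ge_D\bigwedge_{\{i\mid\gamma\le\alpha_i\}}\beta_i$ as the crux of the subsumption closure~(S) in the abstraction case is exactly right.

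One small remark: for the backward direction you use induction on the typing derivation rather than on $M$, which is the natural choice given the non-syntax-directed structural rules (weakening, subsumption, intersection). The paper's one-liner glosses over this distinction; your formulation is the honest one. Note also that your argument for~(S) in the $\lambda$-case implicitly needs $\beta'$ to be writable as a meet of arrows; in the paper's setting this is guaranteed by extensionality via $\ext{D}$, so you may want to invoke it explicitly there (and to observe that contravariant monotonicity in the context component is also part of the closure package, since you need to pass from $(\vec\alpha\beta_i,\gamma_i)$ to $(\vec\alpha\beta'_k,\gamma'_k)$ with $\beta'_k\le_D\beta_i$).
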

\longv{
\begin{proof}
By structural induction on the grammar of $\Lamb$.
\end{proof}
}

\begin{definition}\label{def:sens}
  A DEFiM $D$ is sensible for the $\lambda$-calculus when $\llb M\rrb^{\vec x}=\emptyset$ iff $M\Ua$.
\end{definition}

\longv{
\begin{example}\label{ex:coind} \todo{To define uniquely as a type system...}
  Not every filter model can be obtained as the extensional completion of a simpler partial filter model. Using the correspondence of Proposition~\ref{prop:eqIT}, we can also use intersection type systems to define complex models. For example, the (positive) coinductive intersection types form a filter model of interest:

  Coinductive intersection types are generated by the following grammar, which add the coinductive pattern $\nu X.\alpha$ to the usual intersection types. Notice that we use syntactic $\wedge$, $\rta$ and $\epsilon$ temporarily to represent what will become the semantic ones in the model (where $X$ is a variable from a denumerable set): 
  \begin{align*}
    (D) \quad\quad  \alpha,\beta\ &:=\quad X \mid \alpha\wedge\beta \mid \omega \mid \nu X.(\alpha\cons\beta)
  \end{align*}
  this grammar is quotiented by the equations of filter model (Def.~\ref{def:filterModels}) modulo the coinduction:
  \begin{align*}
    \alpha\wedge\alpha &= \alpha &
    \alpha\wedge\beta &= \beta\wedge\alpha &
    \alpha\wedge(\beta\wedge\gamma) &= (\alpha\wedge\beta)\wedge\gamma &
    \alpha\wedge\omega &= \alpha &
    \nu X.(\alpha\cons\beta) = \nu Y.(\alpha[Y/X]\cons\beta[Y/X])
  \end{align*}
  and $\nu X.(\gamma\cons\delta)\ge_D\bigwedge_i\nu Y.(\alpha_i\cons\beta_i)$ whenever
  $$ \delta[\nu X.(\gamma\cons\delta)/X]\ge_D\bigwedge_{\{i\mid\gamma[\nu X.(\gamma\cons\delta)/X]\le\alpha_i[\nu Y.(\alpha_i\cons\beta_i)]/Y\}}\beta_i[\nu Y.(\alpha_i\cons\beta_i)/Y],$$
  
  For the sake of extensionality and sensibility, it is usual to restrict ourselves to close types and positive coinductive calls, which are the types $\alpha\in D_F$ such that $;\Vdash \alpha$ is provable in the system:
  \begin{center}
    \AxiomC{$\Delta;\Gamma, X\Vdash \alpha$}
    \AxiomC{$\Gamma, X ; \Delta\Vdash \beta$}
    \BinaryInfC{$\Gamma;\Delta\Vdash \nu X. \alpha\cons\beta$}
    \DisplayProof 
    \hspace{0.em}
    \AxiomC{$\phantom{G}$}
    \UnaryInfC{$\Gamma;\Delta\Vdash \omega$}
    \DisplayProof 
    \AxiomC{$\phantom G$}
    \hspace{0.em}
    \UnaryInfC{$\Gamma, X;\Delta\Vdash X$}
    \DisplayProof 
    \hspace{0.em}
    \AxiomC{$\Gamma;\Delta\Vdash \alpha$}
    \AxiomC{$\Gamma;\Delta\Vdash \beta$}
    \BinaryInfC{$\Gamma;\Delta\Vdash \alpha\wedge\beta$}
    \DisplayProof 
    \hspace{0.em}
    \AxiomC{$\Delta;\Gamma\Vdash \alpha$}
    \AxiomC{$\Gamma;\Delta\Vdash \beta$}
    \BinaryInfC{$\Gamma;\Delta\Vdash \alpha\cons\beta$}
    \DisplayProof 
  \end{center}
  This definition is correct because these rules distributes with the equations of filter models, excepts for the second which can be resolved trivially.
  This system can be shown distributive and extensional with:
  \begin{align*}
    \ext{D}(\alpha\wedge\beta) &:= \ext{D}(\alpha)\cup\ext{D}(\beta) &
    \ext{D}(\nu X.(\alpha\cons\beta)) &:= \{\alpha[(\nu X.(\alpha\cons\beta))/X],\beta[(\nu X.(\alpha\cons\beta))/X]\}
  \end{align*}
\end{example}
}

\begin{definition}[Sensibility]\label{def:sens}
  A filter model $D$ is sensible for the untyped $\lambda$-calculus if diverging terms corresponds exactly to those of empty interpretation:
    $$ M\Da \quad\Lra\quad \llb M\rrb^{\vec x} \neq \emptyset\ .$$
\end{definition}

\longv{
  Hereafter, $D$ denotes a fixed DEFiM.
}

  \subsection{B\"ohm Approximants} 
  \label{ssec:BT}
  
The B\"ohm approximants (or finite B\"ohm trees) are the normal forms of a $\lambda$-calculus extended with a constant\footnote{In other context, the constant $\Omega$ has been replaced by $\bot$.} $\Omega$ and an additional reduction $\rta_\Omega$.
\longv{

}
A \emph{$\lambda_\Omega$-term} $M$ is a $\lambda$-term possibly containing occurrences of the constant $\Omega$.
The set $\Lambda_\Omega$ of all \emph{$\lambda_\Omega$-terms} is generated by the grammar:
$$
	\Lambda_\Omega :\quad M,N\ ::=\ x\ \mid\ \lambda x.M\ \mid\ MN \ \mid\ \Omega
$$
\longv{
Similarly a \emph{(single hole) $\lambda_\Omega$-context }is a (single hole) context $C\llc\rrc-$ possibly containing occurrences of $\Omega$.
}
The \emph{$\Omega$-reduction} $\rta_\Omega$ is defined as the $\lambda_\Omega$-contextual closure of the rules:
$$
(\Omega)\qquad\qquad\qquad	\lambda x.\Omega\to \Omega\qquad\qquad\qquad \Omega\: M \to \Omega
$$
The $\beta$-reduction is extended to $\lambda$-terms in the obvious way. The interpretation of $\lambda_\Omega$-terms is the immediate extension of the interpretation of terms (Fig.\ref{fig:intLam}) plus the minimal interpretation given to the bottom: $\llb \Omega\rrb^{\vec x}:= \{(\vec\alpha,\omega)\mid \forall \vec \alpha\}$.
We write $\BA$ for the set of $\lambda$-terms in $\beta\Omega$-normal forms whose elements are denoted by $s,t,u,\dots$

The following characterization of $\beta\Omega$-normal forms is well known.

\begin{lemma}
Let $M\in\lambda$. We have $M\in\BA$ if and only if either $M=\Omega$ or $M$ has shape $\lambda x_1\dots x_n.x_iM_1\cdots M_k$ (for some $n,k\ge 0$) and each $M_i$ is $\beta\Omega$-normal.
\end{lemma}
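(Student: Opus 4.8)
The plan is to prove both implications by a straightforward structural analysis, the whole difficulty being a careful bookkeeping of the two distinct $\Omega$-reduction rules alongside the $\beta$-rule. Recall that $M\in\BA$ means precisely that the $\lambda_\Omega$-term $M$ contains no subterm that is a $\beta$-redex $(\lambda x.P)Q$, no subterm of shape $\lambda x.\Omega$, and no subterm of shape $\Omega\,P$; note also that being $\beta\Omega$-normal is inherited by every subterm.

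For the easy direction ($\Leftarrow$) I would check that the two shapes are redex-free. The constant $\Omega$ is trivially normal. For $M=\lambda x_1\dots x_n.x_iM_1\cdots M_k$ with each $M_j$ normal, any redex would either (i) live inside some $M_j$, impossible since they are normal by hypothesis; (ii) be a head $\beta$-redex, impossible because the head is the variable $x_i$, not an abstraction; or (iii) be one of the two $\Omega$-redexes, impossible because no abstraction here has body $\Omega$ (the innermost abstraction body $x_iM_1\cdots M_k$ is variable-headed, hence not $\Omega$) and no application has $\Omega$ as left component (the left components along the spine are $x_i, x_iM_1,\dots$, again variable-headed). Hence $M\in\BA$.

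For the main direction ($\Rightarrow$), assume $M\in\BA$ with $M\neq\Omega$. First I would pull out all leading abstractions, writing $M=\lambda x_1\dots x_n.H$ (with $n\ge 0$) where $H$ is not an abstraction; $H$ is again $\beta\Omega$-normal. I then rule out $H=\Omega$: if $n\ge 1$ this would create the redex $\lambda x_n.\Omega$, and if $n=0$ it would force $M=\Omega$, both excluded. The remaining possibilities for the non-abstraction, non-$\Omega$ term $H$ are a variable (yielding the claimed shape with $k=0$) or an application. In the application case I would invoke the key sub-fact that the left-hand head of a normal application must be a variable: it cannot be an abstraction (else a head $\beta$-redex) nor $\Omega$ (else an $\Omega\,P$ redex). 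Peeling off the arguments then gives $H=x_iM_1\cdots M_k$ with each $M_j$ a subterm of $H$, hence normal, which is exactly the stated shape.

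The only real obstacle is not the induction itself but ensuring completeness of the case analysis on the two $\Omega$-rules: one must remember that the first rule forbids $\Omega$ as the body of an abstraction and the second forbids it as the operator of an application, and these together are exactly what make $\Omega$ a dead end that can occur only in isolation. Once this is tracked, the argument is the standard normal-form analysis; a clean way to package it is to induct on whether the term is an abstraction and, when it is not, to perform a secondary induction on the length of the application spine in order to extract the variable head.
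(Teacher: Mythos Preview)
Your proof is correct. The paper does not actually supply a proof of this lemma: it introduces the statement with ``The following characterization of $\beta\Omega$-normal forms is well known'' and then moves on, so there is nothing to compare against beyond noting that your careful case analysis on the three redex shapes ($\beta$, $\lambda x.\Omega$, and $\Omega\,P$) is exactly the standard argument being alluded to.
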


The set of all B\"ohm approximants of $M$ can be obtained by calculating the direct approximants of all $\lambda$-terms $\beta$-convertible with $M$.
Only then will we fully describe the property of approximability for a filter model.

\begin{definition}\label{def:omegastuff} Let $M\in\lambda$. 
\begin{enumerate}
\item\label{def:omegastuff1}
	The \emph{direct approximant of $M$}, written $\dapprox{M}$, is the $\lambda$-term defined as:
	\begin{itemize}
	\item $\dapprox{M} := \Omega$ if $M = \lambda x_1\dots x_k.(\lambda y.M')NM_1\cdots M_k$,
	\item $\dapprox{M} := \lambda x_1\dots x_n.x_i\dapprox{M_1}\cdots \dapprox{M_k}$ if $M = \lambda x_1\dots x_n.x_iM_1\cdots M_k$,
	\end{itemize}
\item\label{def:omegastuff2} 
	The \emph{set of finite approximants of $M$} is defined by\longv{:
	$$
			\B{M} := \big\{\dapprox{M'} \mid M \rta^*_h M'\big\}\,.
	$$
        }\shortv{ $\B{M} := \big\{\dapprox{M'} \mid M \rta^*_h M'\big\}.$}
\end{enumerate}
\end{definition}

\begin{definition}\label{def:approx}
  A filter model is approximable iff the interpretation of any term $M\in\Lambda$ is the sup of its approximants:
  \longv{
  $$ \llb M \rrb^{\vec x} \quad = \quad \bigcup_{N\in \B{M}} \llb N\rrb^{\vec x}.$$
  }
  \shortv{ $\llb M \rrb^{\vec x}  = \bigcup \llb N\rrb^{\vec x}$ over all $N\in \B{M}$. }
 
\end{definition}

\section{$\lambda$-calculi with D-tests}
  \label{sec:D-tests}
%

\subsection{Syntax}\label{sec:TestSynatax}
The original idea of using {\em tests} to recover full abstraction (via a theorem of definability) is due to Bucciarelli {\em et al.} \cite{BCEM11}. In~\cite{These,B.H*OpeAspect}, the author caried a precise study of variants of Bucciarelli {\em et al.}'s calculus adapted to Krivin's models. Here we extend a bit his definition to get all DEFiMs.

Directly dependent on a given DEFiM $D$, the \Lcalcul with $D$-tests $\Lam{D}$ is, to some extent, an internal calculus for $D$. In fact, we will see that, for $D$ to be fully abstract for $\Lam{D}$, it is sufficient to be sensible (Th.~\ref{th:FAwT}). Notice that in the notation $\Lam{D}$, $\tau$ stands for tests and $D$ if the considered DEFiM.

The idea is to introduce tests as a new kind in the syntax. Tests \newdefinv{tests} $Q\in\newsym{\protect\Test{D}}$ are sort of co-terms,\longv{\footnote{We will see in Remark~\ref{rk:polarisedTests} that in a polarized context, the behavior of test does not correspond to co-term (or stack), but to commands (or processes), {\em i.e.}, to interactions between usual terms and fictive co-terms extracted from the semantics.}} in the sens that their interpretations $\llb Q\rrb^{x_1...x_n}\in (D^n\Rta \{*\})$ are maps from the context to the trivial model, which is a singleton $\{*\}$ where $*$ represents the convergence of the evaluation, seen as a success.

The interaction between terms and tests is carried out by two groups of syntactical constructors, each indexed by the elements $\alpha\in D$, and with the following kinds:\vspace{-0.2em}
  $$\tau_\alpha:\Lam{D}\rta\Test{D}\quad \text{ and }\quad \bar\tau_\alpha:\Test{D}\rta\Lam{D}.\vspace{-0.2em}$$

The first operation, \newsym{\protect\tau_\alpha}, will verify that its argument $M\in \Lam{D}$ has the point $\alpha$ in its interpretation. Intuitively, this is performed by recursively unfolding the B\"ohm tree of $M$ and succeeding ({\em i.e.}, converging) when $\alpha$ is in the interpretation of the finite unfolded B\"ohm tree. If $\alpha\not\in \llb M\rrb$, the test~$\tau_\alpha(M)$ will either diverge or refute (raising a $\0$ considered as an error). Concretely, it is an infinite application that feeds its argument with empty $\bar\tau$ operators.

The second operator, \newsym{\protect\bar\tau_\alpha}, simply constructs a term of interpretation $\da\alpha$ if its argument succeeds and diverges otherwise. Concretely, it is an infinite abstraction that runs its test argument, but also tests each of its applicants using $\tau$ operators.

In addition to these operators, we use \newdefprem{sums}{of terms} \newdefinvinv{sums}{of tests} and \newdefprem{products}{of tests} as ways to introduce may (for the addition) \newdefinvinv{may non-determinism}{in \Lam{D}} and must (for the multiplication) non-determinism\newdefinvinv{must non-determinism}{in \Lam{D}}; in the spirit of the $\lambda\+||$-calculus \cite{DLP98}. Indeed, these two forms of non-determinism are necessary to explore the branching of B\"ohm trees.

The idea of these two operators is to use the parametricity of our terms toward their intersection types. The term $\bar\tau_\alpha(\epsilon)$ (further on denoted by $\bareps_\alpha$), that transfers the always succeeding test $\epsilon$ into~a term of interpretation $\da\alpha$, constitutes the canonical term of type $\alpha$; its behavior is \longv{exactly }the common behavior of every term of type $\alpha$. Symmetrically, the test $\tau_\alpha(M)$ verifes whether $M$ behaves like a term of type $\alpha$.

\begin{definition}\label{def:LtauD-calculus}
  The \newdefpremsec{\Lcalcul}{with D-tests}, for short \newsym{\protect\Lam{D}}, is given by the following grammar:
  \begin{center}
    \begin{tabular}{l l r @{\ ::=\ } l  l}
      (term) & 
      $\Lam{D}$ & 
      $M,N$ \hspace{1em} & 
      \hspace{1em} $x\quad |\quad \lambda x.M\quad |\quad M\ N\quad |\quad \newsym{\protect\sum_{i\le n}\protect\bar\tau_{\alpha_i}(Q_i)}$ & 
      \hspace{1em} $,\forall (\alpha_i)_i\in (D-\omega)^n,n\ge 0$ 
      \vspace{0.5em}\\
      (test) & 
      $\Test{D}$ & 
      $P,Q$ \hspace{1em} & 
      \hspace{1em} $\newsym{\protect\sum_{i\le n}P_i}\quad |\quad \newsym{\protect\prod_{i\le n}P_i}\quad |\quad \protect\tau_{\alpha}(M)$ & 
      \hspace{1em} $,\forall \alpha\in (D-\omega),n\ge 0$ \\
    \end{tabular}
  \end{center}
  The empty sum is denoted by \newsym{\protect\0}, and the empty product by \newsym{\protect\eps}. Binary sums (resp. products) can be written with infix notation, {\em i.e.} $P\+Q$ (resp $P\pt Q$), but we will more than often use arbitrary finite sums $\Sigma_i P_i$ and products $\Pi_i P_i$. 
  
  Moreover, we use the notation $\newsym{\protect\bareps_\alpha}:=\bar\tau_\alpha(\eps)$ and $\newsym{\protect\bareps_a}:=\sum_{\alpha\in a}\bareps_\alpha$; which are terms.
  
  Sums and products are considered as multisets, in particular we suppose associativity, commutativity and neutrality with, respectively, $\0$ and $\eps$. 

  In the following, an \newdef{abstraction} can refer either to a $\lambda$-abstraction or to a sum of $\bar\tau$ operators. This notation is justified by the behavior of $\Sigma_i\bar\tau_{\alpha_i}(Q_i)$ that mimics an infinite abstraction.
  
  The operational semantics is given by three sets of rules in Figure~\ref{fig:OS}. The {\em main rules} of Figure~\ref{fig:ER} are the effective rewriting rules. The {\em distributive rules} of Figure~\ref{fig:DS} implement the distribution of the sum over the test-operators and the product. The small step semantics \newdefinvinv{reduction}{for \Lam{D}} \newsymprem{\protect\rta}{as reduction in \Lam{D}} is the free contextual closure \longv{({\em i.e.}, by the rules of Figure~\ref{fig:FCR})} of the rules of Figures~\ref{fig:ER} and ~\ref{fig:DS}. The {\em contextual rules} of Figure~\ref{fig:CR} implement the \newdefsecpreminv{reduction}{head}{for \Lam{D}} \newsymprem{\protect\rta_h}{for \Lam{D}} that is the specific contextual extension we are considering.
\end{definition}


\begin{figure} \caption{Operational semantics of the calculus with $D$-tests.\\ In rules $\bar\tau$ and $\tau$, notice that we use the notations $\tau_\omega(M) := \eps$ and $\bareps_\omega := \0$ in order to keep the rule simpler.}\label{fig:OS}
  \centering
    \begin{subfigure}[t]{\textwidth}
      \caption{Main rules}
      \label{fig:ER}
      \begin{minipage}{0.4\textwidth}
        \begin{center}
          \begin{tabular}{l c l}
            \vspace{1em}
            $(\beta)$\hspace{2em} $(\lambda x.M)\ N$ & $\!\rta\!$ & $M[N/x]$\\
            \vspace{1em}
            \newsymsec{reduction rule}{(\protect\bar\tau)}\hspace{2em}  $(\sum_i\bar\tau_{\alpha_i}(Q_i))\ N$ & $\!\rta\!$ & $\sum_i\sum_{(\beta,\gamma)\in \ext{D}(\alpha_i)}\bar\tau_{\gamma}(Q_i\ \pt\ \tau_\beta(N))$\\
            \vspace{1em}
            \newsymsec{reduction rule}{(\protect\tau)}\hspace{2em} $\tau_\alpha(\lambda x.M)$ & $\!\rta\!$ & $\prod_{(\beta,\gamma)\in \ext{D}(\alpha)}\tau_{\gamma}(M[\bareps_\beta/x])$\\
            \vspace{1em}
            \newsymsec{reduction rule}{(\protect\tau\protect\bar\tau)}\hspace{2em}  $\tau_\alpha(\sum_{i\in I}\bar\tau_{\beta_i}(Q_i))$ & $\!\rta\!$ & $\sum_{\{I'\subseteq I\mid \alpha\ge \bigwedge_{i\in I'}\beta_i\}}\prod_{i\in I'}Q_i$
          \end{tabular}
        \end{center}
      \end{minipage}
    \end{subfigure}

    \begin{subfigure}[t]{\textwidth}
      \caption{Distribution of the sum}  
      \label{fig:DS}
      \begin{minipage}{0.45\textwidth}
        \begin{center}
          \begin{tabular}{c r c l}
            \newsymsec{reduction rule}{(\protect\pt\+)}\hspace{4em} & $\Pi_{i\le n}\Sigma_{j\le k_i}Q_{i,j}$ &$\hspace{-0.5em}\rta\hspace{-0.5em}$ &$\Sigma_{j_1\le k_1,...,j_n\le k_n}\Pi_{i\le n}Q_{i,j_i}$
            \vspace{1em}\\
            \newsymsec{reduction rule}{(\protect\bar\tau+)}\hspace{4em} & $\bar\tau_\alpha(\Sigma_iQ_i)$ &$\hspace{-0.5em}\rta\hspace{-0.5em}$ &$\Sigma_i\bar\tau_\alpha(Q_i)$
          \end{tabular}
        \end{center}
      \end{minipage}
    \end{subfigure}
   
    \begin{subfigure}[t]{\textwidth}
      \caption{Contextual rules for the head reduction}
      \label{fig:CR}
      \begin{minipage}{\textwidth}
      \begin{minipage}{\textwidth}
        \begin{center}
          \AxiomC{$M\rta_h M'$}
          \RightLabel{\newsymsec{reduction rule}{(h\dash c\lambda)}}
          \UnaryInfC{$\lambda x.M\rta_h \lambda x.M'$}
          \DisplayProof \nolinebreak \hskip 50pt 
          \AxiomC{$M\rta_h M'$}
          \AxiomC{$M$ is an application}
          \RightLabel{\newsymsec{reduction rule}{(h\dash c\protect\at)}}
          \BinaryInfC{$M\ N\rta_h M'\ N$}
          \DisplayProof\\[0.7em]
          \AxiomC{$M\rta_h M'$}
          \AxiomC{$M$ is an application}
          \RightLabel{\newsymsec{reduction rule}{(h\dash c\protect\tau)}}
          \BinaryInfC{$\tau_\alpha(M)\rta_h \tau_\alpha(M')$}
          \DisplayProof\nolinebreak \hskip 40pt
          \AxiomC{$Q\rta_h Q'$}
          \AxiomC{\hspace{-1em}$Q$ is not a sum}
          \RightLabel{\newsymsec{reduction rule}{(h\dash c\protect\bar\tau)}}
          \BinaryInfC{$\bar\tau_\alpha(Q)\rta_h \bar\tau_\alpha(Q')$}
          \DisplayProof\\[0.7em]
          \AxiomC{$M\rta_h M'$}
          \RightLabel{\newsymsec{reduction rule}{(h\dash cs)}}
          \UnaryInfC{$M+N\rta_h M'+N$}
          \DisplayProof\nolinebreak \hskip 18pt
          \AxiomC{$Q\rta_h Q'$}
          \RightLabel{\newsymsec{reduction rule}{(h\dash c\protect+\protect\!)}}
          \UnaryInfC{$Q+P\rta_h Q'+P$}
          \DisplayProof\nolinebreak \hskip 18pt
          \AxiomC{$Q\rta_h Q'$}
          \AxiomC{\hspace{-1em}$Q$ is not a sum}
          \RightLabel{\newsymsec{reduction rule}{(h\dash c\protect\pt)}}
          \BinaryInfC{$Q\pt P\rta_h Q'\pt P$}
          \DisplayProof
        \end{center}
      \end{minipage}
      \end{minipage}
    \end{subfigure}

    \longv{
    \begin{subfigure}[t]{\textwidth}
      \caption{Contextual rules for the full reduction}
      \label{fig:FCR}
      \begin{minipage}{\textwidth}
      \begin{minipage}{\textwidth}
        \begin{center}
          \AxiomC{$M\rta M'$}
          \RightLabel{\newsymsec{reduction rule}{(c\lambda)}}
          \UnaryInfC{$\lambda x.M\rta \lambda x.M'$}
          \DisplayProof\nolinebreak\hskip 30pt
          \AxiomC{$M\rta M'$}
          \RightLabel{\newsymsec{reduction rule}{(c\protect\at L)}}
          \UnaryInfC{$M\ N\rta M'\ N$}
          \DisplayProof\nolinebreak\hskip 30pt
          \AxiomC{$N\rta N'$}
          \RightLabel{\newsymsec{reduction rule}{(c\protect\at R)}}
          \UnaryInfC{$M\ N\rta M\ N'$}
          \DisplayProof\\
          \vspace{1.3em}
          \AxiomC{$M\rta M'$}
          \RightLabel{\newsymsec{reduction rule}{(c\protect\tau)}}
          \UnaryInfC{$\tau_\alpha(M)\rta \tau_\alpha(M')$}
          \DisplayProof\nolinebreak\hskip 40pt
          \AxiomC{$Q\rta Q'$}
          \RightLabel{\newsymsec{reduction rule}{(c\protect\bar\tau)}}
          \UnaryInfC{$\bar\tau_\alpha(Q)\rta \bar\tau_\alpha(Q')$}
          \DisplayProof\\
          \vspace{1.3em}
          \AxiomC{$M\rta M'$}
          \RightLabel{\newsymsec{reduction rule}{(cs)}}
          \UnaryInfC{$M+N\rta M'+N$}
          \DisplayProof\nolinebreak\hskip 20pt
          \AxiomC{$Q\rta Q'$}
          \RightLabel{\newsymsec{reduction rule}{(c+\protect\!)}}
          \UnaryInfC{$Q+P\rta Q'+P$}
          \DisplayProof\nolinebreak\hskip 20pt
          \AxiomC{$Q\rta Q'$}
          \RightLabel{\newsymsec{reduction rule}{(c\protect\pt)}}
          \UnaryInfC{$Q\pt P\rta Q'\pt P$}
          \DisplayProof
        \end{center}
      \end{minipage}
      \end{minipage}
    \end{subfigure}
    }
  \end{figure}


\begin{example}
  The operational behavior of $D$-tests depends on $D$. Recall the DEFiMs of Example~\ref{example:1}. 
  In the case of Park $P_\infty$:\vspace{-0.2em}
  \begin{align*}
    \underline{\tau_*(\lambda x}.xx) \quad  
      & \stackrel{\tau}{\rta}_h \quad \tau_*(\underline{\bareps_*\ \bareps_*}) \quad
        \stackrel{\bar\tau}{\rta}_h \quad \underline{\tau_*(\bar\tau_*}(\underline{\tau_*(\bareps_*}))) \quad
        \stackrel{\tau\bar\tau}{\rta}_h\stackrel{\tau\bar\tau}{\rta}_h \quad \eps.
  \end{align*}
  In the case of Scott's $D_\infty$ we have in $\Lamb_{\tau(D_\infty)}$:\vspace{-0.2em}
  \begin{align*}
    \tau_*(\underline{(\lambda xy.x\ y)\ \bareps_*}) \quad 
      & \stackrel{\beta}{\rta}_h\quad \underline{\tau_*(\lambda y}.\bareps_*\ y) \quad 
        \stackrel{\tau}{\rta}_h \quad \tau_*(\underline{\bareps_*\ \0})  \quad
       \stackrel{\bar\tau}{\rta}_h \quad \tau_*(\underline{\bareps_*}) \quad
        = \quad \underline{\tau_*(\bar\tau_*}(\eps)) \quad
        \stackrel{\tau\bar\tau}{\rta}_h \quad \eps, \\
    \tau_*(\underline{(\lambda xy.y\ x)\ \bareps_*}) \quad
      & \stackrel{\beta}{\rta}_h \quad \underline{\tau_*(\lambda y}.y\ \bareps_*) \quad
        \stackrel{\tau}{\rta}_h \quad \tau_*(\underline{\0}\ \bareps_*) \quad
        \stackrel{\bar\tau}{\rta}_h \quad \underline{\tau_*(\0)} \quad
        \stackrel{\tau\bar\tau}{\rta}_h \quad \0.
  \end{align*}
  In the case of Norm:\vspace{-0.2em}
  \begin{align*}
    \underline{\tau_p(\lambda x}.x) \quad
      & \stackrel{\tau}{\rta}_h \quad \underline{\tau_p(\bareps_q)} \quad 
        \stackrel{\tau\bar\tau}{\rta}_h \quad \eps,
        &&\text{and }
    &\underline{\tau_q(\lambda x}.x) \quad
      & \stackrel{\tau}{\rta}_h \quad \underline{\tau_q(\bareps_p)} \quad
        \stackrel{\tau\bar\tau}{\rta}_h \quad \0.
  \end{align*}
 In the case of $Z_ \infty$:\vspace{-0.2em}
  \begin{align*}
    \tau_{\underline{n+2}}(\bareps_{\underline{n}}\ M)\stackrel{\bar\tau}{\rta}_h\:&\tau_{\underline{n+2}}(\bareps_{n+1})\stackrel{\tau\bar\tau}{\rta}_h \boldsymbol{0}\ , &
    \tau_{\underline{n+2}}(\bareps_{\underline{n}}\ M\ N)\stackrel{\bar\tau}{\rta}_h{\!\!\!\!}^2\:&\tau_{\underline{n+2}}(\bareps_{\underline{n+2}})\stackrel{\tau\bar\tau}{\rta}_h \eps\ , &
    \tau_{\underline{n+2}}(\bareps_{\underline{n}}\ M\ N\ L)
       \longv{\stackrel{\bar\tau}{\rta}_h{\!\!\!\!}^3\:&\tau_{\underline{n+2}}(\bareps_{\underline{n+3}})\stackrel{\tau\bar\tau}{\rta}_h}
       \shortv{\rta_h^4}
       \boldsymbol{0}
  \end{align*} 
\end{example}


\longv{
\begin{remark}\label{rk:polarisedTests}
  In a polarized (or classical) framework with explicit co-terms (or stacks) as the framework presented in \cite{Mun09}, tests would correspond to commands (or processes), or, more exactly, to conjunctions and disjunctions of commands. Indeed, a test $\tau_\alpha(M)$ is nothing else than the command $\langle M\mid\pi_\alpha\rangle$ where $\pi_\alpha$ would be the canonical co-term of interpretation $\ua\alpha$, the same way that $\bareps_\alpha$ is the canonical term of interpretation $\da\alpha$. Similarly, the term $\bar\tau(Q)$ can be seen as the canonical term $\bareps_\alpha$ endowed with a parallel composition referring to the set of commands $Q$. To resume, we have:
  \begin{align*}
    \tau_\alpha(M)\ &\simeq\  \langle M \mid \ua\alpha \rangle   &   \langle \bar\tau_\alpha(Q)\mid\pi\rangle\ &\simeq\ \langle \da\alpha \mid \pi\rangle\pt Q
  \end{align*}
\end{remark}
}


\begin{definition}
A test is in {\em may-head-normal form} if it has the shape
$\Pi_i\tau_{\alpha_{i}}(x_{i}\ M^1_{i}\cdots\ M^n_{i}) + Q,$
with $i\ge 0$ and $M^k_{i}$ any term.
A term is in \emph{may-head-normal form}, either if it has one of the shape
$(\lambda x_1....x_n.y\ M_1 \cdots\ M_m)$,    
or $\lambda x_1...x_n. \bar\tau_{\alpha}(Q)+N$,
where $m,n\ge 0$, $\alpha\in (D-\omega)$, $M_i$ and $N$ any terms, and $Q$ any test in head-normal form without sums. 
Coherently with the head convergence in \Lcalcul, the convergence to a may-head-normal form will be denoted by \newsymprem{\protect\Da}{in \Lam{D}}\newsyminvinv{\protect\Da N}{in \Lam{D}} and the divergence by \newsymprem{\protect\Ua}{in \Lam{D}}.
\end{definition}

\begin{example}
  For any $n\in\Nat$, the term $\underline{n}\ (\lambda x. \bar\tau_\alpha(\tau_\alpha(x)\+\tau_\beta(x)))\ \bareps_\alpha$ may-head-converges.
\end{example}

Let us notice that this calculus enjoys the properties of confluence and standardization and a powerfull property stating tests-reductions can always be postponed until the very end. \cite{These,B.H*OpeAspect}\longv{\todo{Consider writing them formally}} 
\longv{
It also enjoys a very nice property stating that tests-reductions can always be postponed until the very end:

\begin{theorem}\label{th:BetaFirst}
  Let $D$ a DEFiM and $M,N\in\Lam{D}$.\\
  For any reduction $M\rta^* N$, there exists $M',N'\in\Lam{D}$ such that $M\rta_\beta^*M'$ with only $\beta$-reductions, $M'\rta^*_{\backslash\!\!\!\beta}$ with only tests reductions, and $N\rta^*N$.
  \longv{
  \begin{alignat*}3
    \ M & \ \rta^* & \ N \\
    {}_\beta\rotatebox[origin=c]{-90}{$\rta^{*}$}\ & \ \ \ \rotatebox[origin=c]{-90}{$\rightsquigarrow$} & \rotatebox[origin=c]{-90}{$\rta^*$}\ \\
    \ M' & \ \rta^*_{\backslash\!\!\!\beta} & \ N' 
  \end{alignat*}
  }
  In particular, $M$ is may-head converging iff there is a sequence of $\beta$-reductions $M\rta^*_\beta L$ with $L$ that is may-head converging without any $\beta$-reduction.
\end{theorem}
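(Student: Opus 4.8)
The plan is to prove a \emph{postponement} (factorization) result: any mixed reduction can be reorganised so that all $\beta$-steps come first and all test-steps last. Throughout, write $\rta_\beta$ for a $\beta$-step and $\rta_t$ for a step using one of the test rules $(\bar\tau),(\tau),(\tau\bar\tau)$ or the two distributive rules, so that $\rta\ =\ \rta_\beta\cup\rta_t$. The cornerstone is a purely structural observation: \emph{no test reduction can create a $\beta$-redex}. A $\beta$-redex is an application whose left component is a $\lambda$-abstraction, and inspection of the five test rules shows that none of them ever places a $\lambda$-abstraction in function position. The only delicate rule is $(\tau)$, which substitutes $\bareps_\beta$ for $x$ in $\tau_\alpha(\lambda x.M)\rta\prod\tau_\gamma(M[\bareps_\beta/x])$; but $\bareps_\beta=\bar\tau_\beta(\eps)$ is a $\bar\tau$-abstraction, so an occurrence $x\,N$ in head position becomes $\bareps_\beta\,N$, which is a $(\bar\tau)$-redex, i.e.\ a \emph{test} redex, never a $\beta$-redex. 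Consequently, in any step $M\rta_t P$ the $\beta$-redexes of $P$ are exactly the residuals of $\beta$-redexes already present in $M$.

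First I would establish the local postponement lemma: if $M\rta_t P\rta_\beta N$, then there is $N'$ with $M\rta_\beta M''\rta_t N'$ and $N\rta_\beta^* N'$. By the structural observation the $\beta$-redex contracted in $P$ is a residual of a unique $\beta$-redex $R$ of $M$; I reduce $R$ in $M$ first. Since $R$ lies strictly inside a component of the test redex (an argument, a body, or a $Q_i$) or is disjoint from it, reducing $R$ leaves the head shape of the test redex untouched, so the test step still fires from $M''$ to some $N'$. When the test step duplicates the subterm containing $R$ (as $(\tau),(\bar\tau),(\tau\bar\tau)$ and sum-distribution all may), $N'$ is $P$ with \emph{all} residuals of $R$ reduced, whereas $N$ is $P$ with a \emph{single} residual reduced; the remaining residuals are then contracted by $N\rta_\beta^* N'$. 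This is precisely why the statement asks only for $N\rta^* N'$ rather than $N=N'$.

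Then I would globalise. Given $M\rta^* N$, I induct on the length of the sequence, peeling steps from the front: a leading $\beta$-step is prepended to the $\beta$-prefix, while a leading $\rta_t$ step is commuted past the entire $\beta$-block produced by the induction hypothesis using the local lemma, the superfluous residual $\beta$-steps being absorbed into the right-hand reduction $N\rta^* N'$ and reconciled through confluence of $\rta_\beta$ (a known property of the calculus). \textbf{The main obstacle} is the termination of this sorting: each commutation of a test-step past a $\beta$-block can, through duplication, spawn fresh $\beta$-steps, so one cannot merely count inversions. This is where the structural observation pays off again: because test-steps create no $\beta$-redexes, the pending $\beta$-work stays controlled and the sorting can be framed as an instance of an abstract postponement/decreasing-diagram argument in the style of Hindley and van Oostrom. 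Making this measure precise is the delicate point, and it is exactly what is treated in detail in the author's earlier work~\cite{These,B.H*OpeAspect}.

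Finally, the ``in particular'' clause follows by applying the factorization to a head-converging reduction $M\rta_h^* V$ with $V$ a may-head-normal form. The factorization isolates a prefix of $\beta$-steps $M\rta_\beta^* L$ after which the remaining steps reaching some $N'$ (with $V\rta^* N'$) are test-steps only. Being a may-head-normal form is preserved under reduction, since its head --- a variable or a non-applied, sum-free $\bar\tau_\alpha(Q)$ --- admits no head step; hence $N'$ still converges, and combined with standardization this exhibits a $\beta$-free head-converging continuation from $L$, i.e.\ $L$ may-head-converges without any $\beta$-reduction. The converse is immediate, as prefixing $\beta$-steps to a converging reduction preserves convergence by confluence and standardization.
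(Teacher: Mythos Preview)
The paper does not actually prove this theorem: it is stated (in the long version only) immediately after the sentence ``this calculus enjoys the properties of confluence and standardization and a powerful property stating tests-reductions can always be postponed until the very end'' and is attributed wholesale to the author's earlier work \cite{These,B.H*OpeAspect}. There is therefore no in-paper proof to compare against.

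Your sketch is a sound outline of the standard postponement argument and lines up with what one finds in those references. The pivotal structural fact you isolate---that no test rule can place a $\lambda$-abstraction in function position, hence $\rta_t$ never creates a $\beta$-redex---is exactly the engine of the result, and your case analysis of the five test rules (including the substitution of $\bareps_\beta$ in rule $(\tau)$) is correct. Your local commutation lemma is stated in the right shape (with the extra $N\rta_\beta^* N'$ to absorb duplicated residuals), and you rightly flag that the only genuinely delicate point is the termination of the global sorting, which is precisely the part the paper outsources to \cite{These,B.H*OpeAspect}. One minor caution on the ``in particular'' clause: rather than arguing that may-head-normal forms are literally preserved under arbitrary reduction (which is fiddly because of the side condition on $Q$), it is cleaner to invoke standardization directly on the test-only tail $L\rta_{\backslash\beta}^* N'$ to extract a $\beta$-free head-converging sequence from $L$; but this is a presentational point, not a gap.
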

}

%
%

\longv{
\begin{definition}
Grammars of \newdef{term-context}s \Lamcont{D} and \newdef{test-context}s \Testc{D} are given in Figure~\ref{fig:grC}. 
\end{definition}

\begin{figure*}
  \caption{Grammar of the contexts in a calculus with $D$-tests \label{fig:grC}}
  \begin{tabular}{l l r @{\ ::=\ } l l}
    \!\!(term-context)\!\! &
    \newsym{\protect\Lamcont{D}} & 
    $C$  & 
    $x\quad |\quad \newsymprem{\protect\llc.\protect\rrc}{for \protect\Lam{D}}\quad |\quad C \ C' \quad |\quad \lambda x.C \quad |\quad \sum_{i\le n}\bar\tau_{\alpha_i}(K_i )$ & 
    $,\forall (\alpha_i)_i\in D^n,n\ge 0$ 
    \vspace{0.5em}\\
    \!\!(test-context)\!\!  & 
    \newsym{\protect\Testc{D}}  & 
    $K$  & 
    $\sum_{i\le n}K_i \quad |\quad \prod_{i\le n}K_i \quad |\quad \tau_{\alpha}(C )$ & 
    $,\forall \alpha\in D,n\ge 0$ \\
  \end{tabular}
\end{figure*}

\begin{definition}
The \newdefprem{observational preorder}{of \Lam{D}} \newsym{\protect\leobtau{D}} of \Lam{D} is defined by: \vspace{-0.5em}
 $$M\sqsubseteq_{\tau(D)}N\ \ \text{iff}\ \ (\forall K \mathrm{\in} \Testc{D},\ K\llc M\rrc\Da\ \text{ implies }\ K\llc N\rrc\Da).$$
We let \newsym{\protect\equivobtau{D}} denote the \newdefprem{observational equivalence}{of \Lam{D}}, {\em i.e.}, the equivalence induced by \leobtau{D}.
\end{definition}

\begin{remark}
  The observational preorder could have been defined using term-contexts rather than test-contexts, but this appears to be equivalent and test-contexts are easier to manipulate (because normal forms for tests are simpler).
\end{remark}
}

\subsection{Semantics}\label{subsubsec:TestSemantics}
The standard interpretation of $\Lamb$ into $D$ can be extended to~$\Lam{D}$ (Fig.~\ref{fig:intTests}).
\begin{figure*} \label{fig:TestInter}
  \begin{subfigure}[t]{\textwidth}
    \caption{Interpretation of $\Lamb$ \longv{(copy of Figure~\ref{fig:intLam})}\shortv{\label{fig:intLam}}}
    \vspace{-1em}
    \begin{center}
      $\llb x_i \rrb_D^{\vec x}  =  \{(\vec \alpha,\beta)\ |\ \beta\ge\alpha_i\}$
      \hspace{5em}
      $ \llb \lambda y.M \rrb_D^{\vec x}  =  \{(\vec \alpha,\bigwedge_i(\beta_i\cons\gamma_i))\ |\ \forall i,\ (\vec \alpha\beta_i,\gamma_i)\in\llb M\rrb_D^{\vec xy}\}$
      \vspace{0.3em}\\
      $\!\llb M\ N \rrb_D^{\vec x}  =  \{(\vec \alpha,\bigwedge_i\beta_i)\ |\ \exists \vec\gamma_i,(\vec \alpha,\bigwedge_i(\gamma_i\cons\beta_i))\in\llb M\rrb_D^{\vec x}\ \wedge (\vec \alpha,\bigwedge_i\gamma_i)\in\llb N\rrb_D^{\vec x}\}$
    \end{center}
  \end{subfigure}
  \begin{subfigure}[t]{\textwidth}
    \caption{Interpretation of tests extensions \label{fig:intTests}}
    \vspace{-1em}
    \begin{center}
      $\llb \Sigma_{i\in J}\bar\tau_{\alpha_i}(Q_i) \rrb_D^{\vec x}  =  \{(\vec \beta,\gamma)\ |\ \exists I\subseteq J, \vec \beta\in \bigcap_{i\in I}\llb Q_i\rrb_D^{\vec x}\ \wedge\ \gamma\ge_{D}\bigwedge_{i\in I}\alpha_i\}$
      \\[0.4em]
      $\llb \0 \rrb^{\vec x}_D  =  \{(\vec\alpha,\omega)\}$
      \hspace{1em}
      $\llb \tau_{\alpha}(M) \rrb_D^{\vec x}  =  \{\vec \beta\ |\ (\vec \beta,\alpha)\in\llb M\rrb_D^{\vec x}\}$
      \\[0.4em]
      $\llb \Pi_{i\le k}Q_i \rrb_D^{\vec x}  =  \bigcap_{i\le k} \llb Q_i\rrb_D^{\vec x}$
      \hspace{1em}
      $\llb \eps \rrb^{\vec x}_D  =  D^{\vec x}$
      \hspace{1em}
      $\llb \Sigma_{i\le k}Q_i \rrb_D^{\vec x}  =  \bigcup_{i\le k}\llb Q_i\rrb_D^{\vec x}$
      \hspace{1em}
      $\llb \0 \rrb^{\vec x}_D  =  \emptyset$
    \end{center}
  \end{subfigure}
  \begin{subfigure}[t]{\textwidth}
    \caption{Intersection types for the $\lambda$-calculus in $D$ \label{fig:tyLam} \longv{(copy of figure~\ref{fig:tyLam1})}\shortv{\label{fig:tyLam1}}}
    \vspace{-1em}
      \begin{center}
        \AxiomC{$\phantom{M}$}
        \UnaryInfC{$x:\alpha\vdash x:\alpha$}
        \DisplayProof\hskip 50pt
        \AxiomC{$\Gamma\vdash M:\alpha$}
        \UnaryInfC{$\Gamma,x:\beta\vdash M:\alpha$}
        \DisplayProof\hskip 50pt
        \AxiomC{$\Gamma\vdash M:\beta$}
        \AxiomC{$\alpha\ge\beta$}
        \BinaryInfC{$\Gamma\vdash M:\alpha$}
        \DisplayProof\\[0.5em]
        \AxiomC{$\Gamma,x:\alpha\vdash M:\beta$}
        \UnaryInfC{$\Gamma\vdash \lambda x.M:\alpha\cons\beta$}
        \DisplayProof\hskip 30pt
        \AxiomC{$\Gamma\vdash M:\alpha\cons\beta$}
        \AxiomC{$\Gamma\vdash N:\alpha$}
        \BinaryInfC{$\Gamma\vdash M\ N:\beta$}
        \DisplayProof\hskip 30pt
        \AxiomC{$\Gamma\vdash M:\alpha$}
        \AxiomC{$\Gamma\vdash M:\beta$}
        \BinaryInfC{$\Gamma\vdash M:\alpha\wedge\beta$}
        \DisplayProof
      \end{center}
  \end{subfigure}
  \begin{subfigure}[t]{\textwidth}
    \caption{Intersection types for the $D$-tests extension in $D$ \label{fig:tyTests}}
    \vspace{-1em}
      \begin{center}
        \AxiomC{$\Gamma\vdash M:\alpha$}
        \UnaryInfC{$\Gamma\vdash \tau_\alpha(M)$}
        \DisplayProof\hskip 40pt
        \AxiomC{$\Gamma\vdash Q_j $}
        \UnaryInfC{$\Gamma\vdash \sum_{i\in I}\bar\tau_{\alpha_i}(Q_i):\alpha_j$}
        \DisplayProof\hskip 40pt
        \AxiomC{$\Gamma\vdash Q_j $}
        \UnaryInfC{$\Gamma\vdash \sum_{i\in I}Q_i$}
        \DisplayProof\hskip 40pt
        \AxiomC{$\forall i\in I,\ \Gamma\vdash Q_i$}
        \UnaryInfC{$\Gamma\vdash \prod_{i\in I}Q_i$}
        \DisplayProof
      \end{center} 
  \end{subfigure}
  \caption{Direct interpretation and intersection type system computing the interpretation in $D$ \label{fig:typing}}
\end{figure*}

\begin{definition}
  A term $M$ with $n$ free variables is {\em interpreted} as a morphism (Scott-continuous function) from $D^n$ to $D$ and a test $Q$ with $n$ free variables as a morphism from $D^n$ to the dualizing object $\{*\}$. Concretely, we use the Cartesian closeness to define $\llb M\rrb_D^{\vec x}$ as a downward-close sets of $(D^{op})^{\vec x}\times D$ and $\llb Q\rrb_D^{\vec x}$ as a downward-close subsets of $(D^{op})^{\vec x}$.

  This interpretation is given in Figures~\ref{fig:intLam} and~\ref{fig:intTests} by structural induction. 
\end{definition}

\begin{prop}\label{prop:testModel}
  Any DEFiM $D$ is a model for its own test extension (the \Lcalcul with $D$-tests), in the sens that the interpretation is contextual and invariant under reduction.
\end{prop}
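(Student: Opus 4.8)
The plan is to split the statement into its two halves---\emph{contextuality} (the interpretation is compositional, so that replacing a subterm by one with the same interpretation preserves the global interpretation) and \emph{soundness} (invariance under the reduction $\rta$)---and to derive both from a single \emph{substitution lemma}. Concretely I would first prove, by a simultaneous structural induction on terms $M$ and tests $Q$, that for every variable $x$ and every term $N$,
\begin{align*}
  (\vec\alpha,\beta)\in\llb M[N/x]\rrb_D^{\vec x} \quad&\Lra\quad \exists\gamma,\ (\vec\alpha,\gamma,\beta)\in\llb M\rrb_D^{\vec x\, x}\ \wedge\ (\vec\alpha,\gamma)\in\llb N\rrb_D^{\vec x},\\
  \vec\alpha\in\llb Q[N/x]\rrb_D^{\vec x} \quad&\Lra\quad \exists\gamma,\ (\vec\alpha,\gamma)\in\llb Q\rrb_D^{\vec x\, x}\ \wedge\ (\vec\alpha,\gamma)\in\llb N\rrb_D^{\vec x}.
\end{align*}
The base cases and the cases for $\lambda$, application, $\tau$, $\bar\tau$, sums and products are routine once one recalls that products are interpreted as intersections and sums as unions (Figure~\ref{fig:intTests}). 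Contextuality is then immediate: by induction on a context, using this lemma at the hole, $\llb M\rrb=\llb M'\rrb$ implies $\llb C\llc M\rrc\rrb=\llb C\llc M'\rrc\rrb$.

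For soundness it suffices, thanks to contextuality, to check that each rule of Figures~\ref{fig:ER} and~\ref{fig:DS} preserves the interpretation. The distributive rules $(\pt\+)$ and $(\bar\tau+)$ reduce to the set-theoretic distributivity of $\bigcap$ over $\bigcup$ and hold by inspection of Figure~\ref{fig:intTests}. The $(\beta)$-rule is exactly the substitution lemma applied to $\llb(\lambda x.M)\,N\rrb$ after unfolding the abstraction and application clauses. The $(\tau\bar\tau)$-rule is the most direct: unfolding $\llb\tau_\alpha(\sum_{i\in I}\bar\tau_{\beta_i}(Q_i))\rrb^{\vec x}=\{\vec\gamma\mid(\vec\gamma,\alpha)\in\llb\sum_i\bar\tau_{\beta_i}(Q_i)\rrb^{\vec x}\}$ and reading off the clause for sums of $\bar\tau$ yields precisely the condition ``$\exists I'\subseteq I,\ \vec\gamma\in\bigcap_{i\in I'}\llb Q_i\rrb\ \wedge\ \alpha\ge_D\bigwedge_{i\in I'}\beta_i$'', which is the interpretation of the right-hand side.

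The core of the argument, and the step I expect to be the main obstacle, is the pair of rules $(\bar\tau)$ and $(\tau)$, which encode semantically the behaviours of application-against-an-abstraction and of probing an abstraction. Both rely essentially on the extensionality witness $\ext{D}$ through the identity $\alpha=\bigwedge_{(\beta,\gamma)\in\ext{D}(\alpha)}\beta\cons\gamma$ and on the defining characterization of $\cons$, namely $\gamma\cons\delta\ge_D\bigwedge_i\alpha_i\cons\beta_i\Lra\delta\ge_D\bigwedge_{\{i\mid\gamma\le\alpha_i\}}\beta_i$. For $(\tau)$ one unfolds $\llb\tau_\alpha(\lambda x.M)\rrb^{\vec x}$, replaces $\alpha$ by its $\ext{D}$-decomposition, and matches each factor $\tau_\gamma(M[\bareps_\beta/x])$ using the computation $\llb\bareps_\beta\rrb^{\vec x}_D=\{(\vec\delta,\zeta)\mid\zeta\ge_D\beta\}$ together with the substitution lemma (feeding the canonical term $\bareps_\beta$ for $x$ amounts to instantiating the bound variable at type $\beta$); for $(\bar\tau)$ one performs the dual computation on $\llb(\sum_i\bar\tau_{\alpha_i}(Q_i))\,N\rrb$. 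The delicate points are the bookkeeping of the finite subsets $I'$ arising from the $\bar\tau$ and $\tau\bar\tau$ clauses and checking that the $\ext{D}$-decomposition commutes with the meets through the arrow characterization; everything else is bureaucratic. Finally, invariance under the full contextual closure $\rta$ follows by combining this rule-level soundness with contextuality, which completes the proof.
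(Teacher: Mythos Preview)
Your proposal is correct and follows essentially the same approach as the paper: the paper's proof is a two-line sketch that invokes Cartesian closedness for the $(\beta)$-rule (which amounts to your substitution lemma) and says ``the other rules are easy to check directly'', which is precisely the rule-by-rule verification you spell out. Your detailed treatment of $(\tau)$ and $(\bar\tau)$ via $\ext{D}$ and the arrow characterization is exactly what the paper elides under ``easy to check''.
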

\begin{proof}
The invariance under $\beta$-reduction  is obtained, as usual, by the Cartesian closedness of of the considered category of domains.\longv{\todo{introduce ScottL at some point?}} The other rules are easy to check directly.
\end{proof}

The idea of intersection types can be generalized to to tests as shown in Figure~\ref{fig:tyTests}. Notice that tests have no type: a test does not carry any behavior, and under a specific environment it can only be succeeding (and typable) or diverging (untypable). 

\begin{theorem}[Intersection types]
  Let D be a DEFiM and $M$ a term of $\Lam{D}$  (resp. $Q$ a test of $\Test{D}$), the following statements are equivalent:
  \begin{itemize}
  \item $(\vec \alpha,\beta)\in\llb M\rrb_D^{\vec x}$ (resp. $\vec \alpha\in \llb Q\rrb_D^{\vec x}$) in the interpretation of Figures~\ref{fig:intLam} and~\ref{fig:intTests},
  \item the type judgment $\vec x:\vec \alpha\vdash M:\beta$ (resp. $\vec x:\vec \alpha\vdash Q$) is derivable by the rules of Figures~\ref{fig:tyLam} and~\ref{fig:tyTests}.
  \end{itemize}
\end{theorem}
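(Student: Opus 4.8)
The plan is to prove both implications simultaneously by a \emph{mutual} structural induction on the grammar of $\Lam{D}$ and $\Test{D}$, since terms and tests are mutually recursive. This is the exact analogue of Proposition~\ref{prop:eqIT} for the pure \Lcalcul, so the three $\lambda$-constructors ($x$, $\lambda x.M$, $M\ N$) are treated exactly as there --- the only change being that the induction hypothesis now applies to subterms that may themselves contain test operators. For each syntactic form I would establish a \emph{generation} (inversion) characterization stating precisely which judgments $\vec x{:}\vec\alpha\vdash M:\beta$ (resp.\ $\vec x{:}\vec\alpha\vdash Q$) are derivable, and then check that this set coincides, clause by clause, with the membership condition of the interpretation in Figures~\ref{fig:intLam} and~\ref{fig:intTests}.

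Before the case analysis I would record the structural/admissibility lemmas that let the induction stay syntax-directed. The type system of Figures~\ref{fig:tyLam} and~\ref{fig:tyTests} contains non-syntax-directed rules: weakening, subsumption along $\ge_D$, and intersection. The key observation is that these rules exactly realize the closure properties built into the interpretation: for a fixed $M$ the set $\{(\vec\alpha,\beta)\mid \vec x{:}\vec\alpha\vdash M:\beta\}$ is antitone (downward closed) in the environment $\vec\alpha$, and upward closed and $\wedge$-closed in the output $\beta$ --- i.e.\ it has precisely the shape of a point of $\llb M\rrb_D^{\vec x}\subseteq (D^{op})^{\vec x}\times D$. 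Proving these closure properties (by an easy induction that also handles environment antitonicity, reducing it to the variable axiom plus subsumption) allows me to push every structural rule down to the leaves, so that in the inductive step I may assume the last rule is the syntax-directed one for the head constructor.

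With these lemmas the new cases are mostly immediate. For $\tau_\alpha(M)$ the rule gives $\vec x{:}\vec\beta\vdash\tau_\alpha(M)$ iff $\vec x{:}\vec\beta\vdash M:\alpha$, which by the induction hypothesis is $(\vec\beta,\alpha)\in\llb M\rrb_D^{\vec x}$, matching $\llb\tau_\alpha(M)\rrb_D^{\vec x}$ verbatim. The product $\prod_i Q_i$ corresponds to the intersection $\bigcap_i\llb Q_i\rrb_D^{\vec x}$ (direct from the $n$-ary product rule and the induction hypothesis), the empty product $\eps$ to $D^{\vec x}$, the test-sum $\sum_i Q_i$ to the union $\bigcup_i\llb Q_i\rrb_D^{\vec x}$ via the single-premise sum rule, and the empty sum $\0$ to $\emptyset$. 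The delicate case is the \emph{term} $\sum_{i\in J}\bar\tau_{\alpha_i}(Q_i)$, whose interpretation requires choosing a subset $I\subseteq J$, taking $\bigcap_{i\in I}\llb Q_i\rrb_D^{\vec x}$ for the environment and then \emph{any} output $\gamma\ge_D\bigwedge_{i\in I}\alpha_i$. The typing rule, however, produces a single $\alpha_j$ from a single premise $\Gamma\vdash Q_j$; reconstructing the general point means first collecting the premises for $i\in I$, combining the resulting outputs $\alpha_i$ by the intersection rule into $\bigwedge_{i\in I}\alpha_i$, and finally moving up to $\gamma$ by subsumption --- exactly the three structural rules isolated in the previous paragraph.

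I expect the main obstacle to be precisely this reconciliation of the non-syntax-directed structural rules with a structural induction, concentrated in the $\bar\tau$-term case and in the treatment of $\omega$. Here one must check that the convention $\bigwedge_{\emptyset}=\omega$ (taking $I=\emptyset$) is matched on the type side by the admissibility of the top judgment $\vec x{:}\vec\alpha\vdash M:\omega$ for every $M$, so that $(\vec\alpha,\omega)$ lies in every interpretation and the upward closure in the output never fails at the top element. Once the generation lemma is shown to commute with $\wedge$ and $\ge_D$ in this way, the two directions close together and the equivalence follows.
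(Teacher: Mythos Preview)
Your proposal is correct and follows exactly the paper's approach: the paper's entire proof is the single line ``By structural induction on the grammar of $\Lam{D}$,'' and you have simply spelled out the details of that induction, correctly identifying the closure under weakening, subsumption and intersection needed to handle the non-syntax-directed rules and the $\bar\tau$-sum case.
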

\begin{proof}
By structural induction on the grammar of $\Lam{D}$.
\end{proof}


\longv{\todo{ref to a forthcoming article?}}

Notice that the interpretation allows the following trivial lemma:
\begin{lemma}\label{lemma:test}
  If $D$ is sensible for $\Lamb_{\tau(D)}$ then:
  \begin{align*}
    (\vec \alpha\beta,\gamma)\in\llb M\rrb^{\vec x y} \quad &\Lra\quad (\vec \alpha,\gamma)\in\llb M[\bareps_\beta/y]\rrb^{\vec x}, &
    (\vec \alpha, \gamma)\in\llb M\rrb^{\vec x} \quad &\Lra\quad \vec \alpha\in\llb \tau_\gamma(M)\rrb^{\vec x}.
  \end{align*}
\end{lemma}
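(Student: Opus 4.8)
The two equivalences are purely denotational, so the plan is to read them off the interpretation of Figures~\ref{fig:intLam} and~\ref{fig:intTests} together with the compositionality granted by Proposition~\ref{prop:testModel}. The second equivalence is immediate: by definition $\llb\tau_\gamma(M)\rrb^{\vec x}=\{\vec\alpha\mid(\vec\alpha,\gamma)\in\llb M\rrb^{\vec x}\}$, so $\vec\alpha\in\llb\tau_\gamma(M)\rrb^{\vec x}$ unfolds to exactly $(\vec\alpha,\gamma)\in\llb M\rrb^{\vec x}$, and nothing more is needed.

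For the first equivalence I would begin by isolating a semantic substitution lemma, valid for every $N\in\Lam D$ and obtained from the invariance of the interpretation (Proposition~\ref{prop:testModel}) exactly as the usual substitution lemma for filter models:
\[
(\vec\alpha,\gamma)\in\llb M[N/y]\rrb^{\vec x}\quad\Lra\quad \exists\delta\in D,\ (\vec\alpha\delta,\gamma)\in\llb M\rrb^{\vec x y}\ \text{ and }\ (\vec\alpha,\delta)\in\llb N\rrb^{\vec x}.
\]
I would then instantiate $N:=\bareps_\beta=\bar\tau_\beta(\eps)$ and compute its interpretation from Figure~\ref{fig:intTests}: since $\llb\eps\rrb^{\vec x}=D^{\vec x}$, the two choices of index set collapse (using $\omega\ge_D\beta$) to $\llb\bareps_\beta\rrb^{\vec x}=\{(\vec\delta,\delta')\mid\delta'\ge_D\beta\}=\da\beta$. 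Hence $(\vec\alpha,\delta)\in\llb\bareps_\beta\rrb^{\vec x}$ holds iff $\delta\ge_D\beta$, and the substitution lemma rewrites the right-hand side of the statement as $\exists\,\delta\ge_D\beta,\ (\vec\alpha\delta,\gamma)\in\llb M\rrb^{\vec xy}$.

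It then remains to collapse this existential to the single witness $\delta=\beta$, which is the only real step. The left-to-right direction of the statement is trivial, taking $\delta=\beta$. For the converse I would invoke monotonicity of the interpretation in the $y$-slot: any $\delta\ge_D\beta$ is a weaker (less informative) type than $\beta$, so feeding the stronger input $\beta$ can only enlarge the set of outputs, and $(\vec\alpha\delta,\gamma)\in\llb M\rrb^{\vec xy}$ with $\delta\ge_D\beta$ yields $(\vec\alpha\beta,\gamma)\in\llb M\rrb^{\vec xy}$. Through Proposition~\ref{prop:eqIT} this is nothing but the admissible strengthening rule of the intersection type system: in a derivation of $\vec x:\vec\alpha,\,y:\delta\vdash M:\gamma$ one replaces $y:\delta$ by $y:\beta$ uniformly, each axiom leaf $y:\sigma$ staying valid because $\sigma\ge_D\delta\ge_D\beta$. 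The main obstacle is purely the variance bookkeeping — working in $(D^{\mathrm{op}})^{\vec x}$ on the inputs so that monotonicity is applied in the correct direction; once that is pinned down both equivalences follow.

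Finally, I would remark that the sensibility hypothesis does not in fact enter these denotational identities, which hold by compositionality alone for any DEFiM; it is stated only because the lemma is later used together with the operational reading (head convergence), where $\bareps_\beta$ must be read as a genuine canonical inhabitant of type $\beta$. This is exactly why the author calls it a trivial lemma.
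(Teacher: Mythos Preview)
Your proof is correct and matches the paper's intent: the paper gives no proof at all, merely flagging the lemma as ``trivial'' and read directly off the interpretation in Figures~\ref{fig:intLam}--\ref{fig:intTests}, which is exactly what you do (and you are right that the sensibility hypothesis plays no role in these purely denotational identities).
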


\subsubsection{Full abstraction and sensibility for tests}\label{subsubsec:FAwT}
The main interest of the full abstraction with tests is to be fully abstract as soon as it is sensible (Theorem~\ref{th:FAwT}). The sensibility is a very commune property saying that diverging terms are collapsed together and separated from non-diverging terms. In other worlds, such a model is able to give meaning to terminating terms and those only. The full abstraction, however, is a much stronger property stating that the equality in the model corresponds exactly to the observational equality (for the head-convergence). Collapsing those two properties gives the real meaning of tests: they are syntactical representation of ``reasonable'' domains. Where ``reasonable'' means extensional and (as we will see later on) approximable domains.

\begin{definition}\label{def:sensibilityWTests}
  A DEFiM $D$ is {\em sensible for \Lam{D}} whenever diverging terms (resp. tests) correspond exactly to the terms (resp. tests) having empty interpretation, {\em i.e.}, for all $M\in\Lam{D}$ and $Q\in \Test{D}$:
  \begin{align*}
    M\Ua &\quad \Lra \quad \llb M\rrb_D^{\vec x}=\{(\vec\alpha,\omega) \mid \forall \vec \alpha\}
    & Q\Ua &\quad \Lra \quad \llb Q\rrb_D^{\vec x}=\emptyset
  \end{align*}
\end{definition}

The following is an immediate theorems (the second is an application of the first):

\begin{theorem}[Definability]\label{th:caracFAwT}
  If $D$ is sensible for $\Lamb_{\tau(D)}$ then:
  $$ (\vec \alpha,\beta)\in\llb M\rrb^{\vec x}\ \ \Lra\ \ \tau_\beta(M[(\bareps_{\alpha_i}/x_i)_{i\le n}]) \Da.$$
\end{theorem}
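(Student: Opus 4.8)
The plan is to read the statement off as an immediate consequence of Lemma~\ref{lemma:test} together with the standing hypothesis that $D$ is sensible for $\Lam D$ (Definition~\ref{def:sensibilityWTests}); the text already advertises it as immediate. The guiding idea is to transport the semantic membership $(\vec\alpha,\beta)\in\llb M\rrb^{\vec x}$ along the two equivalences of Lemma~\ref{lemma:test} until it becomes a statement about the (non)emptiness of the interpretation of the \emph{closed} test $\tau_\beta(M[(\bareps_{\alpha_i}/x_i)_i])$, and only then to convert emptiness into divergence using sensibility. Since every link in this chain is a biconditional, both directions of the theorem are obtained at once.

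First I would decompose the simultaneous substitution $M[(\bareps_{\alpha_i}/x_i)_{i\le n}]$ into an iteration of the single-variable substitutions governed by the first equivalence of Lemma~\ref{lemma:test}. Applying that equivalence once peels off the last variable: $(\vec\alpha,\beta)\in\llb M\rrb^{\vec x}$ iff $(\alpha_1,\dots,\alpha_{n-1},\beta)\in\llb M[\bareps_{\alpha_n}/x_n]\rrb^{x_1\dots x_{n-1}}$. Iterating $n$ times (each time reordering the environment so the variable being substituted sits last, and using that every $\bareps_{\alpha_i}$ is closed, so no new free variables appear) collapses the environment entirely and yields $\beta\in\llb M[(\bareps_{\alpha_i}/x_i)_i]\rrb$ over the empty environment. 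The second equivalence of Lemma~\ref{lemma:test}, instantiated at the empty environment, then turns this into the assertion that $\llb \tau_\beta(M[(\bareps_{\alpha_i}/x_i)_i])\rrb\neq\emptyset$. Finally, Definition~\ref{def:sensibilityWTests} applied to the test $Q=\tau_\beta(M[(\bareps_{\alpha_i}/x_i)_i])$ gives $Q\Ua$ iff $\llb Q\rrb=\emptyset$; contraposing yields $Q\Da$ iff $\llb Q\rrb\neq\emptyset$, which is exactly the assertion just reached. Composing the three biconditionals delivers the theorem.

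The only genuine care lies in the bookkeeping of the first step and in the degenerate cases, and neither is a real obstacle. For the iteration I must check that the first equivalence of Lemma~\ref{lemma:test}, stated for the rightmost variable, applies to an arbitrary $x_i$; this holds because the interpretation treats the environment components symmetrically up to permutation, so I may always rename to place the target variable last. The degenerate cases concern $\omega$: the grammar of $\Lam D$ excludes $\omega$ as a genuine index, while the conventions $\bareps_\omega:=\0$ and $\tau_\omega(M):=\eps$ (Figure~\ref{fig:OS}) let me treat the remaining indices uniformly. In particular, for $\beta=\omega$ both sides hold trivially, since $(\vec\alpha,\omega)\in\llb M\rrb^{\vec x}$ always holds (take the empty intersection in Figure~\ref{fig:intLam}) and $\tau_\omega(M)=\eps$ converges (as $\llb\eps\rrb^{\vec x}=D^{\vec x}\neq\emptyset$, whence $\eps\not\Ua$). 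Thus the substance of the proof is just making the iterated substitution and the environment reordering precise; everything else is a direct appeal to Lemma~\ref{lemma:test} and Definition~\ref{def:sensibilityWTests}.
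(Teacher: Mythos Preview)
Your proposal is correct and follows essentially the same route as the paper's own proof: both use Lemma~\ref{lemma:test} to transport $(\vec\alpha,\beta)\in\llb M\rrb^{\vec x}$ to the statement that the closed test $\tau_\beta(M[(\bareps_{\alpha_i}/x_i)_i])$ has nonempty interpretation, and then invoke sensibility (Definition~\ref{def:sensibilityWTests}) to convert nonemptiness into convergence. Your version is simply more explicit about iterating the first equivalence of Lemma~\ref{lemma:test} and about the degenerate case $\beta=\omega$, whereas the paper treats both directions separately and notes that a closed test has interpretation either $\emptyset$ or $\{()\}$; the substance is identical.
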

\begin{proof}
  If $(\vec \alpha,\beta)\in\llb M\rrb^{\vec x}$ then $\llb\tau_\beta(M[(\bareps_{\alpha_i}/x_i)_{i\le n}])\rrb$ is not empty by Lemma~\ref{lemma:test}, thus it converges by sensibility. Conversely, if $\tau_\beta(M[(\bareps_{\alpha_i}/x_i)_{i\le n}])\Da$, since it has no variable, its interpretation is either empty or $\{()\}$, it has to be the second by sensibility, which means $(\vec \alpha,\beta)\in\llb M\rrb^{\vec x}$ (by Lemma~\ref{lemma:test}).
\end{proof}

\begin{theorem}[full abstraction]\label{th:FAwT}
  For any DEFiM $D$, if $D$ is sensible for $\Lam{D}$, then $D$ is inequationaly fully abstract for the observational preorder of \Lam D: \newdefinvinv{inequational full abstraction}{of $D$ for \Lam{D}}
  $$ \llb M\rrb\subseteq \llb N\rrb\quad \Lra\quad \forall C \in \Testc{D}, C\llc M\rrc\Da\Rta C\llc N\rrc\Da.$$
\end{theorem}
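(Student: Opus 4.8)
The plan is to prove the two implications separately; the forward direction (soundness of the denotation with respect to observation) is straightforward, while the converse (completeness) carries all the content through the Definability Theorem~\ref{th:caracFAwT}. For soundness, assume $\llb M\rrb\subseteq\llb N\rrb$ and fix $C\in\Testc{D}$ with $C\llc M\rrc\Da$. The interpretation of Figures~\ref{fig:intLam} and~\ref{fig:intTests} is built from operations that are monotone on downward-closed sets (union, intersection, and the arrow, each monotone in the interpretations of their immediate subterms), so it is compositional and monotone in the hole; an immediate induction on the structure of $C$ gives $\llb C\llc M\rrc\rrb\subseteq\llb C\llc N\rrc\rrb$. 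Since $C\llc M\rrc$ is a test and $D$ is sensible for $\Lam{D}$ (Def.~\ref{def:sensibilityWTests}), the convergence $C\llc M\rrc\Da$ yields $\llb C\llc M\rrc\rrb\neq\emptyset$; monotonicity gives $\llb C\llc N\rrc\rrb\neq\emptyset$, and sensibility again yields $C\llc N\rrc\Da$.

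For completeness, assume $C\llc M\rrc\Da\Rightarrow C\llc N\rrc\Da$ for every $C\in\Testc{D}$, and let $(\vec\alpha,\beta)\in\llb M\rrb^{\vec x}$ with $\vec x=x_1\dots x_n$; we must derive $(\vec\alpha,\beta)\in\llb N\rrb^{\vec x}$. The case $\beta=\omega$ is trivial, as $(\vec\alpha,\omega)$ lies in every interpretation, so assume $\beta\neq\omega$. First I note that convergence is invariant under reduction: if $A\rta B$ then $\llb A\rrb=\llb B\rrb$ by Proposition~\ref{prop:testModel}, whence $A\Da$ iff $B\Da$ by sensibility. Now consider the test-context
\[
  C\ :=\ \tau_\beta\big((\lambda x_1\dots x_n.\llc\cdot\rrc)\ \bareps_{\alpha_1}\cdots\bareps_{\alpha_n}\big)\ \in\ \Testc{D},
\]
so that $C\llc M\rrc$ reduces by $n$ $\beta$-steps to $\tau_\beta(M[(\bareps_{\alpha_i}/x_i)_{i\le n}])$, using the conventions $\bareps_\omega:=\0$ for any $\alpha_i=\omega$. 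By the Definability Theorem~\ref{th:caracFAwT}, the membership $(\vec\alpha,\beta)\in\llb M\rrb^{\vec x}$ is equivalent to $\tau_\beta(M[(\bareps_{\alpha_i}/x_i)])\Da$, hence to $C\llc M\rrc\Da$ by reduction-invariance. The hypothesis then gives $C\llc N\rrc\Da$, and running the same $\beta$-reductions together with Theorem~\ref{th:caracFAwT} applied to $N$ yields $(\vec\alpha,\beta)\in\llb N\rrb^{\vec x}$, as required.

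The routine parts are the monotonicity induction and the reduction bookkeeping; the genuine obstacle has already been discharged upstream, namely the Definability Theorem~\ref{th:caracFAwT}, whose converse direction (reconstructing a semantic point from the convergence of a single syntactic test) is where sensibility is really used. The only delicate point internal to this proof is to check that the chosen separating context genuinely lies in $\Testc{D}$ and that the edge cases $\beta=\omega$ and $\alpha_i=\omega$ are absorbed by the conventions $\tau_\omega(M):=\eps$ and $\bareps_\omega:=\0$; both are immediate. Thus inequational full abstraction follows at once from sensibility via definability.
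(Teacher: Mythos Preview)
Your proof is correct and follows essentially the same route as the paper's own argument: both directions are handled identically, the forward one via compositionality/monotonicity of the interpretation together with sensibility, and the converse via the separating context $C=\tau_\beta((\lambda x_1\dots x_n.\llc\cdot\rrc)\,\bareps_{\alpha_1}\cdots\bareps_{\alpha_n})$ combined with Theorem~\ref{th:caracFAwT} in each direction. Your additional bookkeeping on the edge cases $\beta=\omega$ and $\alpha_i=\omega$ and on reduction-invariance of convergence is sound and only makes explicit what the paper leaves implicit.
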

\longv{
  \begin{proof}
    Let $\llb M\rrb\subseteq \llb N\rrb$ and $C\llc M\rrc\Da$. Then by sensibility we have that $\llb C\llc M\rrc\rrb$ is non-empty. Moreover, by Proposition~\ref{prop:testModel} we have that $\llb C\llc M\rrc\rrb\subseteq \llb C\llc N\rrc\rrb$. Thus $\llb C\llc N\rrc\rrb$ is non-empty and by sensibility, $C\llc N\rrc\Da$.\\
    Conversely, suppose that for all context $C \in \Testc{D}, C\llc M\rrc\Da\Rta C\llc N\rrc\Da$ and let $(\vec \alpha,\beta)\in\llb M\rrb^{\vec x}$:\\
    Then by Theorem~\ref{th:caracFAwT}, $\tau_\beta(M[(\bareps_{\alpha_i}/x_i)_{i\le n}]) \Da$ where $n$ is the length of $\vec \alpha$. Thus, after stating the context $C=\tau_\beta((\lambda x_1...x_n.\llc.\rrc)\ \bareps_{\alpha_1}\cdots\bareps_{\alpha_n})$, we have  $C\llc M\rrc\rta_h^n\tau_\beta(M[(\bareps_{\alpha_i}/x_i)_{i\le n}])\Da$ which implies that $C\llc N\rrc\Da$. However, there is no choice 
    for the $n$ first head reductions of $C\llc N\rrc$, those are forced to be $C\llc N\rrc\rta^n_h\tau_\beta(N[(\bareps_{\alpha_i}/x_i)_{i\le n}])$ so that this term is also head-converging. Then by applying the reverse implication of Theorem~\ref{th:caracFAwT} we conclude $(\vec \alpha,\beta)\in \llb N\rrb^{\vec x}$.
  \end{proof}
}


\shortv{\newpage}
\section{Collapsing Sensibility and Approximability for Tests}
  \label{ssec:BTandTests}
%

Once we have said that sensibility and full abstraction are equivalent properties for test, it should not surprise the reader to learn that approximability is also equivalent to those properties. Indeed, approximability usually corresponds to the adequation of the B\"ohm-tree's equality, which is a property between sensibility and full abstraction. However, the situation is a bit mere subtle: if the properties of sensibility and full abstraction for $\Lam{D}$ strongly refer to tests mechanisms, the property of approximability is defined independently from tests. This really means that $D$-tests will behave well exactly whenever $D$ is approximable.

First we extend the languages of approximants with tests (or rather the language of tests with approximants):

\begin{theorem}
  The properties of $\Lam D$ (such as confluence, standardization, or Theorems~\ref{th:caracFAwT} and~\ref{th:FAwT}) are still true when adding to the calculus with $D$-test the term $\Omega$ and the rules:
  $$
  \lambda x.\Omega\ \to\  \Omega  \qquad\qquad\qquad  \Omega\ M\  \to\  \Omega   \qquad\qquad\qquad \tau_\alpha(\Omega)\ \to\  \0.
  $$  
\end{theorem}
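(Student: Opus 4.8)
The plan is to treat the three new rules as a single auxiliary rewriting relation $\rta_\Omega$ and to show it is a benign, \emph{modular} extension of $\Lam D$: it is strongly normalizing, it preserves the interpretation, and it commutes with the base reduction. All the listed properties then follow by gluing the two relations together, rather than by re-proving each metatheorem from scratch.

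First I would establish \emph{semantic soundness} of the three rules, which is the only point on which Theorems~\ref{th:caracFAwT} and~\ref{th:FAwT} actually rely. Using the interpretations of Figures~\ref{fig:intLam} and~\ref{fig:intTests} together with $\llb\Omega\rrb^{\vec x}=\{(\vec\alpha,\omega)\mid\forall\vec\alpha\}$, one checks directly that $\llb\lambda x.\Omega\rrb^{\vec x}=\llb\Omega\rrb^{\vec x}$ (since $\beta\cons\omega=\omega$), that $\llb\Omega\,M\rrb^{\vec x}=\llb\Omega\rrb^{\vec x}$ (using the filter-model axiom $\gamma\cons\delta=\omega$ iff $\delta=\omega$, which forces every relevant component to be $\omega$), and that $\llb\tau_\alpha(\Omega)\rrb^{\vec x}=\emptyset=\llb\0\rrb^{\vec x}$ because $\alpha\neq\omega$. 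This extends Proposition~\ref{prop:testModel} to the enlarged calculus. Since the proofs of Theorems~\ref{th:caracFAwT} and~\ref{th:FAwT} invoke only sensibility, Lemma~\ref{lemma:test}, and the invariance of the interpretation under reduction, they then carry over verbatim.

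For confluence I would argue modularly. The relation $\rta_\Omega$ strictly decreases the number of symbols, hence is strongly normalizing; moreover its three left-hand sides are non-overlapping, so a short critical-pair check gives local confluence and thus confluence by Newman's lemma. Next I would prove that $\rta_\Omega$ commutes with the base reduction $\rta$ of $\Lam D$: any $\beta$-, $\bar\tau$-, $\tau$-, $\tau\bar\tau$- or distributive step performed below an $\Omega$-redex is either preserved or harmlessly erased, yielding the postponement diamond $\rta_\Omega\!\cdot\!\rta\ \subseteq\ \rta^*\!\cdot\!\rta_\Omega^*$. Hindley--Rosen then gives confluence of $\rta\cup\rta_\Omega$, the base reduction being already confluent. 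For standardization I would extend the standard order so that $\Omega$-steps come last; the postponement just established shows that any $M\rta^* N$ can be reorganized as a standard $\Lam D$-reduction followed by a block of $\Omega$-steps, which is the $\Omega$-analogue of the beta-first reorganization of Theorem~\ref{th:BetaFirst}, and combining the two postponements delivers standardization for the full system.

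The hard part will be the commutation/postponement step, and specifically the rule $\tau_\alpha(\Omega)\to\0$: unlike the two purely erasing $\Omega$-rules, it produces $\0$, which then interacts with the sum--product distribution of Figure~\ref{fig:DS} and with the head-contextual rules of Figure~\ref{fig:CR}. I would have to verify that creating a $\0$ by an $\Omega$-step never blocks a base redex available beforehand nor manufactures a spurious success, so that the commuting diagram still closes; once this interaction with the $(\pt\+)$ and $(\bar\tau+)$ rules is checked, the remainder is routine bookkeeping over the grammar of $\Lam D$.
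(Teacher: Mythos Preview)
Your modular approach via strong normalization of $\rta_\Omega$, Newman, commutation, and Hindley--Rosen is sound and would go through with the bookkeeping you outline; the semantic-soundness paragraph is exactly right and is indeed all that is needed for Theorems~\ref{th:caracFAwT} and~\ref{th:FAwT}. However, you are working much harder than the paper does, and you miss its one-line observation.

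The paper's argument is that $\Omega$ is \emph{already} present in $\Lam D$ as the empty term-sum $\0=\sum_{i\in\emptyset}\bar\tau_{\alpha_i}(Q_i)$. Indeed, instantiating the $(\bar\tau)$ rule with an empty index set gives $\0\ N\rta\0$, and instantiating $(\tau\bar\tau)$ with $I=\emptyset$ gives $\tau_\alpha(\0)\rta\0$ (since $\alpha\neq\omega$ so no $I'\subseteq\emptyset$ satisfies $\alpha\ge\bigwedge_{i\in I'}\beta_i=\omega$). Thus two of your three ``new'' rules are special cases of existing ones, and there is nothing to re-prove for them. The only genuinely new rule is $\lambda x.\Omega\rta\Omega$, which is an $\eta$-reduction and is harmless precisely because $D$ is extensional. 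So the paper's proof is: identify $\Omega$ with $\0$, note that the one remaining rule is an $\eta$-step validated by extensionality, and you are done.

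Your route buys generality---it would work even if $\Omega$ were a truly foreign constant---at the cost of the commutation/postponement analysis you correctly flag as delicate (the interaction of the created $\0$ with $(\pt\+)$ and $(\bar\tau+)$). The paper's route buys brevity by recognising that $\Omega$ is not foreign at all.
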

\begin{proof}
  The term $\Omega$ behave similarly to the empty sum of terms $\0$. The only difference is the rule $\lambda x.\Omega \to \Omega$ which is an $\eta$-reduction and is fine due to $D$ being extensional.
\end{proof}

We can now use the approximants of Definition~\ref{def:omegastuff} together with tests:

\begin{lemma}\label{lemma:app=sens}
  For any DEFiM $D$, any sequence $\vec\alpha\in D^{\vec x}$, any $\beta\in D{-}\{\omega\}$ and any $M\in\Lambda$ (with free variables $\vec x$), the following are equivalent:
  \begin{itemize}
  \item the test $\tau_\beta(M[\overline{\bareps_\alpha}/\vec x])$ is may-head converging without $\beta$-reduction,
  \item the test with approximants $\tau_\beta(\dapprox{M}[\overline{\bareps_\alpha}/\vec x])$ is may-head converging,
  \item $(\vec\alpha,\beta)\in \llb \dapprox{M}\rrb^{\vec x}$.
  \end{itemize}
\end{lemma}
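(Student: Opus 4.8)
The plan is to prove the three equivalences by establishing a cycle, relying on the fact that $\tau$-tests applied to a term with no free variables can only converge or diverge, and that the direct approximant $\dapprox{M}$ isolates precisely the head-normal structure of $M$. Let me fix notation: write $\sigma = [\overline{\bareps_\alpha}/\vec x]$ for the substitution of the canonical terms $\bareps_{\alpha_i}$ for the free variables.

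First I would prove the equivalence between the third bullet and the second, which is the semantic heart of the statement. For the direction from the third to the second, assume $(\vec\alpha,\beta)\in\llb\dapprox{M}\rrb^{\vec x}$. Since $\dapprox{M}$ is a $\beta\Omega$-normal form (hence contains no $\beta$-redex), Lemma~\ref{lemma:test} gives $\beta\in\llb\dapprox{M}\sigma\rrb$, so the test $\tau_\beta(\dapprox{M}\sigma)$ has nonempty interpretation $\{()\}$; by the extended sensibility of the tests calculus with $\Omega$ (the preceding theorem), it may-head converges. The converse direction uses the same sensibility theorem in the opposite direction together with the definability result of Theorem~\ref{th:caracFAwT}, both of which hold for the $\Omega$-extended calculus. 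The only genuine content here is that $\dapprox{M}$ is already in $\beta\Omega$-normal form, so no $\beta$-reduction is needed and Lemma~\ref{lemma:test} applies directly.

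Next I would connect the first and second bullets, which is the syntactic core and where I expect the main obstacle to lie. The key is that $\dapprox{M}$ replaces every non-head-normal subterm (every maximal redex-headed subterm $\lambda\vec x.(\lambda y.M')N\vec M$) by $\Omega$, and $\tau_\alpha(\Omega)\to\0$ collapses exactly the tests probing those subterms. I would argue by induction on the head-reduction structure, matching the reductions $\tau$, $\bar\tau$, and $\tau\bar\tau$ that fire when testing $M\sigma$ against those that fire when testing $\dapprox{M}\sigma$: wherever $M$ has a head variable $x_i$ filled by $\bareps_{\alpha_i}$, both tests unfold identically through the $\bar\tau$ and $\tau\bar\tau$ rules; wherever $M$ has a $\beta$-redex in head position, $\dapprox{M}$ has $\Omega$ and the test dies via $\tau_\alpha(\Omega)\to\0$, which corresponds precisely to the hypothesis that testing $M\sigma$ succeeds \emph{without performing a $\beta$-reduction}. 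The delicate point is the bookkeeping of the sum/product distribution (the branching of the B\"ohm tree across the $(\tau\bar\tau)$ and $(\pt\+)$ rules): I must check that a may-head-converging summand in the test of $M\sigma$ corresponds to a converging summand in the test of $\dapprox{M}\sigma$ and vice versa, exploiting that the substitution $\sigma$ plugs in canonical terms whose only reductions are via $\bar\tau$.

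The hardest part will be making the ``without $\beta$-reduction'' hypothesis interact cleanly with the recursive descent into the applicative arguments $M_i$ of a head-normal form $\lambda\vec x.x_i M_1\cdots M_k$: after the $(\tau)$ and $(\bar\tau)$ rules fire, the arguments are tested under fresh $\tau_\beta$ operators, and I must invoke the induction hypothesis on each $M_i$ with the appropriate point $\beta$ extracted from $\ext{D}(\alpha_i)$, while verifying that no $\beta$-reduction sneaks in through these subtests. I expect this to go through because the canonical substitution $\sigma$ produces only $\bar\tau$-headed terms and the contextual head-reduction rules of Figure~\ref{fig:CR} never create a $\beta$-redex from a test; the verification is routine but must be stated carefully to keep the induction hypothesis strong enough.
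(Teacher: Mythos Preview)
Your proposal is correct and takes essentially the same approach as the paper: the equivalence between the second and third bullets is dispatched by the definability result (Theorem~\ref{th:caracFAwT}) with $\Omega$ read as $\0$, and the equivalence between the first two is a structural induction with exactly the four cases you outline (variable, abstraction, head $\beta$-redex giving $\Omega$, head variable applied to arguments). The only cosmetic difference is that the paper organises the induction directly on the shape of $\dapprox{M}$ rather than on the head-reduction structure of $M$, which makes the bookkeeping you worry about in your last paragraph disappear for free.
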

\longv{\todo{should be clearer...}}
\begin{proof}
  Considering that $\Omega$ is a notation for $\0$, the second and third points are equivalent by Theorem~\ref{th:caracFAwT}. The equivalence between the two first points is obtained by induction on $\dapprox M$:
  \begin{itemize}
  \item Immediate when $\dapprox M = M = x_i$.
  \item When $\dapprox M = \lambda y.\dapprox N$ for $M = \lambda y.N$, we can use the induction hypothesis on $N$.
  \item When $\dapprox M= \Omega$, this means that $\tau_\beta(M[\overline{\bareps_\alpha}/\vec x])\rta^* \tau_\beta'((\lambda y.M')\ M_1\cdots M_n)$ cannot converges without performing a $\beta$-reduction.
  \item Otherwise, $\dapprox M= x_i \dapprox {N_1}\cdots\dapprox{N_n}$ with $M = x_i N_1\cdots N_n$ thus the terms $\tau_\beta(M[\overline{\bareps_\alpha}/\vec x])$ and $\tau_\beta(\dapprox{M}[\overline{\bareps_\alpha}/\vec x])$
    can perform the same sequence of $\bar\tau$-reductions followed by a $\tau\bar\tau$-reduction which results in a sum and product combination of tests behaving the same way by induction hypothesis.\vspace{-1em}
  \end{itemize}
\end{proof}

This clearly shows that taking the approximants is an operation that distribute with the semantics. This is sufficient to get the approximation theorem whenever the extension with tests is sensible.

\begin{theorem}\label{th:app=sens}
  Any extensional filter model $D$, is approximable if and only if it is sensible for $D$-tests.
\end{theorem}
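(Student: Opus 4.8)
The plan is to prove a single bridging identity relating test-convergence to B\"ohm approximants and then read both implications off it. Concretely, I would first show that for every plain term $M\in\Lambda$, every $\vec\alpha$ and every $\beta\in D-\{\omega\}$,
$$ \tau_\beta(M[\overline{\bareps_\alpha}/\vec x])\Da \quad\Llra\quad (\vec\alpha,\beta)\in\bigcup_{N\in\B{M}}\llb N\rrb^{\vec x}. $$
From left to right: by the postponement/standardization result (Theorem~\ref{th:BetaFirst}) a converging test head-$\beta$-reduces to some $\tau_\beta(M'[\overline{\bareps_\alpha}/\vec x])$ with $M\rta_h^* M'$ that converges \emph{without} any $\beta$-reduction; Lemma~\ref{lemma:app=sens} converts this into $(\vec\alpha,\beta)\in\llb\dapprox{M'}\rrb^{\vec x}$, and $\dapprox{M'}\in\B{M}$ by Definition~\ref{def:omegastuff}. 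From right to left: a witness $N=\dapprox{M'}$ with $M\rta_h^*M'$ feeds Lemma~\ref{lemma:app=sens} backwards, and the head reductions $M\rta_h^*M'$ lift to the context $\tau_\beta(\cdot)$, producing the converging test. I would also dispatch the degenerate case $\beta=\omega$ separately: there $(\vec\alpha,\omega)$ lies in every interpretation and in $\llb\Omega\rrb\subseteq\bigcup_{N\in\B{M}}\llb N\rrb$, so it never discriminates, and the soundness inclusion $\llb N\rrb\subseteq\llb M\rrb$ for each approximant makes the $\supseteq$ half of approximability free.

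For the implication \textbf{sensible for $\Lam{D}$ $\Rightarrow$ approximable}, I would simply compose the bridging identity with definability. Assuming sensibility, Theorem~\ref{th:caracFAwT} gives $(\vec\alpha,\beta)\in\llb M\rrb^{\vec x}\Llra\tau_\beta(M[\overline{\bareps_\alpha}/\vec x])\Da$ for every plain $M$; chaining this with the identity yields $(\vec\alpha,\beta)\in\llb M\rrb^{\vec x}\Llra(\vec\alpha,\beta)\in\bigcup_{N\in\B{M}}\llb N\rrb^{\vec x}$ for all $\beta\neq\omega$, and the $\omega$ case is trivial, so $\llb M\rrb^{\vec x}=\bigcup_{N\in\B{M}}\llb N\rrb^{\vec x}$, which is exactly approximability (Definition~\ref{def:approx}).

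For the converse, \textbf{approximable $\Rightarrow$ sensible for $\Lam{D}$}, approximability feeds the bridging identity in the opposite direction: $\llb M\rrb=\bigcup_{N\in\B{M}}\llb N\rrb$ turns the identity into plain-term definability $(\vec\alpha,\beta)\in\llb M\rrb^{\vec x}\Llra\tau_\beta(M[\overline{\bareps_\alpha}/\vec x])\Da$. It then remains to lift this to the two equivalences of Definition~\ref{def:sensibilityWTests} for \emph{arbitrary} terms and tests. The soundness halves ($M\Da\Rightarrow\llb M\rrb\neq\{(\vec\alpha,\omega)\}$ and $Q\Da\Rightarrow\llb Q\rrb\neq\emptyset$) are free, since a may-head-normal form has non-trivial interpretation and interpretation is reduction-invariant (Proposition~\ref{prop:testModel}). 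For the adequacy halves I would again use Theorem~\ref{th:BetaFirst} to reduce convergence of an arbitrary test-term to a $\beta$-free head reduction, whose steps are exactly the $\bar\tau$-, $\tau$- and $\tau\bar\tau$-rules that operationally unfold the B\"ohm tree of the underlying plain skeleton; the interpretations exposed by this unfolding are those of the finite approximants, which approximability identifies with the full interpretation, so a non-trivial point forces the unfolding to succeed.

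\textbf{Main obstacle.} The genuinely delicate step is this last lift from plain-term definability to sensibility for \emph{all} of $\Lam{D}$: the bridging identity only speaks about tests of the special shape $\tau_\beta(M[\overline{\bareps_\alpha}/\vec x])$ with $M$ a plain $\lambda$-term, whereas sensibility quantifies over arbitrary nested $\tau/\bar\tau$-terms and over tests with sums and products. Making the informal ``tests unfold B\"ohm trees'' argument precise requires a careful induction along a $\beta$-free standard reduction (relying on confluence and the postponement theorem) showing that no interpretation can hide in a head-diverging test-term. This is precisely the place where, in the plain calculus, one would normally invoke reducibility, and it is exactly what makes the identification with approximability worthwhile.
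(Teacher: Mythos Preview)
Your proof follows the paper's exactly: the bridging identity you isolate is precisely Lemma~\ref{lemma:app=sens} combined with Theorem~\ref{th:BetaFirst}, and both implications are argued the same way (the paper invokes Theorem~\ref{th:caracFAwT} implicitly in the forward direction, just as you do explicitly). Notably, for the converse direction the paper's own proof also only establishes the equivalence for tests of the special shape $\tau_\beta(M[\overline{\bareps_\alpha}/\vec x])$ with $M\in\Lambda$ a plain term, and stops there; the lift to arbitrary terms and tests of $\Lam D$ that you single out as the ``main obstacle'' is not spelled out in the paper either. You are right that this step is not automatic---though in the paper's actual applications only the forward direction (sensible $\Rightarrow$ approximable) is ever invoked, since sensibility for $\Lam D$ is obtained independently via the realizer argument of Section~\ref{ssec:realisability}.
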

\begin{proof}
  Both implications are considered separately\shortv{ using Lemma~\ref{lemma:app=sens}}.
  \longv{
   \begin{itemize}
   \item If $D$ is sensible for $\Lam D$ then it is approximable:\\
     Let $\vec\alpha\in D^{\vec x}$, $\beta\in D{-}\omega$ and $M\in\Lambda$.
    \begin{itemize}
     \item If $(\vec\alpha,\beta)\in\llb\dapprox N\rrb_D^{\vec x}$ for some $M\rta^*N$, then $\tau_\alpha(N[\overline{\bareps_\alpha}/\vec x])\Da$ by Lemma~\ref{lemma:app=sens}, thus $\tau_\alpha(M[\overline{\bareps_\alpha}/\vec x])\Da$ and $(\vec\alpha,\beta)\in\llb M\rrb_D^{\vec x}$.
     \item If $(\vec\alpha,\beta)\in\llb M\rrb_D^{\vec x}$, then $\tau_\alpha(M[\overline{\bareps_\alpha}/\vec x])\Da$. By Theorem~\ref{th:BetaFirst}, $M\rta^*_\beta N$ with $\tau_\alpha(N[\overline{\bareps_\alpha}/\vec x])$ that may-head converges without $\beta$-reduction. Thus, $(\vec\alpha,\beta)\in \llb \dapprox{N}\rrb^{\vec x}$ by Lemma~\ref{lemma:app=sens}.
     \end{itemize}
   \item If $D$ is approximable then it is sensible for $\Lam D$:\\
   Let $\vec\alpha\in D^{\vec x}$, $\beta\in D{-}\omega$ and $M\in\Lambda$.
   \begin{itemize}
   \item If $\tau_\alpha(M[\overline{\bareps_\alpha}/\vec x])\Da$, then by Theorem~\ref{th:BetaFirst}, $M\rta^*_\beta N$ with $\tau_\alpha(N[\overline{\bareps_\alpha}/\vec x])$ that may-head converges without $\beta$-reduction. Thus, by Lemma~\ref{lemma:app=sens}, $(\vec\alpha,\beta)\in \llb \dapprox{N}\rrb^{\vec x}$, which is included in $\llb M\rrb^{\vec x}$ by approximability.
   \item If $(\vec\alpha,\beta)\in\llb M\rrb_D^{\vec x}$, then there is $M\rta^* N$ such that $(\vec\alpha,\beta)\in \llb \dapprox{N}\rrb^{\vec x}$. By Lemma~\ref{lemma:app=sens} we have $\tau_\beta(M[\overline{\bareps_\alpha}/\vec x])$ that is may-head converging so that $\tau_\alpha(M[\overline{\bareps_\alpha}/\vec x])\Da$.
   \end{itemize}\vspace{-1em}
 \end{itemize}
}
\end{proof}

\section{Sufficient Condition for the Sensibility of Tests}
  \label{ssec:realisability}
  \newcommand{\bigslant}[2]{{{#1}\!}\raisebox{-.2em}{${/}$}\!\raisebox{-.30em}{$#2$}}

So far we could not find a generic and uniform proof of the approximation property in the literature\longv{ for standard filter models}.\footnote{Save Chapter 17.3 of the book of Barendregt, Dekkers ans Statman~\cite{BarDekSta} where this proof is done in parallel for several models of different classes, missing uniformity.} Hence, we give a sufficient condition (Def.~\ref{def:wpos}) for a filter model $D$ to be approximable (Th.~\ref{th:sensibility}). We use this condition for stating the approximability of models from Example~\ref{example:1} (save for $P_\infty$)\longv{ and Example~\ref{ex:coind}}.

Here, we make a strong use of the equivalence between approximability and sensibility with tests (Th.~\ref{th:app=sens}) proven in the previous chapter. Indeed, if approximability is also proved using Tait reducibility methods~\cite{Tait67}, the process is not as well understood as in the proofs of sensibility. By directly relying on the connection with tests, we can get the more refined analysis of the theorem of approximation that we have ever find. 

After our detailed analysis, we describe a sufficient, but not necessary, condition for the approximability. Generalizing the study of sensible models carried out by Berline~\cite{Ber00} and her students (Kerth~\cite{KerthPhD} in particular). In fact, we include (by far) all filter models proven sensible in the literature!

\subsection{Realizers}
\newcommand\Sat{\mathtt{Sat}}

\longv{
  \begin{definition}
    A \newdef{saturated set} $S\in\Sat_D$ is a set of term $S\subseteq\Lambda_{\tau(D)}$ that is close by backward reduction.\\
    Given two saturated sets $S,T$, we let $S\mapsto T$ denote the saturated set of terms $M$ such that $(M\ N)\in T$ whenever $N\in S$.\\
  \end{definition}
  
  \begin{definition}\label{def:realizerIInD} 
    A \newdef{realizer} of $D$ in $\Lambda$ is a function $\newsym{\Rea}$ from $D$ to saturated subsets of $\Lambda$ such that for all $\alpha,\beta\in D$, we have
    $$\Rea(\alpha\wedge\beta)=\Rea(\alpha)\cap \Rea(\beta)    \quad\quad\quad\quad\quad     \Rea(\alpha\cons\beta)=\Rea(\alpha)\mapsto \Rea(\beta):=\{M\mid\forall N\in\Rea(\alpha),\: (M\:N)\in\Rea(\beta)\} .$$
    Given any $D$-indexed sequence $S$ of saturated sets, a realizer $\Rea$ of $D$ in $\Lambda$ is a $S$-realizer if for all $\alpha$, $\Rea(\alpha)\in S_\alpha$.\\
    This definition trivially is extended for a partial DEFiM $J$ in place of $D$.\\
  \end{definition}
  
  \begin{definition}
    We use the notation:
    \begin{itemize}
    \item $\newsym{\protect\mathcal{N}_\Lambda^+} := \{M\in\Lambda\ |\ M\Da \}$,
    \item $\newsym{\protect\mathcal{N}_\Lambda^-} := \{x\ M_1\cdots M_k\ |\ x\in\Var, k\ge 0, M_1,...,M_k\in\Lambda \}$,
    \item for all $\alpha\in D{-}\omega$, $S_\Lambda^\alpha$ is the set of saturated subsets of $\mathcal{N}_\Lambda^+$ that contains $\mathcal{N}_\Lambda^-$,
    \item $S_\Lambda^\omega=\Lambda$
    \item $S_\Lambda^{D}=(S_\Lambda^\alpha)_{\alpha\in D}$.
    \end{itemize}
    For any partial DEFiM $J\subseteq D$, we write $S^{J}_\Lambda$ for the restriction 
    to $J$.
  \end{definition}
  
  \begin{lemma}\label{lemma:AdInt}
    Let $\Rea$ be a $S_\Lambda^{D}$-realizer in $D$.
    \begin{alignat*}4
      \text{if} &\quad (\vec a,\alpha)\in\llb M\rrb^{\vec x}\quad &\text{and} \quad (\forall i,L_i\in \Rea(a_i))\quad &\text{then} \quad M[\vec L/\vec x]\in \Rea(\alpha)
    \end{alignat*}
  \end{lemma}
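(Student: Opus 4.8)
The plan is to transport the hypothesis $(\vec a,\alpha)\in\llb M\rrb^{\vec x}$ into the intersection-type world via Proposition~\ref{prop:eqIT}: it yields a derivation of the judgment $\vec x:\vec a\vdash M:\alpha$ using the rules of Figure~\ref{fig:tyLam1}. I would then argue by structural induction on this derivation, keeping the ambient environment $\vec x$ and the realizers $\vec L$ (with $L_i\in\Rea(a_i)$) fixed throughout. The two defining equations of a realizer, $\Rea(\beta\wedge\gamma)=\Rea(\beta)\cap\Rea(\gamma)$ and $\Rea(\beta\cons\gamma)=\Rea(\beta)\mapsto\Rea(\gamma)$, together with the fact that each $\Rea(\beta)$ is a saturated (backward-closed) set, are exactly the ingredients matching the six typing rules.

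The routine cases go as follows. For the axiom $x_i:a_i\vdash x_i:a_i$ one must show $x_i[\vec L/\vec x]=L_i\in\Rea(a_i)$, which is the assumption; weakening is handled by discarding the extra realizer hypothesis. For the intersection rule the induction hypotheses give $M[\vec L/\vec x]\in\Rea(\gamma)$ and $M[\vec L/\vec x]\in\Rea(\delta)$, hence membership in $\Rea(\gamma)\cap\Rea(\delta)=\Rea(\gamma\wedge\delta)$. For subsumption with $\alpha\ge_D\beta$ I would first record the monotonicity consequence $\Rea(\beta)\subseteq\Rea(\alpha)$: since $\alpha\ge_D\beta$ means $\alpha\wedge\beta=\beta$, one has $\Rea(\beta)=\Rea(\alpha\wedge\beta)=\Rea(\alpha)\cap\Rea(\beta)\subseteq\Rea(\alpha)$, and the induction hypothesis gives $M[\vec L/\vec x]\in\Rea(\beta)$. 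The application rule is immediate from the $\mapsto$ equation: the two induction hypotheses give $M[\vec L/\vec x]\in\Rea(\gamma)\mapsto\Rea(\delta)$ and $N[\vec L/\vec x]\in\Rea(\gamma)$, so $(M\,N)[\vec L/\vec x]\in\Rea(\delta)$ by the definition of $\mapsto$.

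The one case carrying the real content is the abstraction rule, deriving $\vec x:\vec a\vdash\lambda x.N:\gamma\cons\delta$ from $\vec x:\vec a,x:\gamma\vdash N:\delta$. Here I must show $(\lambda x.N)[\vec L/\vec x]\in\Rea(\gamma\cons\delta)=\Rea(\gamma)\mapsto\Rea(\delta)$, that is, that $(\lambda x.N[\vec L/\vec x])\,P\in\Rea(\delta)$ for every $P\in\Rea(\gamma)$. Taking $\vec L,P$ as realizers for the extended environment $\vec x,x$ (after $\alpha$-renaming so that $x$ is fresh for $\vec L$), the induction hypothesis yields $N[\vec L/\vec x][P/x]\in\Rea(\delta)$. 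Since $(\lambda x.N[\vec L/\vec x])\,P\rta N[\vec L/\vec x][P/x]$ is a single $\beta$-step and $\Rea(\delta)$ is closed under backward reduction, the redex itself belongs to $\Rea(\delta)$, closing the case.

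I expect this abstraction case to be the main obstacle, since it is the only place where saturation is genuinely used and where the $\mapsto$ characterization is exploited covariantly. The remaining points needing care are the handling of the top element $\omega$ (where $\Rea(\omega)=\Lambda$ trivializes the goal, and where one keeps in mind $\gamma\cons\delta=\omega\Leftrightarrow\delta=\omega$) and the standard substitution bookkeeping $N[\vec L/\vec x][P/x]=N[\vec L,P/\vec x,x]$, justified by the fresh choice of the bound variable.
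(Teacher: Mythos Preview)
Your proof is correct. The route you take differs from the paper's only in its organization: you pass through Proposition~\ref{prop:eqIT} and induct on the typing derivation (six rules of Figure~\ref{fig:tyLam1}), whereas the paper inducts directly on the structure of $M$, unpacking the clauses of the interpretation in Figure~\ref{fig:intLam} at each node. Concretely, where the paper's variable case reads off $\alpha_i\le_D\beta$ from $\llb x_i\rrb^{\vec x}$ and uses $\Rea(\alpha_i)\subseteq\Rea(\beta)$ in one step, you split this into axiom plus subsumption; where the paper's application case extracts a family $(\gamma_j,\beta_j)_j$ with $\beta=\bigwedge_j\beta_j$ from the interpretation clause, you handle this via separate application and intersection rules. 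Your decomposition buys a cleaner separation of concerns (monotonicity, meet-preservation, and the functional equation each get their own case), and your abstraction case is actually more explicit than the paper's about the use of saturation to absorb the $\beta$-step. Nothing is missing.
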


  \begin{theorem}\label{th:sensibility}
    A DEFiM $D$ is sensible for $\Lambda$ iff it has a $S_\Lambda^{D}$-realizer of $D$ in $\Lambda$.
  \end{theorem}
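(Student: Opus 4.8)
The plan is to establish the two implications separately, using the Adequacy Lemma~\ref{lemma:AdInt} for the direction ``a realizer exists $\Rta$ $D$ is sensible'' and a canonical orthogonality construction for the converse.

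For ``realizer $\Rta$ sensible'', recall that sensibility is the biconditional $M\Da\Llra\llb M\rrb^{\vec x}$ nontrivial, and I would treat its two halves differently. The implication $M\Da\Rta\llb M\rrb^{\vec x}\neq\emptyset$ uses no realizer at all: picking any $\delta\neq\omega$ in $D$, a head normal form $\lambda x_1\dots x_n.y\,M_1\cdots M_k$ is typed by assigning $y$ the arrow type $\omega\cons\cdots\cons\omega\cons\delta$, and since the interpretation is invariant under head reduction this nontrivial point is inherited by $M$ under expansion. For the converse half, assume $(\vec a,\alpha)\in\llb M\rrb^{\vec x}$ with $\alpha\neq\omega$. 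Each variable $x_i$ is neutral, so $x_i\in\mathcal{N}_\Lambda^-\subseteq\Rea(a_i)$ when $a_i\neq\omega$, while $x_i\in\Lambda=\Rea(\omega)$ otherwise; hence $x_i\in\Rea(a_i)$ in all cases. Instantiating Lemma~\ref{lemma:AdInt} with $L_i:=x_i$ gives $M=M[\vec x/\vec x]\in\Rea(\alpha)$, and because $\alpha\neq\omega$ we have $\Rea(\alpha)\in S_\Lambda^\alpha\subseteq\mathcal{P}(\mathcal{N}_\Lambda^+)$, so $M\in\mathcal{N}_\Lambda^+$, i.e.\ $M\Da$. This settles the first direction.

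For ``sensible $\Rta$ realizer'', I would not take the naive set $\{M\mid\alpha\in\llb M\rrb^{\vec x}\text{ for some environment}\}$: this fails the arrow equation, since in plain $\Lambda$ the principal filter $\da\alpha$ need not be the interpretation of any term, so there is no syntactic canonical inhabitant forcing $M$ to behave as $\alpha\cons\beta$. Instead I would build $\Rea$ by orthogonality against a pole of converging interactions, in the style of the author's realizability for Krivine models, so that the two defining equations $\Rea(\alpha\wedge\beta)=\Rea(\alpha)\cap\Rea(\beta)$ and $\Rea(\alpha\cons\beta)=\Rea(\alpha)\mapsto\Rea(\beta)$ hold by construction and saturation is automatic. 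Sensibility then enters exactly when checking $\Rea(\alpha)\in S_\Lambda^\alpha$ for $\alpha\neq\omega$: the inclusion $\Rea(\alpha)\subseteq\mathcal{N}_\Lambda^+$ holds because every such term carries the non-trivial point $\alpha$ in its interpretation and therefore head-converges by sensibility, while $\mathcal{N}_\Lambda^-\subseteq\Rea(\alpha)$ follows by typing the head variable of a neutral term with $\omega\cons\cdots\cons\omega\cons\alpha$.

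The hard part, as anticipated in the introduction, is to make the orthogonality construction land inside $S_\Lambda^D$ while still satisfying the arrow equation. Since an arrow type may be no simpler than its components, $\Rea$ cannot be defined by well-founded recursion on $D$ and must be obtained as a fixpoint of the realizability operator; and this operator is not monotone, so neither Knaster--Tarski nor a naive iteration is directly available. The content of the theorem is precisely that sensibility is what guarantees a fixpoint confined to $S_\Lambda^D$ exists, and the failure to produce such fixpoints in general is the very gap between sensibility and approximability that the later weak positivity condition is designed to bridge.
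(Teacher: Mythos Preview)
Your forward direction (realizer $\Rightarrow$ sensible) is correct and matches the paper's approach: substitute canonical elements of $\mathcal{N}^-$ for the free variables and invoke Lemma~\ref{lemma:AdInt}. In the test calculus the paper plugs in $\bareps_{\alpha_i}$; in plain $\Lambda$ you rightly use the variables themselves, which belong to $\mathcal{N}_\Lambda^-$.

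Your backward direction, however, is not a proof, and it misreads the role of the theorem. The paper's proof of the converse is one line: assuming sensibility, \emph{define} $\Rea(\beta):=\{M\mid \exists\vec\alpha,\ (\vec\alpha,\beta)\in\llb M\rrb\}$ and check that this is an $S$-realizer. You dismiss this candidate on the grounds that the arrow equation may fail in plain $\Lambda$, but you give no counterexample, and you then replace it by an ``orthogonality against a pole'' construction that you immediately concede cannot be completed without a non-monotone fixpoint --- which you identify with ``the very gap between sensibility and approximability that the later weak positivity condition is designed to bridge.'' That is a confusion of two distinct tasks. The fixpoint and stratification machinery of the later sections is there to \emph{produce} a realizer from structural hypotheses on $D$ (and thereby \emph{establish} sensibility via the forward implication). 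The converse of the present theorem goes the other way: it \emph{assumes} sensibility and must exhibit a realizer directly, as a completeness statement for the method. Deferring to a fixpoint you cannot build is circular here; if the orthogonality construction genuinely needed the weak-positivity hypothesis, the theorem as stated (for arbitrary sensible $D$) would be false.

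In short: the paper's converse is the naive semantic realizer, and the burden is only to verify the two equations under the hypothesis of sensibility. Your proposal abandons this verification without disproving it and substitutes a sketch that does not terminate in a proof.
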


  \begin{definition}
    $\tau(D)$-saturated sets, and \newdef{realizer} of $D$ in $\Lam D$ are defined similarly, excepts that the considered calculus is the calculus with tests.
  \end{definition}

}

\shortv{

  In this section, we are adapting Tait proof of reducibility to the filter models and the calculi with tests. The adaptation for filter models (or restrictions) have already been extensively studied; for example, see Berline's survey~\cite{Ber00}. However, the adaptation to tests is new and quite interesting. Indeed, we will see that the set of realizers we are looking into is much more coarser and refined, making the search more readable.
  The first step is to defined what is a correct realizer:\footnote{Notice that this notion of Realizers is not exactly what Tait call a realizer, but more like a $D$-indexed set of those. However, since we will look into a set of such $D$-indexed sets of Realizers; we changed the level of abstraction...

}
  
  \begin{definition}
    A \newdef{saturated set} $\Sat_D$ is a set of term $S\subseteq\Lambda_{\tau(D)}$ that is close by backward reduction.\\
    Given two saturated sets $S,T$, we let $S\mapsto T$ denote the saturated set of terms $M$ such that $(M\ N)\in T$ whenever $N\in S$.\\
  \end{definition}
  
  \begin{definition}\label{def:realizerIInD} 
    A \newdef{realizer} of $D$ in $\Lambda$ is a function $\newsym{\Rea}$ from $D$ to saturated subsets of $\Lam D$ such that for all $\alpha,\beta\in D$, we have
    $$\Rea(\alpha\wedge\beta)=\Rea(\alpha)\cap \Rea(\beta)    \quad\quad\quad\quad\quad     \Rea(\alpha\cons\beta)=\Rea(\alpha)\mapsto \Rea(\beta) .$$
    Given any $D$-indexed sequence $S$ of saturated sets, a realizer $\Rea$ of $D$ in $\Lambda$ is a $S$-realizer if for all $\alpha$, $\Rea(\alpha)\in S_\alpha$.\\
    This definition trivially is extended for a partial DEFiM $J$ in place of $D$.\\
  \end{definition}
}

Intuitively, a $S$-realizer is a proof that a certain property represented by $S$ is true for every typable term. This ``certain property'' is basically the commune property of elements of $S_\alpha$ (for $\alpha\neq\omega$). In our case, we are looking for sensibility, this gives us the sequence $S$ described by:

\begin{definition}
  We write, for all $\alpha\in D{-}\omega$:
  \begin{itemize}
  \item $\newsym{\protect\mathcal{N}_\alpha^+} := \{M\in\Lam{D}\ |\ \forall \beta\ge_D \alpha, \tau_\beta(M)\Da \}$, is the set of terms converging over the context $\tau_\alpha$
  \item $\newsym{\protect\mathcal{N}_\alpha^-} := \{(\sum_i\bareps_{\beta_i} + L\ |\ \alpha \ge_D \bigwedge_i\beta_i,\ L\in \Lam{D} \}$, is the set of trivial mhnf of type $\alpha$.
  \item $S_\omega:=\Lam D$ is the set of all terms.
  \item $S_\alpha:=\{G\in\Sat_D\mid \mathcal{N}_\alpha^+\supseteq G  \supseteq\mathcal{N}_\alpha^-\}$ is the set of $\tau(D)$-saturated subsets of $\mathcal{N}_\alpha^+$ that contains $\mathcal{N}_\alpha^-$ for $\alpha\in D-\{\omega\}$,
  \item $S:=(S_{\tau(D)}^\alpha)_{\alpha\in D}$ is the set of $D$-indexed collections of elements of $S_{\tau(D)}^\alpha$.
  \end{itemize}
  The definition is extended for partial models.
\end{definition}

%

\begin{lemma}\label{lemma:AdInt}
  Let $\Rea$ be a $S$-realizer in $D$.
  \begin{alignat*}4
    \text{if} &\quad (\vec \alpha,\beta)\in\llb M\rrb^{\vec x}\quad &\text{and} \quad (\forall i,L_i\in \Rea(\alpha_i))\quad &\text{then} \quad M[\vec L/\vec x]\in \Rea(\beta)\\
     \text{if} &\quad \vec \alpha\in\llb Q\rrb^{\vec x} \quad &\text{and} \quad (\forall i,L_i\in \Rea(\alpha_i))\quad &\text{then} \quad Q[\vec L/\vec x]\rta^*\epsilon
  \end{alignat*}
\end{lemma}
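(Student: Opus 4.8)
The plan is to prove both statements simultaneously by a single mutual induction, reading the two hypotheses through the intersection-type characterisation of the interpretation (Figures~\ref{fig:tyLam} and~\ref{fig:tyTests}, cf.\ Proposition~\ref{prop:eqIT}): $(\vec\alpha,\beta)\in\llb M\rrb^{\vec x}$ is the same as derivability of $\vec x:\vec\alpha\vdash M:\beta$, and $\vec\alpha\in\llb Q\rrb^{\vec x}$ the same as $\vec x:\vec\alpha\vdash Q$. I would then induct on the last rule of the derivation, which is an induction on the structure of $M$ and $Q$. This is the standard adequacy (``fundamental'') lemma of Tait realizability; the only novelty is the pair of clauses for $\tau_\alpha$ and $\bar\tau_\alpha$, which is exactly why the two statements must be carried together. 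The case $\beta=\omega$ of the first statement is trivial since $\Rea(\omega)$ is the whole language.

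For the purely $\lambda$-calculus rules the cases are routine and exploit precisely the two defining equations of a realizer. The variable axiom is immediate from $L_i\in\Rea(\alpha_i)$. The subtyping rule uses monotonicity of $\Rea$: from $\Rea(\alpha\wedge\beta)=\Rea(\alpha)\cap\Rea(\beta)$ one reads off $\Rea(\beta)\subseteq\Rea(\alpha)$ whenever $\alpha\ge_D\beta$ (since then $\alpha\wedge\beta=\beta$); the intersection rule is the same equation used as a conjunction. The application rule is a direct use of $\Rea(\alpha\cons\beta)=\Rea(\alpha)\mapsto\Rea(\beta)$. The abstraction rule is where saturation enters: to place $(\lambda x.M)[\vec L/\vec x]$ in $\Rea(\alpha)\mapsto\Rea(\beta)$ I feed it an arbitrary $N\in\Rea(\alpha)$, observe $(\lambda x.M[\vec L/\vec x])\,N\rta M[\vec L/\vec x][N/x]$, apply the induction hypothesis (with $N$ realizing the type of the bound variable) to land the contractum in $\Rea(\beta)$, and close under backward reduction.

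The genuinely new cases couple the two statements through the $S$-realizer conditions $\mathcal{N}_\beta^-\subseteq\Rea(\beta)\subseteq\mathcal{N}_\beta^+$. For the term former $\sum_i\bar\tau_{\alpha_i}(Q_i)$ typed with $\alpha_j$ from a premise $\vdash Q_j$, the induction hypothesis for tests yields $Q_j[\vec L/\vec x]\rta^*\eps$; hence the $j$-th summand reduces, $\bar\tau_{\alpha_j}(Q_j[\vec L/\vec x])\rta^*\bareps_{\alpha_j}$, so the whole term reduces to an element of $\mathcal{N}_{\alpha_j}^-\subseteq\Rea(\alpha_j)$, and backward closure concludes. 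Dually, for the test $\tau_\gamma(M)$ the induction hypothesis for terms gives $M[\vec L/\vec x]\in\Rea(\gamma)\subseteq\mathcal{N}_\gamma^+$, whence $\tau_\gamma(M[\vec L/\vec x])$ converges to success. The product $\prod_iQ_i$ is handled by applying the hypothesis to every factor and using $\eps\pt\eps=\eps$, and the sum $\sum_iQ_i$ by applying it to the single successful summand.

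The main obstacle is the mutual recursion itself: the term statement feeds the test statement through $\bar\tau$ and conversely through $\tau$, so the simultaneous scheme must be set up so that the structural measure on $M$ and $Q$ decreases across both arms at once. The delicate technical point is bridging the abstract convergence guaranteed by $\Rea(\gamma)\subseteq\mathcal{N}_\gamma^+$ to the concrete reduct $\eps$ demanded in the conclusion; this rests on the shape of test may-head-normal forms and on saturation being closure under the \emph{full} reduction $\rta$ (so that the redexes created under $\bar\tau_\alpha$ and inside sums and products can actually be contracted), rather than under head reduction alone. Once these two ingredients are fixed, everything else is bookkeeping.
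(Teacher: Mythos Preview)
Your proposal is correct and follows essentially the same route as the paper: a simultaneous structural induction on $M$ and $Q$, using the realizer equations for the $\lambda$-cases and the inclusions $\mathcal{N}_\alpha^-\subseteq\Rea(\alpha)\subseteq\mathcal{N}_\alpha^+$ for the $\bar\tau$ and $\tau$ cases respectively. The only cosmetic difference is that you phrase the induction via the typing rules of Figures~\ref{fig:tyLam}--\ref{fig:tyTests} (treating subtyping and intersection as separate cases), whereas the paper inducts directly on the term/test grammar using the interpretation clauses of Figures~\ref{fig:intLam}--\ref{fig:intTests}; your caution about extracting an actual reduct $\eps$ from $\mathcal{N}_\gamma^+$ is in fact more explicit than the paper's own ``by definition''.
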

\begin{proof}
  By induction on $M$ and $Q$\shortv{.}\longv{:
  \begin{itemize}
  \item $M=x_i$ :
    then
    $\alpha_i\le_D \beta$. Thus if $L_i\:{\in}\: \Rea(\alpha_i)\:{\subseteq}\: \Rea(\beta)$,
    we have $M[\vec L/\vec x]=L_i\in \Rea(\beta)$.
  \item $M=N_1\ N_2$ :
    there exists $(\gamma_j,\beta_j)_{j\le n}$ such that $\beta=\bigwedge_j\beta_j$, 
    $(\vec \alpha,\bigwedge_j\gamma_j\cons\beta_j)\in\llb N_1\rrb^{\vec x}$ and $(\vec a;\bigwedge_j\gamma_j)\in\llb N_2\rrb^{\vec x}$.
    Thus, by induction hypothesis, if for all $i$, $L_i\in \Rea(\alpha_i)$, $N_1[\vec L/\vec x]\in
    (\bigcap_j(\Rea(\gamma_j)\mapsto\Rea(\beta_j)))$ and $N_2[\vec L/\vec x]\in\bigcap_j\Rea(\gamma_j)$. We conclude by $(N_1N_2)[\vec L/\vec x]\in\bigcup_j\Rea(\beta_i)$. 
  \item $M=\lambda y. N$ :
    then $\beta = \bigwedge_j \gamma_j\cons\beta_j$ and $((\vec \alpha,\bigwedge_i\gamma_i);\bigwedge_i\beta_i)\in \llb N\rrb^{\vec x y}$.
    We want to show that whenever $\forall i\le |\vec x|,\ L_i\in \Rea(\alpha_i)$ and $j\le n$, we have
    $\lambda  y.N[\vec L/\vec x]\in\ \Rea(\gamma_j)\mapsto\Rea(\beta_j)$. But if $L\in\Rea(\bigwedge_i\gamma_i)$ for all $i$, the induction hypothesis give us that for any $j$, $N[\vec L/\vec x][L/y]\in\Rea(\beta_j)$.
  \item $M=\Sigma_{j\in J}\bar\tau_{\gamma_j}(Q_j)$ :
    there is $J'\subseteq J$ such that $\beta\preceq \bigwedge_{j\in J'}\gamma_j$ and $\vec \alpha\in\bigcap_{j\in J'}\llb Q_j\rrb^{\vec x}$.\\
    By induction hypothesis, when given $\ L_i\in \Rea(\alpha_i)$ for each $i\le |\vec x|$, we get $Q_j[\vec L/\vec x]\rta^*\eps$ for any $j\in J'$.
    Thus, for all $j\in J'$, $M[\vec L/\vec x]\rta^*M'+\bareps_{\gamma_j}\in \newsym{\protect\mathcal{N}_{\gamma_j}^-}\subseteq \Rea(\gamma_j)$, so that $M[\vec L/\vec x]\in\Rea(\bigwedge_{j\in J'}\gamma_i)\subseteq\Rea(\beta)$.
  \item $Q=\tau_\beta(M)$ : we have $(\vec \alpha,\beta)\in\llb M\rrb^{\vec x}$, and by induction hypothesis if $\forall i\le |\vec x|,\ L_i\in \Rea(\alpha_i)$ then $M[\vec L/\vec x]\in \Rea(\beta)\subseteq \newsym{\protect\mathcal{N}_\beta^+}$. Thus, by definition, $\tau_\alpha(M[\vec L/\vec x])\rta^*\eps$
  \item $Q=Q_1\pt Q_2$ :
    then $\vec \alpha\in \llb Q_1\rrb^{\vec x}\cap\llb Q_2\rrb^{\vec x}$ and by induction hypothesis whenever $\forall i\le |\vec x|,\ L_i\in \Rea(\alpha_i)$,
    $Q_1[\vec L/\vec x]\rta^*\eps$ and $Q_2[\vec L/\vec x]\rta^*\eps$, thus trivially $Q_1\pt Q_2\rta^*\eps$
  \item $Q=Q_1+ Q_2$ :
    then there is $j\in\{1,2\}$, $\vec \alpha\in \llb Q_j\rrb^{\vec x}$ and by induction hypothesis whenever $\forall i\le |\vec x|,\ L_i\in \Rea(\alpha_i)$,
    $Q_j[\vec L/\vec x]\rta^*\eps$, thus trivially $Q_1\pt Q_2\rta^*\eps$
  \end{itemize}
  }
\end{proof}

\begin{theorem}\label{th:sensibility}
  A DEFiM $D$ is sensible for $\Lam{D}$ iff there is a $S$-realizer in $D$.
\end{theorem}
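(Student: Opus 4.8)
The plan is to prove the two implications separately, using the adequacy Lemma~\ref{lemma:AdInt} as the engine for the ``hard'' direction and the invariance of the interpretation (Prop.~\ref{prop:testModel}), together with the definability Theorem~\ref{th:caracFAwT}, for the rest. Throughout, recall that being sensible for $\Lam{D}$ (Def.~\ref{def:sensibilityWTests}) means that a term (resp. test) diverges exactly when its interpretation is trivial (resp. empty), so in each direction we must handle both a ``non-trivial interpretation $\Rightarrow$ convergence'' part and a ``convergence $\Rightarrow$ non-trivial interpretation'' part.

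Assume first that a $S$-realizer $\Rea$ exists. For the non-trivial half of sensibility, suppose $(\vec\alpha,\beta)\in\llb M\rrb^{\vec x}$ with $\beta\neq\omega$. Taking a single summand with $L=\0$ in the definition of $\mathcal{N}^-$ shows $\bareps_{\alpha_i}\in\mathcal{N}_{\alpha_i}^-\subseteq\Rea(\alpha_i)$ for each $i$, so Lemma~\ref{lemma:AdInt} gives $M[\overline{\bareps_\alpha}/\vec x]\in\Rea(\beta)\subseteq\mathcal{N}_\beta^+$, whence $\tau_\beta(M[\overline{\bareps_\alpha}/\vec x])\Da$. A routine inspection of the head rules for $\tau_\beta(\cdot)$ (which can only fire once the argument is exposed as an abstraction, a head-variable application, or a sum of $\bar\tau$'s) shows that convergence of this test forces the argument to reach a may-head-normal form; since $[\overline{\bareps_\alpha}/\vec x]$ preserves head-divergence, we conclude $M\Da$. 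The test clause of Lemma~\ref{lemma:AdInt} treats $Q\in\Test{D}$ identically. The converse half needs no realizer: if $M\Da N$ with $N$ a may-head-normal form, then $\llb M\rrb=\llb N\rrb$ by Prop.~\ref{prop:testModel}, and $\llb N\rrb$ is non-trivial by directly typing $N$ (through its head variable, resp. through the $\bar\tau$-typing rule), and likewise for tests.

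Conversely, assume $D$ sensible for $\Lam{D}$ and define the canonical realizer $\Rea(\alpha):=\{M\mid\alpha\in\llb M\rrb\}$, the set of terms typable by $\alpha$; by Theorem~\ref{th:caracFAwT} it coincides with $\mathcal{N}_\alpha^+$ under sensibility, and it is saturated by invariance of the interpretation. The bounds hold: $\Rea(\alpha)\subseteq\mathcal{N}_\alpha^+$ is the previous identification, while $\mathcal{N}_\alpha^-\subseteq\Rea(\alpha)$ follows since a term $\sum_i\bareps_{\beta_i}+L$ with $\alpha\ge_D\bigwedge_i\beta_i$ receives the type $\alpha$ (from the interpretation of $\bareps$-sums). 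For the algebraic equations, $\Rea(\alpha\wedge\beta)=\Rea(\alpha)\cap\Rea(\beta)$ is exactly the fact that $\llb M\rrb$ is a filter (upward- and meet-closed). The inclusion $\Rea(\alpha\cons\beta)\subseteq\Rea(\alpha)\mapsto\Rea(\beta)$ is the application rule; for the reverse inclusion one evaluates $M$ on the canonical inhabitant $\bareps_\alpha$, whose interpretation is the principal filter $\da\alpha=\{\gamma\mid\gamma\ge_D\alpha\}$: if $M\bareps_\alpha\in\Rea(\beta)$ then $\beta\in\llb M\bareps_\alpha\rrb$, and by extensionality combined with Lemma~\ref{lemma:test} (applied to $\llb\lambda y.My\rrb=\llb M\rrb$) this is equivalent to $\alpha\cons\beta\in\llb M\rrb$.

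The two genuinely delicate points are, in the second paragraph, the reverse inclusion of the arrow equation — where the role of $\bareps_\alpha$ as the generic term of type $\alpha$ is essential and where one truly uses that $D$ is a filter model, so that $\beta\in\llb M\bareps_\alpha\rrb$ can be pulled back to $\alpha\cons\beta\in\llb M\rrb$ via Lemma~\ref{lemma:test} — and, in the first paragraph, the operational reflection turning convergence of $\tau_\beta(M[\overline{\bareps_\alpha}/\vec x])$ into convergence of $M$. The latter is the only place where one must argue by hand on the reduction rather than on the semantics, and it relies on the confluence, standardization and test-postponement properties recalled after Definition~\ref{def:LtauD-calculus}; everything else reduces to the adequacy lemma and the filter structure of $D$.
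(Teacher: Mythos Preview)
Your proof is correct and follows exactly the paper's strategy: adequacy (Lemma~\ref{lemma:AdInt}) with the canonical inhabitants $\bareps_{\alpha_i}\in\mathcal N_{\alpha_i}^-$ for the forward direction, and the semantic realizer $\Rea(\alpha)=\{M\mid\alpha\in\llb M\rrb\}$ for the converse. You are considerably more thorough than the paper's two-line sketch --- verifying both halves of the sensibility biconditional, checking the arrow equation for $\Rea$ via extensionality and Lemma~\ref{lemma:test}, and correctly flagging the operational reflection step (that convergence of $\tau_\beta(M[\bareps/\vec x])$, resp.\ of $Q[\bareps/\vec x]$, forces convergence of $M$, resp.\ $Q$) as the one place needing a genuine syntactic argument --- but the approach is identical.
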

\begin{proof}
  Let $\Rea$ an $S^{D}$-realizer in $D$ and $\vec\alpha\in\llb Q\rrb$.
  Since for all $i\le n$, $\bareps_{\alpha_i}\in \mathcal{N}_{\alpha_i}^-\subseteq \Rea(\alpha_i)$, by Lemma~\ref{lemma:AdInt} there is $Q[\bareps_{\alpha_1}/x_1...\bareps_{\alpha_n}/x_n]\rta^*\epsilon$. In particular $Q$ is converging.\\
  Conversely, if $D$ is sensible for $\Lam{D}$, then $\Rea(\beta):=\{M\mid \exists \vec\alpha, (\vec\alpha,\beta)\in\llb M\rrb\}$ is a realizer.
\end{proof}

This means that all we have to do to prove the sensibility of a model is to look for a realizer! Unfortunately, finding such a realizer is equally difficult (which is not so surprising as both propositions are equivalent). However, if you consider that a realizer is an element of $S$ respecting the two equations of Definition~\ref{def:realizerIInD}, then we can try to make a systematic research in this set. More exactly, it is quite tempting to find such a realizer by a fixedpoint research. For this we have to turn this equations into function, but if the first one can be turned into a function using the extensionality, this is not feasible for the second one. Regardless, the second equation is natural as a structural equation and we can do our fixedpoint research inside $S^{\!\wedge}$:

\begin{lemma}\label{lemma:HnMeet}
  If we call semi $S$-realizer a function $\Rea$ such $\Rea(\alpha)\in S_\alpha$ and $\Rea(\alpha\wedge\beta)=\Rea(\alpha)\cap\Rea(\beta)$. The following function is defined over $S^{\!\wedge}$, the set of $S$-realizers:
  $$H(\Rea)(\beta)\ :=\ \bigcap_{(\gamma,\delta)\in \ext{D}(\beta)}\Bigl(\Rea(\gamma) \mapsto \Rea(\delta)\Bigr). $$
\end{lemma}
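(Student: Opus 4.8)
The statement is that $H$ is a well-defined endofunction of $S^{\!\wedge}$: starting from a semi $S$-realizer $\Rea$ I must verify that $H(\Rea)$ is again a semi $S$-realizer, that is (i) $H(\Rea)(\beta)\in S_\beta$ for every $\beta\in D$ and (ii) $H(\Rea)$ commutes with meets, $H(\Rea)(\alpha\wedge\beta)=H(\Rea)(\alpha)\cap H(\Rea)(\beta)$. The plan is to discharge (ii) and saturation combinatorially, and to devote the real work to the sandwich $\mathcal{N}_\beta^-\subseteq H(\Rea)(\beta)\subseteq\mathcal{N}_\beta^+$ hidden in (i); throughout, the filter-model arrow axiom of Definition~\ref{def:filterModels} and the reduction rules of Figure~\ref{fig:OS} are the two tools that I will play against each other.

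For (ii) I would use the distributive choice of witness, for which $\ext{D}(\alpha\wedge\beta)=\ext{D}(\alpha)\cup\ext{D}(\beta)$; then the defining intersection of $H(\Rea)(\alpha\wedge\beta)$ splits along this union into $H(\Rea)(\alpha)\cap H(\Rea)(\beta)$. The degenerate index $\beta=\omega$ is immediate since $\ext{D}(\omega)=\{(\gamma,\omega)\}$ and $\Rea(\omega)=\Lam{D}$, whence $H(\Rea)(\omega)=\Rea(\gamma)\mapsto\Lam{D}=\Lam{D}=S_\omega$. Saturation of $H(\Rea)(\beta)$ for $\beta\neq\omega$ is also routine: each $\Rea(\gamma)\mapsto\Rea(\delta)$ is $\tau(D)$-saturated (a backward step $M\rta M'$ propagates to $MN\rta M'N$ and is absorbed by the backward closure of $\Rea(\delta)$), and an arbitrary intersection of saturated sets is saturated.

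For the lower bound I take $M=\sum_i\bareps_{\beta_i}+L\in\mathcal{N}_\beta^-$ with $\beta\ge_D\bigwedge_i\beta_i$, fix $(\gamma,\delta)\in\ext{D}(\beta)$ and any $N\in\Rea(\gamma)$, and fire the $(\bar\tau)$ rule: $MN$ reduces to a sum of $\bar\tau_{\delta'}(\tau_{\gamma'}(N))$ indexed by $(\gamma',\delta')\in\ext{D}(\beta_i)$. The arrow axiom applied to $\gamma\cons\delta\ge_D\beta\ge_D\bigwedge_i\beta_i$ gives $\delta\ge_D\bigwedge\{\delta'\mid\gamma\le_D\gamma'\}$; for every such $\gamma'$ one has $\Rea(\gamma)\subseteq\Rea(\gamma')\subseteq\mathcal{N}_{\gamma'}^+$, so $\tau_{\gamma'}(N)\Da$ exposes the successful summand $\eps$, and $(\bar\tau+)$ extracts $\bareps_{\delta'}$ from $\bar\tau_{\delta'}(\tau_{\gamma'}(N))$. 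Collecting these, $MN$ reduces to $\sum_{\gamma\le_D\gamma'}\bareps_{\delta'}+L'$ with $\bigwedge\delta'\le_D\delta$, an element of $\mathcal{N}_\delta^-\subseteq\Rea(\delta)$, so saturation yields $MN\in\Rea(\delta)$, i.e. $M\in H(\Rea)(\beta)$. For the upper bound I take $M\in H(\Rea)(\beta)$ and $\beta'\ge_D\beta$, head-reduce $M$ to an abstraction, and fire $(\tau)$: $\tau_{\beta'}(M)$ reaches $\prod_{(\gamma',\delta')\in\ext{D}(\beta')}\tau_{\delta'}(M'[\bareps_{\gamma'}/x])$. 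Here the arrow axiom on $\gamma'\cons\delta'\ge_D\beta'\ge_D\beta$ gives $\delta'\ge_D\bigwedge\{\delta\mid(\gamma,\delta)\in\ext{D}(\beta),\,\gamma'\le_D\gamma\}$; since $\bareps_{\gamma'}\in\Rea(\gamma')\subseteq\Rea(\gamma)$ for each such $\gamma$, membership $M\in\Rea(\gamma)\mapsto\Rea(\delta)$ gives $M\,\bareps_{\gamma'}\in\Rea(\delta)$, and the meet law for $\Rea$ upgrades this to $M\,\bareps_{\gamma'}\in\bigcap\Rea(\delta)=\Rea(\bigwedge\delta)\subseteq\Rea(\delta')\subseteq\mathcal{N}_{\delta'}^+$. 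As $M\,\bareps_{\gamma'}\rta^*M'[\bareps_{\gamma'}/x]$, each factor converges, hence so does the product and thus $\tau_{\beta'}(M)$; the cases where $M$ head-reduces to a variable or to a sum of $\bar\tau$ converge at once (a stuck head is already a normal form, resp. the empty index $I'=\emptyset$ produces $\eps$).

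The main obstacle is the upper bound, and specifically the synchronization between the arrow characterization of Definition~\ref{def:filterModels} and the $(\tau)$ rule: one must recognize that feeding $M$ the canonical term $\bareps_{\gamma'}$ realizes the \emph{meet} $\bigwedge\{\delta\mid\gamma'\le_D\gamma\}$ predicted by the axiom rather than any single $\delta$, which is precisely the step that forces $\Rea$ to satisfy the meet law and explains why $S^{\!\wedge}$, and not the space of arbitrary maps, is the correct domain of $H$. A subordinate but pervasive care point is the convergence bookkeeping: I have to use that may-head-convergence means reachability of a successful $\eps$-summand, that $(\bar\tau+)$ and $(\pt\+)$ let this summand be extracted from under $\bar\tau$ and commuted through products, and that convergence is preserved along the auxiliary reductions $M\,\bareps_{\gamma'}\rta^*M'[\bareps_{\gamma'}/x]$; all of these rest on the confluence and standardization of $\Lam{D}$ recalled after Definition~\ref{def:LtauD-calculus}.
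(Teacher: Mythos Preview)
Your proof and the paper's invert their emphases. The paper treats saturation and the sandwich $\mathcal{N}_\beta^-\subseteq H(\Rea)(\beta)\subseteq\mathcal{N}_\beta^+$ as routine (it literally writes ``idem'') and devotes all of its work to meet preservation via the distributivity axiom, whereas you dispatch meet preservation in one line and put the real effort into the sandwich.

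The gap is in your meet step. The identity $\ext{D}(\alpha\wedge\beta)=\ext{D}(\alpha)\cup\ext{D}(\beta)$ is not part of the definition of a DEFiM: $\ext{D}$ is fixed data on the model, and distributivity in Definition~\ref{def:filterModels} only says that every $\alpha\ge_D\beta\wedge\gamma$ splits as $\alpha=\beta'\wedge\gamma'$ with $\beta'\ge_D\beta$ and $\gamma'\ge_D\gamma$. Since $H$ is defined through the given $\ext{D}$, you cannot simply pick a more convenient one. The paper's argument is the following: for $H(\Rea)(\alpha\wedge\beta)\subseteq H(\Rea)(\alpha)$, fix $(\gamma,\delta)\in\ext{D}(\alpha)$; from $\gamma\cons\delta\ge_D\alpha\wedge\beta=\bigwedge_{(\gamma',\delta')\in\ext{D}(\alpha\wedge\beta)}(\gamma'\cons\delta')$, distributivity produces a decomposition $\delta=\bigwedge_i\delta_i$ with each $\gamma\cons\delta_i\ge_D\gamma_i'\cons\delta_i'$ for some $(\gamma_i',\delta_i')\in\ext{D}(\alpha\wedge\beta)$; then $\Rea(\gamma)\mapsto\Rea(\delta)=\bigcap_i(\Rea(\gamma)\mapsto\Rea(\delta_i))\supseteq\bigcap_i(\Rea(\gamma_i')\mapsto\Rea(\delta_i'))\supseteq H(\Rea)(\alpha\wedge\beta)$, using monotonicity of $\Rea$ in both arguments of~$\mapsto$. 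The reverse inclusion runs the same argument with $\ext{D}(\alpha)\cup\ext{D}(\beta)$ as the comparison family. This is precisely where ``the distributivity condition on our DEFiM'' in the short proof is invoked.

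Your operational treatment of the sandwich is far more explicit than the paper's and is essentially on the right track; one care point in your lower bound is that $\tau_{\gamma'}(N)\Da$ only guarantees a may-head-normal form $\Pi_k\tau_{\alpha_k}(x_k\vec M_k)+Q$, not an $\eps$ summand, so $\bar\tau_{\delta'}(\tau_{\gamma'}(N))$ need not reduce to $\bareps_{\delta'}+\cdots$ as you claim but only to $\bar\tau_{\delta'}(\Pi_k\tau_{\alpha_k}(x_k\vec M_k))+\cdots$.
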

\begin{proof}
  \longv{
    if $\Rea\in S^{\!\wedge}$, then:
    \begin{itemize}
    \item For all $\alpha$, $H(\Rea)(\alpha)$ is saturated since function spaces and intersections (even infinite) of saturated set are saturated,
    \item For all $\alpha$, $\mathcal N ^-_\alpha \subseteq H(\Rea)(\alpha) \subseteq \mathcal N ^+_\alpha$ : idem,
    \item For all $\alpha,\beta$, $H(\Rea)(\alpha\wedge\beta) \subseteq H(\Rea)(\alpha)\cap H(\Rea)(\beta)$: Let $(\gamma,\delta)\in \ext{D}(\alpha)$. Since $(\gamma\cons\beta)\ge_D\bigwedge_{(\gamma',\beta')\in \ext{D}(\alpha\wedge\beta)}(\gamma'\cons\beta')$, we can use the distributivity to get a decomposition $\delta=\bigwedge_i \delta_j$ such that for all $i$, $\gamma\cons\delta_i\ge_D \gamma_i'\cons\delta_i'$ for some $(\gamma_i',\delta_i')\in \ext{D}(\alpha\wedge\beta)$. This means that $\Bigl(\Rea(\gamma) \mapsto \Rea(\delta)\Bigr)=\bigcup_i\Bigl(\Rea(\gamma) \mapsto \Rea(\delta_i)\Bigr)\subseteq \bigcup_i\Bigl(\Rea(\gamma_i') \mapsto \Rea(\delta_i')\Bigr)$ since $\Rea(\gamma)\supseteq\Rea(\gamma'_i)$ and $\Rea(\delta_i)\subseteq\Rea(\delta_i')$, we conclude since each  $(\gamma_i',\delta_i')\in \ext{D}(\alpha\wedge\beta)$. 
    \item For all $\alpha,\beta$, $H(\Rea)(\alpha\wedge\beta) \supseteq H(\Rea)(\alpha)\cap H(\Rea)(\beta)$: Let $(\gamma,\delta)\in \ext{D}(\alpha\wedge\beta)$. Since $(\gamma\cons\beta)\ge_D\bigwedge_{(\gamma',\beta')\in \ext{D}(\alpha)\cup \ext{D}(\beta)}(\gamma'\cons\beta')$, we can use the distributivity to get a decomposition $\delta=\bigwedge_i \delta_j$ such that for all $i$, $\gamma\cons\delta_i\ge_D \gamma_i'\cons\delta_i'$ for some $(\gamma_i',\delta_i')\in \ext{D}(\alpha)\cup \ext{D}(\beta)$. This means that $\Bigl(\Rea(\gamma) \mapsto \Rea(\delta)\Bigr)=\bigcup_i\Bigl(\Rea(\gamma) \mapsto \Rea(\delta_i)\Bigr)\subseteq \bigcup_i\Bigl(\Rea(\gamma_i') \mapsto \Rea(\delta_i')\Bigr)$ since $\Rea(\gamma)\supseteq\Rea(\gamma'_i)$ and $\Rea(\delta_i)\subseteq\Rea(\delta_i')$, we conclude since each  $(\gamma_i',\delta_i')\in \ext{D}(\alpha)\cup \ext{D}(\beta)$.
    \end{itemize}
  }
  \shortv{By using the distributivity condition on our DEFiM.}
\end{proof}

Now, all we need is to find a fixedpoint... which easier said than done. In fact, interesting examples will have to be dealt using strong fixedpoint theorems. Indeed, fixedpoint {\em \`a la} Curry are not sufficient, even Tarski's fixedpoint are often not enough. Among order theoretic fixedpoint theorems, the following version is the most general that the author could find.\footnote{To the author knowledge, it is the first time it has been enunciated formally.}

\begin{definition}
  The {\em lexicographic stratification} of a set $X$ is a sequence $(\sqsubseteq_n)_{n\in\kappa}$ of preorders, for $\kappa$ is any cardinal, verifying:
  \begin{itemize}
  \item $\bigcap (\equiv_n)$ is the equality in $X$, where $(\equiv_n):=(\sqsubseteq_n)\cap(\sqsupseteq_n)$,
  \item for any $n$ in $\kappa$, $(\sqsubseteq_n)\subseteq (\equiv_{\downarrow n})$, where
    $ (\equiv_{\downarrow n})\ :=\ \displaystyle\bigcap_{m<n} (\equiv_m)\ ,$
  \item for all $U\in \bigslant X {\equiv_{\downarrow n}}$, the poset $(\bigslant U {\equiv_n} , \sqsubseteq_n)$ is a dcpo.
  \end{itemize}
  
  \noindent
  A function $f$ on such a stratification is {\em lexicographically-monotonous} whenever:
  \begin{itemize}
  \item $f$ respect the equivalences $(\equiv_{\downarrow n})$, {\em i.e.}, for any $n\in \kappa$ and any pair $x\longv{,}\shortv{\equiv_{\downarrow n}}y$\longv{:
    $$ (x \equiv_{\downarrow n} y) \quad \Rta \quad (f(x) \equiv_{\downarrow n} f(y))\ ,$$
    }\shortv{ we have $f(x) \equiv_{\downarrow n} f(y)$,}
  \item $f$ is ${\downarrow }n$-monotonous over $(\equiv_{\downarrow m})_{m \prec n}$-fixedpoint, {\em i.e.}, for any $n\longv{\in \kappa}$ and any pair $x\equiv_{\downarrow}y \in X$\longv{:
    $$ f(x) \equiv_{\downarrow n} x \quad \Rta\quad \left(\ x\sqsubseteq_n y \quad \Rta \quad f(x) \sqsubseteq_n f(y)\ \right)\ . $$
    }\shortv{ if $f(x) \equiv_{\downarrow n} x$ and $x\sqsubseteq_n y$ then $f(x) \sqsubseteq_n f(y)$.}
  \end{itemize}
\end{definition}


\begin{prop}
  Any lexicographically-monotonous function on a lexicographically-stratified set has a fixedpoint.
\end{prop}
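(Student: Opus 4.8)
The plan is to construct the fixedpoint by transfinite recursion along $\kappa$, pinning down one stratum at a time. Concretely, I would build a decreasing family $(C_n)_{n\le\kappa}$ of nonempty subsets of $X$ such that each $C_n$ is a single $\equiv_{\downarrow n}$-class and $f(C_n)\subseteq C_n$. The payoff is immediate once the recursion reaches $\kappa$: by the first stratification axiom $\equiv_{\downarrow\kappa}=\bigcap_{m<\kappa}\equiv_m$ is the equality of $X$, so $C_\kappa$ is a singleton $\{x^*\}$, and then $f(x^*)\in f(C_\kappa)\subseteq C_\kappa=\{x^*\}$ forces $f(x^*)=x^*$. The base case is $C_0=X$, which is the unique $\equiv_{\downarrow 0}$-class (the empty intersection defining $\equiv_{\downarrow 0}$ is the total relation) and is trivially $f$-stable.

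For the successor step I would turn $C_n$ into a fibre on which $f$ acts monotonely. Since $C_n$ is an $\equiv_{\downarrow n}$-class and $f(C_n)\subseteq C_n$, every $x\in C_n$ satisfies $f(x)\equiv_{\downarrow n}x$; this is exactly the hypothesis enabling the second monotonicity clause, so $x\sqsubseteq_n y$ yields $f(x)\sqsubseteq_n f(y)$ for $x,y\in C_n$. In particular $f$ preserves $\equiv_n$ on $C_n$ and descends to a monotone endomap $\bar f_n$ of the quotient $C_n/{\equiv_n}$, which is a (pointed) dcpo by the third stratification axiom. A monotone endofunction of a pointed dcpo has a least fixedpoint (Pataraia's theorem; a Bourbaki--Witt argument also suffices), and I would take $C_{n+1}$ to be this least fixedpoint class. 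It is an $\equiv_n$-class inside the $\equiv_{\downarrow n}$-class $C_n$, hence an $\equiv_{\downarrow n+1}=\equiv_{\downarrow n}\cap\equiv_n$-class; it is nonempty; and $\bar f_n(C_{n+1})=C_{n+1}$ gives $f(C_{n+1})\subseteq C_{n+1}$.

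At a limit ordinal $\lambda$ I would set $C_\lambda=\bigcap_{n<\lambda}C_n$. Checking that this is again an $\equiv_{\downarrow\lambda}$-class and is $f$-stable is routine: two elements of the intersection agree at every level $m<\lambda$ (each $C_{m+1}$ sits inside one $\equiv_m$-class), while conversely anything $\equiv_{\downarrow\lambda}$-related to a point of $C_\lambda$ lies in every $C_n$, and $f$-stability is inherited levelwise.

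The hard part is that $C_\lambda$ must be shown nonempty: a decreasing transfinite chain of nonempty classes can have empty intersection, and nothing in the bare axioms prevents this for an arbitrary choice of successor fixedpoints. This is where the canonical choice of the \emph{least} fixedpoint at each successor, together with the pointedness of the fibres, has to be exploited. My plan is to strengthen the induction hypothesis so as to carry along a coherent family of representatives: the least fixedpoint of $\bar f_n$ is the supremum of the transfinite iterates $\bot,\bar f_n(\bot),\dots$ of the bottom of the fibre $C_n/{\equiv_n}$, and these bottoms let one \emph{fill in} the not-yet-constrained strata canonically. At a limit $\lambda$ the representatives chosen below $\lambda$ stabilise into a single branch determining one $\equiv_{\downarrow\lambda}$-class, and the fibre over that branch, being a pointed dcpo, supplies a bottom element lying in $C_\lambda$, witnessing $C_\lambda\neq\emptyset$. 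Making this coherence precise, i.e. showing that the least-fixedpoint choices glue across limits into an actual element of $X$, is the technical crux of the argument and the step I expect to demand the most care.
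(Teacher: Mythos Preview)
Your approach is essentially identical to the paper's: transfinite induction along $\kappa$, taking least fixedpoints in the quotient dcpos at successor stages and intersections at limits, terminating in a singleton at stage~$\kappa$. The paper's proof is a four-line sketch that dispatches the limit case with the phrase ``so that we can take limits''; you are right to flag nonemptiness at limits as the delicate point, and your outline is already more careful than the paper's own argument.
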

\begin{proof}\longv{\todo{rewrite with more details}}
  By induction on $n\in\kappa$. Suppose given $X_{\downarrow n}\in \bigslant S {\equiv_{\downarrow n}}$ such that $f(X_{\downarrow n}) \equiv_{\downarrow n} X_{\downarrow n}$, then $f$ make sens and is monotonous in the dcpo $(\bigslant {X_{\downarrow n}} {\equiv_n} , \sqsubseteq_n)$. Thus it has a least fixedpoint $X_n$. Notice that $X_{\downarrow n}\supseteq X_n$ so that we can take limits. In the end, we get a fixedpoint $X_{\downarrow\kappa}\in \bigslant D {\bigcup\!\!\equiv_n} = D$. 
\end{proof}

Now that we have our fixedpoint theorem, we have to link it to the considered filter model and stratify $S^{\!\wedge}$. Since we are looking for a condition on the atoms (or the intersection types) of our model, it is only natural to try to stratify $S^{\!\wedge}$ along those. However, this may be a bit arbitrary, which in turn may be one of the reason of our ultimate incompleteness...

\begin{definition}
A preorder $(D,\preceq)$ is said \newdefprem{well founded}{preorder} if the quotiented poset $\bigslant{(D,\preceq)}\simeq$ over the induced equivalence $\simeq := (\preceq\cap\succeq)$ is well founded. It is said total if any two element are comparable.
\end{definition}

\begin{definition}
  A DEFiM $D$ is said \emph{$S$-realizable by stratification} if 
  \begin{itemize}
  \item for every $\alpha\in D$, there is a dcpo $(\subseteq_\alpha)$ over $S_{\alpha}$,
  \item there is a total and well founded preorder $(S,\preceq)$ on $D$, 
  \item $S^{\!\wedge}$ is lexicographically stratified by $(\sqsubseteq_{a})_{a\in D\raisebox{0em}{$\!_{/\!\simeq}$}}$ defined by:
    $$ \Rea\sqsubseteq_{[\alpha]} \Reb \quad\text{iff} \quad 
    \left\{
      \begin{matrix} 
        \forall\beta\prec\alpha,\ &\Rea(\beta) = \Reb(\beta) \\  
        \forall\beta\simeq\alpha,\ &\Rea(\beta) \subseteq_\beta \Reb(\beta)
      \end{matrix}
    \right.
    $$
  \item $H$ is lexicographically-monotonous.
  \end{itemize}
\end{definition}

\begin{remark}
  \begin{itemize}
  \item Remark that $H$ may not be monotonous, and will not be in general.
  \item More important, notice that for $(\sqsubseteq_{a})_{a\in D\raisebox{0em}{$\!_{/\!\simeq}$}}$ to be a stratification, we only need to prove the last condition; {\em i.e.}, that for all $X\in \bigslant {S^{\!\wedge}} {\equiv_{\downarrow a}}$, the poset $(\bigslant X {\equiv_a} , \sqsubseteq_a)$ is a dcpo. This property says that for any sequence $(\Rea(\beta))_{\beta\prec\alpha}\in (S_{\tau(D)}^{\beta})_{\beta\prec\alpha}$ that can be extended as an element of $S^{\!\wedge}$, the set of possible extensions for the class $a$ forms a dcpo.
  \item Assuming the axiom of choice, the preorder $\preceq$ may not have to be total.
  \end{itemize}
\end{remark}

\begin{theorem}
  Any DEFiM $D$ that is $S$-realizable by stratification has a $S$-realizer in $D$.
\end{theorem}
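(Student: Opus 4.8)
The plan is to produce the realizer as a fixedpoint of the operator $H$ of Lemma~\ref{lemma:HnMeet} and then to check that any such fixedpoint is in fact a full $S$-realizer in the sense of Definition~\ref{def:realizerIInD}. By Lemma~\ref{lemma:HnMeet}, $H$ is a well-defined endofunction of $S^{\!\wedge}$, the set of semi $S$-realizers. The hypothesis that $D$ is $S$-realizable by stratification packages exactly the data required by the fixedpoint Proposition proved above: the family $(\sqsubseteq_{[\alpha]})_{[\alpha]\in D_{/\simeq}}$ is a lexicographic stratification of $S^{\!\wedge}$, and $H$ is lexicographically-monotonous for it. Hence that Proposition applies verbatim and yields some $\Rea\in S^{\!\wedge}$ with $H(\Rea)=\Rea$.

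It then remains to promote this fixedpoint to a genuine realizer. Since $\Rea\in S^{\!\wedge}$ it already satisfies $\Rea(\alpha\wedge\beta)=\Rea(\alpha)\cap\Rea(\beta)$, so only the arrow equation $\Rea(\alpha\cons\beta)=\Rea(\alpha)\mapsto\Rea(\beta)$ is at stake. Unfolding the fixedpoint equation at the point $\alpha\cons\beta$ gives
\[
\Rea(\alpha\cons\beta)\ =\ H(\Rea)(\alpha\cons\beta)\ =\ \bigcap_{(\gamma,\delta)\in\ext{D}(\alpha\cons\beta)}\bigl(\Rea(\gamma)\mapsto\Rea(\delta)\bigr).
\]
The task is therefore to show that this intersection, taken over the (non-canonical) decomposition $\ext{D}(\alpha\cons\beta)$ of the arrow, collapses to the single function space $\Rea(\alpha)\mapsto\Rea(\beta)$. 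I would carry this out exactly as in the proof of Lemma~\ref{lemma:HnMeet}: using that $\Rea$ is monotone for $\ge_D$ turned into $\subseteq$, that $\mapsto$ is antitone in its source, monotone in its target and commutes with intersections in its target, and crucially the distributivity of $D$ together with the arrow characterization of Definition~\ref{def:filterModels}, one rewrites each factor $\Rea(\gamma)\mapsto\Rea(\delta)$ in terms of the pair $(\alpha,\beta)$ and verifies both inclusions. This is precisely the computation that rendered the choice of $\ext{D}$ irrelevant in Lemma~\ref{lemma:HnMeet}, which is why distributivity was imposed on $D$ in the first place.

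For intuition on why fixedpoints of $H$ are the right objects, the converse passage is immediate: if $\Rea$ is any realizer, then extensionality gives $\beta=\bigwedge_{(\gamma,\delta)\in\ext{D}(\beta)}\gamma\cons\delta$, and the two realizer equations yield $\Rea(\beta)=\bigcap_{(\gamma,\delta)}\bigl(\Rea(\gamma)\mapsto\Rea(\delta)\bigr)=H(\Rea)(\beta)$, so every realizer is a fixedpoint of $H$. The content of the theorem is the reverse implication, recovered above. Once the arrow equation is established, $\Rea$ satisfies both clauses of Definition~\ref{def:realizerIInD} and is the desired $S$-realizer.

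The main obstacle is the final verification (fixedpoint $\Rightarrow$ arrow equation), namely that the $\ext{D}$-indexed intersection equals the plain function space. The delicate point is handling those pairs $(\gamma,\delta)\in\ext{D}(\alpha\cons\beta)$ with $\gamma\not\le\alpha$, for which the inclusion $\Rea(\alpha)\mapsto\Rea(\beta)\subseteq\Rea(\gamma)\mapsto\Rea(\delta)$ is not pointwise and must instead be extracted from the meet decomposition of $\alpha\cons\beta$ via distributivity; everything else (invoking the Proposition and the meet-closure already guaranteed by $\Rea\in S^{\!\wedge}$) is bookkeeping.
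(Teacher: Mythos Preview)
Your plan is exactly the intended one: the paper states this theorem without proof because, given the surrounding setup, it is meant to be immediate---apply the fixedpoint Proposition to the lexicographically-monotonous $H$ on the stratified $S^{\!\wedge}$, then observe that a fixedpoint of $H$ in $S^{\!\wedge}$ is already a full $S$-realizer.

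One small correction on your closing worry. The ``delicate'' case $(\gamma,\delta)\in\ext{D}(\alpha\cons\beta)$ with $\gamma\not\le_D\alpha$ is in fact trivial and does \emph{not} require distributivity: by the arrow axiom of Definition~\ref{def:filterModels} applied to the single pair $(\alpha,\beta)$, the inequality $\gamma\cons\delta\ge_D\alpha\cons\beta$ together with $\gamma\not\le_D\alpha$ forces $\delta\ge_D\omega$, i.e.\ $\delta=\omega$, whence $\Rea(\gamma)\mapsto\Rea(\delta)=\Lam D$ and this factor is vacuous. The remaining factors have $\gamma\le_D\alpha$ and $\delta\ge_D\beta$, so $\Rea(\alpha)\mapsto\Rea(\beta)\subseteq\Rea(\gamma)\mapsto\Rea(\delta)$ by plain (anti)monotonicity of $\mapsto$. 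The reverse inclusion uses only the filter axiom in the other direction ($\beta\ge_D\bigwedge_{\{(\gamma,\delta)\mid\alpha\le\gamma\}}\delta$) and the meet equation for $\Rea$. Distributivity is genuinely needed earlier, in Lemma~\ref{lemma:HnMeet}, to keep $H$ inside $S^{\!\wedge}$; for the promotion step itself the filter-model axiom suffices.
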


\newcommand\valuation[1]{\mathcal V (#1)}

\subsection{Positive stratification}

The notion of ``realizability by stratification'' is still too abstract; it particular, it intrinsically refers to syntactical aspects of the considered calculi. We had like a property only referring to the internal structure of the type system without any syntactic notion. 

In order to achieve this goal, we need yet another change of perspective, which in turn introduce yet another source of arbitrary. Nonetheless, positive stratification include all filter models proven sensible in the literature. We will discuss at the end of those that are conjectured sensible but not proven by lake of adequate techniques.

\begin{definition} \label{def:wpos}
  A (partial) DEFiM $D$ is \newdef{stratified positive} (SP for short) if there exist 
  \begin{itemize}
  \item a valuation $\mathcal{V}$, called polarity, from $D-\{\omega\}$ in the Booleans $\{\true,\false\}$, 
  \item a well founded and total preorder $\preceq$ in $D$ with $\omega$ as a bottom, 
  \end{itemize}
  such that for all $\gamma\in D$ and all $(\alpha,\beta)\in \ext{D}(\gamma)$:
  \begin{align*}
    \gamma &\succeq \beta,   & 
    \gamma \simeq \beta\ &\Rta\ \mathcal{V}(\gamma)=\mathcal{V}(\beta), \\
    \gamma &\succeq \alpha,    &
    \gamma \simeq \alpha\ &\Rta\ \mathcal{V}(\gamma)\neq \mathcal{V}(\alpha),
  \end{align*}
  (where $\simeq := (\preceq\cap\succeq)$ is the equivalence relation induced by the preorder)\\
  and such that:
  \begin{align*}
    \alpha\wedge\beta&\preceq \gamma \text{ for $\gamma=\alpha$ or for $\gamma=\beta$} &  (\alpha\wedge\beta) \prec \alpha  &\Rta (\alpha\wedge\beta) = \beta
  \end{align*}
  Moreover, we also require that the polarity is coherent with the intersections on $\simeq$-equivalence classes:
  $$ \alpha\simeq\beta \quad \Rta\quad \valuation{\alpha\wedge\beta} = \valuation \alpha\wedge\valuation\beta.$$
\end{definition}

This condition can be seen as a stratification given by $\preceq$, where the quotient $\bigslant D {\simeq}$ represents the different levels of the stratification, each level endowed with a positive polarity $\mathcal{V}$. This stratification improves the condition of~\cite{Ber00} that only considers completions of positive partial DEFiM.\footnote{More exactly it considers a subclass of DEFiM called K-models.} This condition is the invariant by completion, which simplify the proof of stratified positivity of DEFiMs of Example~\ref{example:1} (save for $P_\infty$).

\begin{prop}
  Assuming the axiom of choice, in the definition of stratified positive DEFiM, the preorder $\preceq$ can be taken non-total without lost of generality.
\end{prop}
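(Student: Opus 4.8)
The plan is to show that a \emph{non-total} witness can always be upgraded to a total one, keeping everything else fixed. So suppose $D$ carries a polarity $\mathcal V$ and a well-founded but possibly non-total preorder $\preceq$ (with $\omega$ at the bottom) satisfying all the clauses of Definition~\ref{def:wpos} except totality. I keep $\mathcal V$, the meet $\wedge$ and $\ext{D}$ untouched and only refine $\preceq$ into a total well-founded preorder $\preceq'$ with the \emph{same} induced equivalence, $\simeq'=\simeq$. It is convenient to work on the quotient $\bigslant D \simeq$, on which $\preceq$ induces a well-founded partial order with least class $[\omega]$; any total well-founded order on $\bigslant D \simeq$ extending it pulls back to the desired $\preceq'$ on $D$.

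First I would isolate which clauses are at risk. Every clause stated through $\succeq$ or through a non-strict $\preceq$ (namely $\gamma\succeq\alpha$, $\gamma\succeq\beta$, and $\alpha\wedge\beta\preceq\alpha$ or $\alpha\wedge\beta\preceq\beta$) is monotone in the relation, hence survives any extension; every clause stated through $\simeq$ (the two polarity clauses and the $\wedge$-coherence of $\mathcal V$) is unaffected because $\simeq'=\simeq$. The sole fragile clause is the strict one, $(\alpha\wedge\beta)\prec\alpha\Rightarrow(\alpha\wedge\beta)=\beta$: refining the order can turn a previously $\preceq$-incomparable pair into a strict one, and a freshly created inequality $(\alpha\wedge\beta)\prec'\alpha$ is fatal exactly when $\alpha\wedge\beta\neq\beta$. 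Thus the whole problem reduces to choosing the extension so that $(\alpha\wedge\beta)\not\prec'\alpha$ whenever $\alpha\wedge\beta\neq\beta$ (and symmetrically for $\beta$).

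To enforce this I record, for every pair with $\delta:=\alpha\wedge\beta$ that is $\preceq$-incomparable to $\alpha$ and satisfies $\delta\neq\beta$, the forced orientation $[\alpha]\sqsubseteq[\delta]$ (and symmetrically $[\beta]\sqsubseteq[\delta]$ when $\delta\neq\alpha$ and $\delta$ is incomparable to $\beta$); all other potentially dangerous cases are already resolved by the hypotheses, since $\delta\neq\beta$ forces $\neg(\delta\prec\alpha)$, so whenever $\delta\preceq\alpha$ one in fact has $\delta\simeq\alpha$ and nothing new is created. Let $\mathcal R$ be the transitive closure of $\preceq$ together with all these forced pairs, read on $\bigslant D \simeq$. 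The remaining steps are: (1) prove $\mathcal R$ is antisymmetric, i.e. it merges no two distinct $\simeq$-classes; (2) prove $\mathcal R$ is well-founded; (3) invoke the axiom of choice, in the form ``every well-founded partial order extends to a well-order'', to linearise $\mathcal R$ into a total well-founded order, and pull it back to $D$; (4) re-check the clauses of Definition~\ref{def:wpos}, the strict meet clause now holding by construction of the forced orientations.

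The main obstacle is establishing that $\mathcal R$ is still a well-founded order (steps~(1)--(2)), and by the usual argument a cycle through distinct classes would, when iterated, yield an infinite descending chain, so (1) follows from (2). The favourable structural fact is that each forced edge $[\alpha]\to[\delta]$ has $\alpha\ge_D\alpha\wedge\beta=\delta$ with $\alpha\neq\delta$, so forced edges strictly descend the semilattice order $\ge_D$; this immediately rules out any cycle built from forced edges alone, and the symmetric strict clauses rule out a forced edge being directly reversed by another. The genuinely delicate point, which I expect to be the technical heart, is to reconcile the well-foundedness of $\preceq$ with the order $\ge_D$ along chains that \emph{interleave} ordinary $\preceq$-descents with forced descents, since $\ge_D$ is not itself assumed well-founded; here I would use the first meet clause in an essential way, namely that in the forced case $\delta\preceq\beta$ holds in the original $\preceq$, so that a forced step cannot escape upward in $\preceq$ and the interleaving stays controlled by the well-founded rank of $\preceq$.
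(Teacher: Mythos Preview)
The paper states this proposition without proof, so there is nothing to compare your attempt against directly; I can only assess your argument on its own merits.

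Your overall plan is the natural one: keep $\mathcal V$ fixed, pass to the quotient $D/{\simeq}$, and extend the well-founded partial order there to a well-order using AC, so that $\simeq'=\simeq$ and only the strict meet clause $(\alpha\wedge\beta)\prec\alpha\Rightarrow(\alpha\wedge\beta)=\beta$ is at risk. Your case analysis is also correct, and can be sharpened: from the two meet clauses together one gets that whenever $\alpha\wedge\beta\notin\{\alpha,\beta\}$, necessarily $\alpha\wedge\beta\simeq\alpha$ or $\alpha\wedge\beta\simeq\beta$; so the only dangerous situation is $\alpha\wedge\beta\simeq\beta$, $\alpha\wedge\beta\not\simeq\alpha$, $\alpha\wedge\beta\neq\beta$, forcing the orientation $[\alpha]\to[\beta]$.

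The gap is exactly where you flag it: you do not prove that the transitive closure $\mathcal R$ of $\preceq$ together with the forced edges is well-founded, and the heuristic you offer at the end does not work. You write that ``in the forced case $\delta\preceq\beta$ holds in the original $\preceq$, so that a forced step cannot escape upward in $\preceq$''. But in the forced case one has $\delta=\alpha\wedge\beta\simeq\beta$, so $[\delta]=[\beta]$: the target class of the forced edge is precisely $[\beta]$, and the observation $\delta\preceq\beta$ gives no control whatsoever on the $\preceq$-rank of $[\beta]$ relative to $[\alpha]$, since $\alpha$ and $\beta$ are $\preceq$-incomparable by hypothesis. Likewise, your remark that forced edges strictly descend in $\ge_D$ is correct but does not help: the descent is from the specific representative $\alpha$ to the specific representative $\alpha\wedge\beta$, and when you compose two forced edges you cannot keep using the same representatives, so no $\ge_D$-chain is produced. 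As written, nothing prevents an infinite chain of forced edges between pairwise $\preceq$-incomparable classes. To close the argument you would need an actual invariant (combining the $\preceq$-rank with some structural information from $\wedge$ and $\ext D$) that strictly decreases along every $\mathcal R$-step; alternatively, one could bypass the totalisation entirely and rerun the realizability construction with a non-total $\preceq$, using AC only to well-order each antichain when building the lexicographic stratification.
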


\begin{prop}\label{prop:weakposmod}
  A partial DEFiM $E$ is stratified positive iff its completion $\Comp{E}$ is stratified positive.
\end{prop}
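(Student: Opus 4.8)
The statement is an equivalence, and I treat its two directions separately; the backward implication is routine and the forward one carries all the weight.

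For the implication ``$\Comp{E}$ stratified positive $\Rightarrow$ $E$ stratified positive'', I simply restrict the witnesses. If $(\mathcal{V},\preceq)$ realises stratified positivity of $\Comp{E}$, then, since $E$ sits inside $\Comp{E}$ with $\ext{\Comp{E}}(\gamma)=\ext{E}(\gamma)$ for every $\gamma\in E$ and with $E$ closed under $\wedge$, the restrictions of $\mathcal{V}$ and of $\preceq$ to $E$ are again a polarity and a total preorder with bottom $\omega$. Well-foundedness passes to any subset, and each defining (in)equality of Def.~\ref{def:wpos} for a given $\gamma\in E$ only refers to the pairs in $\ext{E}(\gamma)$ and to finite meets of elements of $E$; all of these already live in $E$, so every condition that holds in $\Comp{E}$ holds a fortiori in $E$.

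For the converse I extend $(\mathcal{V},\preceq)$ along the inductive completion $\Comp{E}=\bigcup_n E_n$ (adding, at each stage, the missing arrows and the missing meets, then quotienting by extensional equality $\equiv$). The plan is to rank every element by its nesting over $E$: an element already in $E$ keeps the stratum given by $\preceq$; a freshly created arrow $\alpha\cons\beta$, whose extensional decomposition is the singleton $\{(\alpha,\beta)\}$, is placed in a new $\simeq$-class strictly above both $\alpha$ and $\beta$; and a meet receives the $\preceq$-greatest stratum occurring among its factors. Polarities are extended accordingly: on a fresh arrow the two implications of Def.~\ref{def:wpos} become vacuous, since its components are strictly below it, so I may choose $\valuation{\alpha\cons\beta}$ freely; on a meet I set $\valuation{\alpha\wedge\beta}=\valuation{\alpha}\wedge\valuation{\beta}$ among the factors sharing its stratum, as dictated by the last coherence requirement.

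It then remains to verify the three clauses of Def.~\ref{def:wpos}, and here lies the real difficulty. The domination clauses $\gamma\succeq\alpha$, $\gamma\succeq\beta$ and the polarity implications hold by construction, being either vacuous (fresh arrows) or inherited from $E$. For the two meet clauses, placing a meet at the greatest stratum of its factors makes ``$\alpha\wedge\beta\preceq\gamma$ for some $\gamma\in\{\alpha,\beta\}$'' immediate, and turns the hypothesis ``$\alpha\wedge\beta\prec\alpha$'' true only when one factor strictly $\ge_D$-absorbs the other across two strata, in which case the meet is genuinely equal to that factor, so ``$\alpha\wedge\beta=\beta$'' holds on the nose. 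Totality is preserved since new classes are always inserted into an already total order, and well-foundedness since the strata can be realised as ordinals with every fresh class sitting strictly above its finitely many generators, leaving no infinite descending chain. The main obstacle is the single remaining point: that the whole assignment is invariant under the extensional quotient $\equiv$, i.e.\ that two $\equiv$-equal expressions receive the same stratum and polarity. This is exactly where I expect to spend the effort, and where distributivity of $D$ (Def.~\ref{def:filterModels}) is indispensable: distributivity forces the decomposition of a meet to split coherently along the strata, which is what prevents $\equiv$ from identifying elements that the rank-and-polarity recipe would otherwise separate.
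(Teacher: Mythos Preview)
The paper states this proposition without proof, so there is nothing to compare your attempt against.

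Your backward direction is fine: restriction of the witnesses to $E$ works because $E$ is closed under $\wedge$, because $\ext{\Comp E}$ agrees with $\ext{E}$ on $E$, and because well-foundedness and totality pass to subsets.

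For the forward direction, your plan is reasonable and your identification of the crux---invariance of the rank/polarity assignment under the extensional quotient $\equiv$---is correct. Two remarks. First, rather than inserting each fresh arrow into its own new $\simeq$-class, it is tidier to stratify by the completion index itself: by construction every $\gamma\in E_{n+1}\setminus E_n$ has $\ext{}(\gamma)\subseteq E_n\times E_n$, so placing the entire slice $E_{n+1}\setminus E_n$ strictly above $E_n$ already makes both polarity implications vacuous on all new elements and keeps the preorder total and well-founded for free. Second, the meet clause $(\alpha\wedge\beta)\prec\alpha\Rightarrow(\alpha\wedge\beta)=\beta$ is the place where your sketch is genuinely incomplete: with the ``max-stratum'' rule, the hypothesis can become true when the meet of a new element $\alpha$ with an old $\beta$ falls, via the $\equiv$-quotient, back into an earlier $E_m$ without being literally equal to $\beta$. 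Ruling this out is exactly the $\equiv$-invariance problem you flag, and it does require distributivity together with the freeness of the completion (so that identifications in $\Comp E$ are forced by identifications already visible in $E$); you should expect to argue by induction on $n$ that no such collapse occurs unless one factor $\ge_D$-dominates the other.
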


\begin{example}\label{ex:WP}
  The models of Example~\ref{example:1} are stratified positive except $P_\infty$ and $U_\infty$:
  \begin{itemize}
  \item \Dinf is SP: The stratified positivity is given by $\mathcal{V}(*)=\false$ and $\omega\prec *$.
  \item $\Dinf^*$ is SP: Idem, we set $\mathcal{V}(q)=\true$, $\mathcal{V}(p)=\false$ and $\omega\prec p\simeq q$.
  \item $Z_\infty$ is SP: Idem, we set $\mathcal{V}(\underline{2n})=\true$, $\mathcal{V}(\underline{2n+1})=\false$ and $\omega\prec \underline m\simeq \underline n$ for all $m$ and $n$.
  \item $P_\infty$ is not SP: Since $*\cons * = *$, they are $\preceq$-equivalent and with the same polarity, contradicting the second implication in Definition~\ref{def:wpos}.
  \item $U_\infty$ is not SP: Since $\underline n = \underline {n\+1} \cons \underline {n\+1}$, we must have $\underline {n} \succ \underline {n\+1}$, which creates a non well-founded chain.
  \end{itemize}
\end{example}



\begin{lemma}\label{lemma:dcpo}
  Let $\Rea\in (S_\alpha)_{\alpha\prec \delta}$ such that $\Rea(\alpha\wedge\beta)=\Rea(\alpha)\cap\Rea(\beta)$ for $\alpha,\beta\prec \delta$.\\
  The set of extensions of $\Rea$ to all $\alpha\preceq\delta$, ordered by $\sqsubseteq_{\mathcal V}$, is a dcpo with a sup $(\bigvee_i\Rea_i)(\alpha)$ defined by induction on~$\preceq$: 
  $$
  (\bigvee_i\Rea_i)(\alpha) = 
  \left\{
    \begin{matrix} 
      \mathcal N^-_\alpha \cup \displaystyle\bigcup_{\gamma\le_D \alpha, \gamma\prec\delta}\Rea(\gamma) \cup \bigcup\Rea_i(\alpha) &\text{ whenever }\mathcal V(\alpha)=\false,\\[0.2em]
      \mathcal N^+_\alpha \cap \displaystyle\bigcap_{\gamma\ge_D \alpha, \gamma\prec\delta}\Rea(\gamma) \cap \bigcap\Rea_i(\alpha) &\text{ whenever } \mathcal V(\alpha)=\true.
    \end{matrix}
  \right.
  $$
  in particular, $(\bigvee_i\Rea_i)(\alpha)=\Rea(\alpha)$ for $\alpha\prec \delta$.
\end{lemma}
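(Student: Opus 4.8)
The plan is to present the set of extensions as a sub-poset of a product of (reverse-)inclusion orders, and to read the displayed formula as the least upper bound of a directed family. Fix $\delta$ and let $\mathcal E$ be the set of functions $\Reb$ defined on $\{\alpha\mid\alpha\preceq\delta\}$ with $\Reb(\alpha)=\Rea(\alpha)$ for $\alpha\prec\delta$, with $\Reb(\alpha)\in S_\alpha$ for every $\alpha\preceq\delta$, and with $\Reb(\alpha\wedge\beta)=\Reb(\alpha)\cap\Reb(\beta)$ wherever this makes sense. On the class $[\delta]$ the order $\sqsubseteq_{\mathcal V}$ compares coordinate $\alpha$ by $\subseteq$ when $\mathcal V(\alpha)=\false$ and by $\supseteq$ when $\mathcal V(\alpha)=\true$ (below $\delta$ all extensions coincide), so it is a product of partial orders and only the existence of directed suprema is at stake. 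A first useful observation is that for a \emph{non-empty} directed family the formula collapses: since each $\Rea_i(\alpha)\in S_\alpha$ and each $\Rea_i$ satisfies the meet identity, one has $\mathcal N^-_\alpha\cup\bigcup_{\gamma\le_D\alpha,\gamma\prec\delta}\Rea(\gamma)\subseteq\Rea_i(\alpha)$ when $\mathcal V(\alpha)=\false$, and dually $\Rea_i(\alpha)\subseteq\mathcal N^+_\alpha\cap\bigcap_{\gamma\ge_D\alpha,\gamma\prec\delta}\Rea(\gamma)$ when $\mathcal V(\alpha)=\true$; the value therefore reduces to $\bigcup_i\Rea_i(\alpha)$, resp.\ $\bigcap_i\Rea_i(\alpha)$, the remaining terms serving only to make the \emph{empty} family a well-defined bottom of $\mathcal E$.

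The routine points come next. Membership $(\bigvee_i\Rea_i)(\alpha)\in S_\alpha$ follows from the monotonicities $\gamma\le_D\alpha\Rightarrow\mathcal N^{\pm}_\gamma\subseteq\mathcal N^{\pm}_\alpha$ together with $\mathcal N^-_\alpha\subseteq\mathcal N^+_\alpha$: for $\mathcal V(\alpha)=\false$ the value is a union of saturated sets each squeezed between $\mathcal N^-_\alpha$ and $\mathcal N^+_\alpha$, and dually for $\mathcal V(\alpha)=\true$. That $\bigvee_i\Rea_i$ is an upper bound is immediate, $\bigcup_i\Rea_i(\alpha)$ (resp.\ $\bigcap_i\Rea_i(\alpha)$) being a summand (resp.\ a factor). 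For leastness, let $\Reb\in\mathcal E$ be any upper bound; its meet identity gives $\Rea(\gamma)=\Reb(\gamma)\subseteq\Reb(\alpha)$ whenever $\gamma\le_D\alpha,\gamma\prec\delta$ (since $\Reb(\gamma)=\Reb(\gamma\wedge\alpha)=\Reb(\gamma)\cap\Reb(\alpha)$), and dually $\Reb(\alpha)\subseteq\Rea(\gamma)$ for $\gamma\ge_D\alpha,\gamma\prec\delta$; combined with $\mathcal N^-_\alpha\subseteq\Reb(\alpha)$ (resp.\ $\Reb(\alpha)\subseteq\mathcal N^+_\alpha$) and with $\Reb$ dominating the family, this shows $\Reb$ dominates every term of the formula, i.e.\ $\bigvee_i\Rea_i\sqsubseteq_{\mathcal V}\Reb$.

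The heart of the argument, and the step I expect to cost the most, is checking that $\bigvee_i\Rea_i$ lies in $\mathcal E$, i.e.\ still satisfies $\Reb(\alpha\wedge\beta)=\Reb(\alpha)\cap\Reb(\beta)$. Two facts from Definition~\ref{def:wpos} are used: if $\alpha,\beta\simeq\delta$ then $\alpha\wedge\beta\simeq\delta$ (were $\alpha\wedge\beta\prec\alpha$, the axiom $(\alpha\wedge\beta)\prec\alpha\Rightarrow\alpha\wedge\beta=\beta$ would give $\beta\prec\delta$, a contradiction) and $\mathcal V(\alpha\wedge\beta)=\mathcal V(\alpha)\wedge\mathcal V(\beta)$; and, $D$ being distributive, each $\gamma\ge_D\alpha\wedge\beta$ factors as $\gamma=\alpha'\wedge\beta'$ with $\alpha'\ge_D\alpha$, $\beta'\ge_D\beta$. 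I then split on the polarities. When both are $\true$ all three values are intersections and the identity holds termwise from the meet identity of each $\Rea_k$, needing no directedness. When a union is involved (both $\false$, or the mixed case with $\mathcal V(\alpha\wedge\beta)=\false$) the non-trivial inclusion is precisely the commutation of a directed union past a finite intersection: given $x\in\Rea_k(\alpha)\cap\Rea_k(\beta)=\Rea_k(\alpha\wedge\beta)$, directedness supplies an index $m$ above the relevant ones, and the comparability $\Rea_k\sqsubseteq_{\mathcal V}\Rea_m$ read off in each coordinate places $x$ in $\bigcap_i\Rea_i(\alpha)$ and in $\bigcup_i\Rea_i(\beta)$ as required. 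This is exactly where directedness is essential.

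The genuinely delicate sub-case is the bottom (empty family), whose value retains the correction terms $\bigcup_{\gamma\le_D\alpha}\Rea(\gamma)$ and $\bigcap_{\gamma\ge_D\alpha}\Rea(\gamma)$; here verifying the meet identity amounts to matching the index range $\{\gamma\ge_D\alpha\wedge\beta\}$ against $\{\gamma\ge_D\alpha\}\cup\{\gamma\ge_D\beta\}$ (and dually for $\le_D$), which I would do by feeding the distributive factorisation $\gamma=\alpha'\wedge\beta'$ into the meet identity of $\Rea$ below $\delta$, together with $\mathcal N^-_{\alpha\wedge\beta}=\mathcal N^-_\alpha\cap\mathcal N^-_\beta$ and $\mathcal N^+_{\alpha\wedge\beta}=\mathcal N^+_\alpha\cap\mathcal N^+_\beta$ — the latter using distributivity and the operational fact that, $\ext{D}$ being additive on meets and a product converging iff each factor does, $\tau_{\rho_1\wedge\rho_2}(M)\Da$ holds iff $\tau_{\rho_1}(M)\Da$ and $\tau_{\rho_2}(M)\Da$. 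Keeping the factorisation inside the down-set $\prec\delta$ is what the stratified-positivity meet axioms ($\alpha\wedge\beta\preceq\alpha$ or $\preceq\beta$, and $(\alpha\wedge\beta)\prec\alpha\Rightarrow\alpha\wedge\beta=\beta$) are there to guarantee, and this bookkeeping is the real obstacle. Once the meet identity is secured, $\mathcal E$ is closed under the directed suprema computed by the formula and has the empty-family value as bottom, so $(\mathcal E,\sqsubseteq_{\mathcal V})$ is a pointed dcpo; the closing remark $(\bigvee_i\Rea_i)(\alpha)=\Rea(\alpha)$ for $\alpha\prec\delta$ is the specialisation already noted, the correction terms collapsing onto $\Rea(\alpha)$ by the meet identity of $\Rea$.
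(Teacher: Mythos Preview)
Your overall plan is sound and shares the paper's diagnosis that the meet identity is the crux, but two steps as written do not go through.

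The core paragraph on the ``union involved'' cases mixes up which inclusion is non-trivial. You start from $x\in\Rea_k(\alpha)\cap\Rea_k(\beta)$ with a \emph{single} index $k$: that is the monotonicity direction $(\bigvee_i\Rea_i)(\alpha\wedge\beta)\subseteq(\bigvee_i\Rea_i)(\alpha)\cap(\bigvee_i\Rea_i)(\beta)$. In the both-$\false$ case this direction is trivial (each $\Rea_k(\alpha)\subseteq\bigcup_i\Rea_i(\alpha)$ already) and directedness plays no role; what \emph{does} need directedness there is the reverse inclusion, whose starting point is $x\in\Rea_i(\alpha)\cap\Rea_j(\beta)$ with possibly $i\neq j$. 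Conversely, in the mixed case ($\mathcal V(\alpha)=\true$, $\mathcal V(\beta)=\false$) the $\subseteq$ direction \emph{is} the hard one, but ``read off $\Rea_k\sqsubseteq_{\mathcal V}\Rea_m$ in each coordinate'' cannot place $x$ in $\bigcap_i\Rea_i(\alpha)$: at a $\true$ coordinate the comparison goes the wrong way ($\Rea_k(\alpha)\supseteq\Rea_m(\alpha)$), so $x\in\Rea_k(\alpha)$ does not give $x\in\Rea_m(\alpha)$. The actual route is to first move along the $\false$ coordinate $\alpha\wedge\beta$ to get $x\in\Rea_m(\alpha\wedge\beta)$, only then split as $\Rea_m(\alpha)\cap\Rea_m(\beta)$, and finally descend $\Rea_m(\alpha)\subseteq\Rea_i(\alpha)$ using the $\true$ polarity. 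So the two sub-cases have \emph{different} non-trivial directions and cannot be dispatched by one template sentence.

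Second, you restrict to $\alpha,\beta\simeq\delta$, but this is not exhaustive: the stratified-positivity axioms allow $\alpha\prec\delta$, $\beta\simeq\delta$ with $\alpha\wedge\beta\simeq\delta$ (nothing forces $\alpha\wedge\beta\preceq\alpha$; the axiom only says $\alpha\wedge\beta\preceq\alpha$ \emph{or} $\preceq\beta$). Then $(\bigvee_i\Rea_i)(\alpha)=\Rea(\alpha)$ is fixed while $(\bigvee_i\Rea_i)(\alpha\wedge\beta)$ is a genuine $\bigcup$ or $\bigcap$ over the family, and the identity needs its own short check. The paper handles precisely this configuration.

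For comparison, the paper does not separate empty from non-empty families. It first proves monotonicity $\alpha\le_D\beta\Rightarrow(\bigvee_i\Rea_i)(\alpha)\subseteq(\bigvee_i\Rea_i)(\beta)$ by a three-way polarity split --- the $\false$-to-$\true$ mixed case uses directedness via the chain $\Rea_i(\alpha)\subseteq\Rea_{i\vee j}(\alpha)\subseteq\Rea_{i\vee j}(\beta)\subseteq\Rea_j(\beta)$ --- and this delivers one inclusion of the meet identity for free. The remaining inclusion is then a case split on $\mathcal V(\alpha\wedge\beta)$ and on whether $\alpha\prec\delta$ or $\alpha\simeq\delta$. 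Working with the full formula throughout means the bottom element is not a separate sub-proof but falls out of the same computation; your collapse-then-treat-bottom-separately reorganization is legitimate but does not actually save cases, and as written it loses track of which inclusion is hard where.
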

\longv{
\begin{proof}
   We first show that for all $\alpha\le_D\beta$, then $(\bigvee_i\Rea_i)(\alpha)\subseteq (\bigvee_i\Rea_i)(\beta)$.
   \begin{itemize}
   \item if $\mathcal V(\beta)=\false$:
      The case where $\alpha\prec \delta$ is trivial (it is the second terms of the definition above). Otherwise, necessarily $\mathcal V(\alpha)=\false$:
      We have $\mathcal V(\beta)=\mathcal V(\alpha)$ thus we only have to check term to term. First, we have $\mathcal N^-_\alpha\subseteq \mathcal N^-_\beta$. For the second term, we have that $\{\gamma\mid \gamma\le_D\alpha,\gamma\prec\delta\}\subseteq\{\gamma\mid \gamma\le_D\beta,\gamma\prec\delta\}$, thus $\displaystyle\bigcup_{\gamma\le_D \alpha, \gamma\prec\delta}\Rea(\gamma)\subseteq \displaystyle\bigcup_{\gamma\le_D \beta, \gamma\prec\delta}\Rea(\gamma)$. And last, we have $\Rea_i(\alpha)\subseteq\Rea_i(\beta)$ for all $i$. 
   \item if $\mathcal V(\alpha)=\true$:
     The case where $\beta\prec \delta$ is trivial (it is the second terms of the definition above). Otherwise, necessarily $\mathcal V(\beta)=\true$:
     We have $\mathcal V(\beta)=\mathcal V(\alpha)$ thus we only have to check term to term. First, we have $\mathcal N^+_\alpha\subseteq \mathcal N^+_\beta$. For the second term, we have that $\{\gamma\mid \gamma\ge_D\alpha,\gamma\prec\delta\}\supseteq\{\gamma\mid \gamma\ge_D\beta,\gamma\prec\delta\}$, thus $\displaystyle\bigcap_{\gamma\ge_D \alpha, \gamma\prec\delta}\Rea(\gamma)\subseteq \displaystyle\bigcap_{\gamma\ge_D \beta, \gamma\prec\delta}\Rea(\gamma)$. And last, we have $\Rea_i(\alpha)\subseteq\Rea_i(\beta)$ for all $i$. 
   \item if $\mathcal V(\alpha)=\false$ and $\mathcal V(\beta)=\true$: We have $\mathcal N^-(\alpha)\subseteq \mathcal N^-_\beta\subseteq (\bigvee\Rea)(\beta)$. Similarly, $(\bigvee\Rea)(\alpha)\subseteq\mathcal N^+_{\alpha}\subseteq\mathcal N^+_\beta$. For any $\gamma_+,\gamma_-\prec \delta$ such that $\gamma_+\le_D\alpha\le_D\beta\le_D\gamma_-$, we have $\Rea(\gamma_+)\subseteq\Rea(\gamma_-)$. For any $\gamma\prec \delta$ such that $\gamma\le_D\alpha\le_D\beta$ and any $i\in I$, $\Rea(\gamma)=\Rea_i(\gamma)\subseteq \Rea_i(\beta)$. Similarly, for any $\gamma\prec \delta$ such that $\gamma\ge_D\beta\ge_D\alpha$ and any $i\in I$, $\Rea(\gamma)=\Rea_i(\gamma)\supseteq \Rea_i(\alpha)$. The only remaining case is for each $i,j\in I$, to prove that $\Rea_i(\alpha)\subseteq\Rea_j(\beta)$, but we know that $\Rea_i(\alpha)\subseteq\Rea_{i\vee j}(\alpha)$ since $\valuation\alpha=\false$, similarly, $\Rea_{i\vee j}(\beta)\subseteq\Rea_j(\beta)$ since $\valuation\beta=\true$, and we conclude by $\Rea_{i\vee j}(\alpha)\subseteq\Rea_{i\vee j}(\beta)$ since $\alpha\le_D\beta$.
   \end{itemize}
   Now, we have to verify that all meets are conserved. One inclusion is already done, so that we have to show that $(\bigvee_i\Rea_i)(\alpha)\cap(\bigvee_i\Rea_i)(\beta)\subseteq (\bigvee_i\Rea_i)(\alpha\wedge\gamma)$. Moreover, the cases where  $\alpha,\beta\prec\delta$, $\alpha =(\alpha\wedge \beta)$ or $\beta= (\alpha\wedge\beta)$ are trivial, thus we assume that $\alpha\preceq \beta\simeq(\alpha\wedge \beta)\simeq\delta$:
   \begin{itemize}
   \item If $\valuation {\alpha\wedge\beta}=\true$:\\
     Then necessarily $\valuation \beta = \true$.  We have $\mathcal N^+_{\alpha\wedge\beta}\supseteq\mathcal N^+_\alpha\cap\mathcal N^+_\beta\supseteq(\bigvee\Rea)(\alpha)\cap(\bigvee\Rea)(\beta)$. Moreover, for any $\gamma\prec\delta$ such that $\gamma\ge_D\alpha\wedge\beta$, we have $\Rea(\gamma)=(\bigvee\Rea)(\gamma)\supseteq(\bigvee\Rea)(\alpha\wedge\beta)$. Finally, we got the difficult case: let $i\in I$, we have $\Rea_i(\alpha\wedge\beta)=\Rea_i(\alpha)\cap\Rea_i(\beta)$ since $\Rea_i$ respect intersections, and we have $(\bigvee\Rea)(\beta)\subseteq\Rea_i(\beta)$ since $\valuation \beta = \true$, we thus need to show that $(\bigvee\Rea)(\alpha)\subseteq\Rea_i(\alpha)$ to get $\Rea_i(\alpha\wedge\beta)\supseteq(\bigvee\Rea)(\alpha)\cap(\bigvee\Rea)(\beta)$; there is two cases:
     \begin{itemize}
     \item either $\alpha\simeq\delta$: then necessarily $\valuation \alpha = \true$, so that  $(\bigvee\Rea)(\alpha)\subseteq\Rea_i(\alpha)$,
     \item or $\alpha\prec\delta$: then $(\bigvee\Rea)(\alpha)=\Rea(\alpha)=\Rea_i(\alpha)$.
     \end{itemize}
   \item If $\valuation {\alpha\wedge\beta}=\false$:\\
     We can the consider that $\valuation\beta = \false$ without lost of generality.\footnote{It is also possible that we only have $\alpha\simeq\beta$ and $\valuation \alpha = \false$, but we can the conclude by symmetry.}
     Notice that $\Rea(\alpha)\cap\mathcal N^-_\beta \subseteq \mathcal N^-_\alpha\cap\mathcal N^+_\beta$ which is included in the completion of $\mathcal N^-_{\alpha\wedge\beta}$ and thus in $(\bigvee\Rea)(\alpha\wedge\beta)$.
     \begin{itemize}
     \item If $\alpha\prec \delta$: Then for all $\gamma\prec\delta$ such that $\gamma\le\beta$, $(\bigvee\Rea)(\alpha)\cap\Rea(\gamma)=\Rea(\alpha)\cap\Rea(\gamma)=\Rea(\alpha\wedge\gamma)\subseteq (\bigvee\Rea)(\alpha\wedge\beta)$ the last inclusion being because $\alpha\wedge\gamma\le_D\alpha\wedge\beta$. Moreover, for any $i\in I$, $(\bigvee\Rea)(\alpha)\cap\Rea_i(\beta)=\Rea_i(\alpha)\cap\Rea_i(\beta)=\Rea_i(\alpha\wedge\beta)\subseteq(\bigvee\Rea)(\alpha\wedge\beta)$ the last inclusion being because $\valuation{\alpha\wedge\beta}=\false$.
     \item If $\alpha\simeq \delta$ and $\valuation\alpha = \false$: For all $\gamma_1,\gamma_2\prec\delta$ such that $\gamma_1\le_D\alpha$ and $\gamma_2\le_D\beta$, we have $\Rea(\gamma_1)\cap\Rea(\gamma_2)=\Rea(\gamma_1\wedge\gamma_2)\subseteq(\bigvee\Rea)(\alpha\wedge\beta)$, the last inclusion being because $\alpha\wedge\gamma\le_D\alpha\wedge\beta$. Moreover, for all $\gamma\prec\delta$ such that $\gamma\le_D\alpha$ and all $i\in I$, $\Rea(\gamma)\cap\Rea_i(\beta)=\Rea_i(\gamma)\cap\Rea_i(\beta)=\Rea_i(\gamma\wedge\beta)\subseteq \Rea_i(\alpha\wedge\beta)\subseteq(\bigvee\Rea)(\alpha\wedge\beta)$. Finally, for any $i,j\in I$, we have $\Rea_i(\alpha)\cap\Rea_j(\beta)\subseteq\Rea_{i\vee j}(\alpha)\cap\Rea_{i\vee j}(\beta)=\Rea_{i\vee j}(\alpha\wedge\beta)\subseteq (\bigvee\Rea)(\alpha\wedge\beta)$.
     \item If $\alpha\simeq \delta$ and $\valuation\alpha = \true$: Then for all $i\in I$, $(\bigvee\Rea)(\alpha)\cap\Rea_i(\beta)\subseteq \Rea_i(\alpha)\cap\Rea_i(\beta)=\Rea_i(\alpha\wedge\beta)\subseteq(\bigvee\Rea)(\alpha\wedge\beta)$, the first inclusion being because $\valuation\alpha = \true$. Moreover, for any $\gamma\prec\delta$ such that $\gamma\le_D\beta$, we have seen that $(\bigvee\Rea)(\alpha)\cap(\bigvee\Rea)(\gamma)\subseteq (\bigvee\Rea)(\alpha\wedge\gamma)\subseteq(\bigvee\Rea)(\alpha\wedge\beta)$.
     \end{itemize}
   \end{itemize}
\end{proof}
}
\shortv{
\begin{proof}
   We first show by case on $\mathcal V(\alpha)$ and $\mathcal V(\beta)$ that for all $\alpha\le_D\beta$, $(\bigvee_i\Rea_i)(\alpha)\subseteq (\bigvee_i\Rea_i)(\beta)$.
   Then, we verify that all meets are conserved with the only difficult case being when $\alpha\preceq \beta\simeq(\alpha\wedge \beta)\simeq\delta$, which is again a case study on  $\mathcal V(\alpha)$ and $\mathcal V(\beta)$.
\end{proof}
}

\begin{lemma}
  Any stratified positive DEFiM $D$ is $S$-realizable by stratification.
\end{lemma}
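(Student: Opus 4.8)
The plan is to read off all the required data straight from the stratified-positive structure and then to discharge the two genuinely nontrivial conditions of ``$S$-realizable by stratification'', namely that $(\sqsubseteq_{a})_{a}$ is a lexicographic stratification and that $H$ is lexicographically-monotonous, by a polarity bookkeeping driven entirely by the clauses of Definition~\ref{def:wpos}. First I would reuse the well founded total preorder $\preceq$ supplied by the SP structure, so the second bullet is free, and define the order on each $S_\alpha$ from the polarity: set $G\subseteq_\alpha G'$ to be ordinary inclusion $G\subseteq G'$ when $\mathcal V(\alpha)=\false$ and reverse inclusion $G\supseteq G'$ when $\mathcal V(\alpha)=\true$. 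Directed unions (resp.\ intersections) of saturated sets squeezed between $\mathcal N^-_\alpha$ and $\mathcal N^+_\alpha$ remain in $S_\alpha$, so each $(S_\alpha,\subseteq_\alpha)$ is a dcpo (first bullet). The two easy clauses of a lexicographic stratification are then immediate: $\bigcap_a\equiv_a$ is equality of realizers, and $\sqsubseteq_{[\alpha]}$ already forces agreement strictly below $\alpha$, whence $\sqsubseteq_{[\alpha]}\subseteq\equiv_{\downarrow[\alpha]}$. The remaining dcpo clause at each level is exactly Lemma~\ref{lemma:dcpo}, whose explicit sup (unions below $\false$-points, intersections below $\true$-points) is tailored to the orders $\subseteq_\alpha$ just chosen; and $H$ is well defined on $S^{\!\wedge}$ by Lemma~\ref{lemma:HnMeet}.

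The heart of the argument is lexicographic-monotonicity of $H$. The structural fact I would exploit is that $M\in S\mapsto T$ means $MN\in T$ for all $N\in S$, so $S'\subseteq S$ together with $T\subseteq T'$ gives $S\mapsto T\subseteq S'\mapsto T'$ (antitone in the domain, monotone in the codomain), and that for $(\gamma,\delta)\in\ext{D}(\beta)$ the SP conditions give $\beta\succeq\gamma$ and $\beta\succeq\delta$, refined by: $\gamma\simeq\beta$ forces $\mathcal V(\gamma)\neq\mathcal V(\beta)$ (the domain flips polarity) while $\delta\simeq\beta$ forces $\mathcal V(\delta)=\mathcal V(\beta)$ (the codomain preserves it). For the first requirement, that $H$ respects $\equiv_{\downarrow[\alpha]}$, I observe that any $(\gamma,\delta)\in\ext{D}(\beta)$ with $\beta\prec\alpha$ has $\gamma,\delta\preceq\beta\prec\alpha$; hence if $\Rea$ and $\Reb$ agree on all atoms $\prec\alpha$ then $\Rea(\gamma)=\Reb(\gamma)$ and $\Rea(\delta)=\Reb(\delta)$, each factor $\Rea(\gamma)\mapsto\Rea(\delta)$ is unchanged, and $H(\Rea)(\beta)=H(\Reb)(\beta)$.

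For the monotonicity requirement I would fix a level $[\alpha]$ with $\Rea\sqsubseteq_{[\alpha]}\Reb$ and establish $H(\Rea)\sqsubseteq_{[\alpha]}H(\Reb)$ factor by factor. Below $\alpha$ the two values coincide by the previous paragraph. At $\beta\simeq\alpha$ I split on $\mathcal V(\beta)$. When $\mathcal V(\beta)=\false$ the goal is $H(\Rea)(\beta)\subseteq H(\Reb)(\beta)$; since an intersection over the fixed index set $\ext{D}(\beta)$ is monotone in each factor, it suffices to prove $\Rea(\gamma)\mapsto\Rea(\delta)\subseteq\Reb(\gamma)\mapsto\Reb(\delta)$, which by the variance of $\mapsto$ reduces to $\Reb(\gamma)\subseteq\Rea(\gamma)$ and $\Rea(\delta)\subseteq\Reb(\delta)$. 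For the codomain either $\delta\prec\alpha$ (equality) or $\delta\simeq\beta$, in which case $\mathcal V(\delta)=\false$, so $\subseteq_\delta$ is inclusion and $\Rea(\delta)\subseteq_\delta\Reb(\delta)$ is exactly $\Rea(\delta)\subseteq\Reb(\delta)$; for the domain either $\gamma\prec\alpha$ (equality) or $\gamma\simeq\beta$, in which case $\mathcal V(\gamma)=\true$, so $\subseteq_\gamma$ is reverse inclusion and $\Rea(\gamma)\subseteq_\gamma\Reb(\gamma)$ gives $\Reb(\gamma)\subseteq\Rea(\gamma)$. The case $\mathcal V(\beta)=\true$ is perfectly dual: all inclusions reverse, intersections are now monotone for $\supseteq$, and the polarity flip on the domain again lands on the favourable side. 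This verifies all four bullets, so $D$ is $S$-realizable by stratification.

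I expect the main obstacle to be organizational rather than conceptual: keeping the four interacting polarities straight (that of $\beta$, of the domain $\gamma$, of the codomain $\delta$, and the direction each induces on $\subseteq_\bullet$ and on $\mapsto$) so that, in every branch, the SP clause that flips the domain and preserves the codomain lines up exactly with the antitone/monotone variance of the arrow and with the sign of $\subseteq_\beta$. A secondary check is that the sup of Lemma~\ref{lemma:dcpo} really inhabits the level's dcpo and restricts correctly below $\alpha$, but that is already discharged by the lemma itself, so no new work is needed there.
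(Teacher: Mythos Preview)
Your proposal is correct and follows essentially the same approach as the paper: you define $\subseteq_\alpha$ from the polarity exactly as the paper does, invoke Lemma~\ref{lemma:dcpo} for the dcpo clause and Lemma~\ref{lemma:HnMeet} for well-definedness of $H$, and then verify lexicographic monotonicity of $H$ by the same factor-by-factor argument exploiting that $(\gamma,\delta)\in\ext{D}(\beta)$ forces $\gamma,\delta\preceq\beta$ with the domain flipping polarity and the codomain preserving it. Your organization of the variance bookkeeping is in fact slightly more explicit than the paper's, but the content is identical.
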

\begin{proof}
  \longv{
  \begin{itemize}
  \item For any $\alpha \in D$ we define the order $(\subseteq_\alpha):=(\subseteq_{\mathcal V(\alpha)})$ where $(\subseteq_\false):=(\subseteq)$ and $(\subseteq_\true):=(\supseteq$), so that $(S_{\alpha}, \subseteq_\alpha)$ is a dcpo.
  \item The equivalence classes $\bigslant D \simeq$ forms a $J$-partition of $D$ for $J$ the cardinal of $\bigslant D \simeq$.
  \item $S^{\!\wedge}$ is lexicographically stratified by $(\sqsubseteq_{a})_{a\in D\raisebox{0em}{$\!_{/\!\simeq}$}}$ defined by:
    $$ \Rea\sqsubseteq_{[\alpha]} \Reb \quad\text{iff} \quad 
    \left\{
      \begin{matrix} 
        \forall\beta\prec\alpha,\ &\Rea(\beta) = \Reb(\beta) \\  
        \forall\beta\simeq\alpha,\ &\Rea(\beta) \subseteq_\beta \Reb(\beta)
      \end{matrix}
    \right.
    $$
    We only need to prove that for all $U\in \bigslant X {\equiv_{\downarrow n}}$, the poset $(\bigslant U {\equiv_n} , \sqsubseteq_n)$ is a dcpo; which corresponds to Lemma~\ref{lemma:dcpo}
  \item Remains to show that $H$ is lexicographically-monotonous:
    \begin{itemize}
    \item $H$ respects the equivalences $(\equiv_{\downarrow c})$:\\
      Let $\alpha\in D$ and $\Rea \equiv_{\downarrow [\alpha]} \Reb$. Let $\beta\ge_D \alpha$, we have $H(\Rea)(\beta)\ :=\ \bigcap_{(\gamma,\delta)\in \ext{D}(\beta)}\Bigl(\Rea(\gamma) \mapsto \Rea(\delta)\Bigr)$ and $H(\Reb)(\beta)\ :=\ \bigcap_{(\gamma,\delta)\in \ext{D}(\beta)}\Bigl(\Reb(\gamma) \mapsto \Reb(\delta)\Bigr)$. It is sufficient to show that $\Bigl(\Rea(\gamma) \mapsto \Rea(\delta)\Bigr) = \Bigl(\Reb(\gamma) \mapsto \Reb(\delta)\Bigr)$ for any $(\gamma,\delta)\in \ext{D}(\beta)$. But this is immediate since $\gamma,\delta\preceq \beta\preceq\alpha$ and $\Rea \equiv_{\downarrow [\alpha]} \Reb$. 
  \item $H$ is ${\downarrow }[\alpha]$-monotonous over $(\equiv_{\downarrow [\beta]})_{\alpha \prec \beta}$-fixedpoint:\\
    Let $\alpha\in D$ and $\Rea \sqsubseteq_{\downarrow [\alpha]} \Reb$ such that $H(\Rea)(\beta) = \Rea(\beta)$ for all $\beta\prec\alpha$.
    For any $\beta\prec\alpha$, we have $H(\Rea)(\beta) = H(\Reb)(\beta)$ since $H$ respects the equivalences $(\equiv_{\downarrow c})$.
    Remains to show that for all $\beta\simeq\alpha$, $H(\Rea)(\beta) \subseteq_\beta H(\Reb)(\beta)$. We will show that for any $(\gamma,\delta)\in \ext{D}(\beta)$, $\Bigl(\Rea(\gamma) \mapsto \Rea(\delta)\Bigr) \subseteq_\beta \Bigl(\Reb(\gamma) \mapsto \Reb(\delta)\Bigr)$. We do the case where $\valuation\beta=\false$, the other is symmetric. Either $\gamma\prec \beta$ and $\Rea(\gamma)=\Reb(\gamma)$ (since $\Rea\equiv_{\downarrow[\gamma]}\Reb$) or $\gamma\simeq\beta$ has the polarity $\valuation\gamma=\true$ and $\Rea(\gamma)\supseteq\Reb(\gamma)$, in any case, $\Rea(\gamma)\supseteq\Reb(\gamma)$. Similarly, in any case $\Rea(\delta)\subseteq\Reb(\delta)$, so that we have $\Bigl(\Rea(\gamma) \mapsto \Rea(\delta)\Bigr) \subseteq_\beta \Bigl(\Reb(\gamma) \mapsto \Reb(\delta)\Bigr)$.
    \end{itemize}
  \end{itemize}
  }
  \shortv{
    We define  $(\subseteq_\alpha):=(\subseteq_{\mathcal V(\alpha)})$ where $(\subseteq_\false):=(\subseteq)$ and $(\subseteq_\true):=(\supseteq$).
    After Lemma~\ref{lemma:dcpo}, we can lexicographically stratify $S^{\!\wedge}$ by $ \Rea\sqsubseteq_{[\alpha]} \Reb$ if 
    for all $\beta\prec\alpha$, $\Rea(\beta) = \Reb(\beta)$ and for all $\beta\simeq\alpha$, $\Rea(\beta) \subseteq_\beta \Reb(\beta)$.
    The lexicographical-monotonicity of $H$ follows.
  }
\end{proof}

\begin{theorem}\label{th:sensibility}
  Any stratified positive DEFiM $D$ is sensible for $\Lam{D}$ and approximable.
\end{theorem}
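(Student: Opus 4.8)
The plan is to obtain the statement as a corollary that simply assembles the machinery built up in this subsection; there is no new computation to perform, only the bookkeeping needed to check that each hypothesis along the chain is met. Concretely, I would chain the four preceding results: stratified positive $\Rightarrow$ $S$-realizable by stratification $\Rightarrow$ existence of an $S$-realizer $\Rightarrow$ sensible for $\Lam{D}$ $\Rightarrow$ approximable.

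First I would invoke the lemma stating that any stratified positive DEFiM is $S$-realizable by stratification: from the polarity $\mathcal V$ and the well-founded total preorder $\preceq$ witnessing stratified positivity (Def.~\ref{def:wpos}) one reads off the orders $(\subseteq_\alpha)$ on each $S_\alpha$ and the lexicographic stratification $(\sqsubseteq_{[\alpha]})$ of $S^{\!\wedge}$, the dcpo condition being exactly Lemma~\ref{lemma:dcpo} and the lexicographic-monotonicity of $H$ being discharged in the proof of that lemma. Next, the theorem that $S$-realizability by stratification yields an $S$-realizer applies: the lexicographically-monotonous function $H$ on the lexicographically-stratified set $S^{\!\wedge}$ has a fixedpoint by the general lexicographic fixedpoint proposition, and a fixedpoint of $H$ is precisely an $S$-realizer, since it satisfies the arrow equation by the very definition of $H$ (using extensionality, as $\beta=\bigwedge_{(\gamma,\delta)\in\ext{D}(\beta)}\gamma\cons\delta$) and the meet equation because it already lies in $S^{\!\wedge}$. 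Then the realizer characterization (Theorem~\ref{th:sensibility}) converts the existence of an $S$-realizer into sensibility of $D$ for $\Lam{D}$. Finally, Theorem~\ref{th:app=sens}, the equivalence between sensibility for $D$-tests and approximability, yields that $D$ is approximable, closing the chain and establishing both conclusions.

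The only delicate point, and the reason the statement is not wholly trivial, is concealed in the fixedpoint step: the functional $H$ is genuinely non-monotone (as the remark following the definition of $S$-realizability by stratification warns), so neither Tarski's theorem nor naive iteration suffices, and one must appeal to the lexicographic fixedpoint proposition whose stratification data is supplied exactly by the positivity valuation $\mathcal V$. I would therefore concentrate all care on verifying that the SP data genuinely instantiates a lexicographic stratification, namely all three clauses of the stratification definition and both clauses of lexicographic-monotonicity, and then let the cited results discharge the remainder. For the concrete models of Example~\ref{example:1} (all save $P_\infty$ and $U_\infty$), stratified positivity was already exhibited in Example~\ref{ex:WP}, so the theorem immediately certifies their approximability.
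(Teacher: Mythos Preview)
Your proposal is correct and matches the paper's own approach exactly: the theorem is stated without proof precisely because it is the concatenation of the four results you list --- stratified positivity $\Rightarrow$ $S$-realizable by stratification (the lemma immediately preceding), $\Rightarrow$ existence of an $S$-realizer (the theorem before that), $\Rightarrow$ sensibility for $\Lam{D}$ (the earlier realizer characterization), $\Rightarrow$ approximability (Theorem~\ref{th:app=sens}). Your identification of the non-monotonicity of $H$ as the one genuine obstacle, discharged by the lexicographic fixedpoint proposition, is also the paper's emphasis.
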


\begin{example}\label{ex:sensWT}
  By Theorem~\ref{th:sensibility} and Example~\ref{ex:WP}, all the DEFiMs $D$ of Examples~\ref{example:1} \longv{and~\ref{ex:coind}} are approximable except for $P_\infty$ and $U_\infty$.
\end{example}

\longv{

\subsection{Further generalization}

We strongly conjecture that this result does not fundamentally use the extensionality:

\begin{conjecture}\label{conj}
  Any stratified positive filter model $D$ is approximable.
\end{conjecture}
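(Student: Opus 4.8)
The plan is to assemble this statement directly from the chain of equivalences established above; no new combinatorics are needed at this point, since all the difficulty has already been pushed into the preceding lemmas. Given a stratified positive DEFiM $D$, the first step is to apply the immediately preceding lemma to conclude that $D$ is $S$-realizable by stratification. This already packages the polarity $\mathcal V$ and the well-founded total preorder $\preceq$ of Definition~\ref{def:wpos} into a lexicographic stratification of $S^{\!\wedge}$ on which the operator $H$ of Lemma~\ref{lemma:HnMeet} is lexicographically-monotonous.

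Next I would feed this into the theorem asserting that any DEFiM which is $S$-realizable by stratification admits an $S$-realizer: concretely, the lexicographic fixedpoint proposition produces a fixedpoint $\Rea$ of $H$ inside $S^{\!\wedge}$. Membership in $S^{\!\wedge}$ already supplies the meet equation $\Rea(\alpha\wedge\beta)=\Rea(\alpha)\cap\Rea(\beta)$, while the fixedpoint identity $\Rea=H(\Rea)$ unfolds, through $\ext{D}$ and extensionality, into the arrow equation $\Rea(\alpha\cons\beta)=\Rea(\alpha)\mapsto\Rea(\beta)$. Hence $\Rea$ is an $S$-realizer of $D$.

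With an $S$-realizer in hand, the characterization theorem of the Realizers subsection (``$D$ is sensible for $\Lam{D}$ iff it has an $S$-realizer'') immediately yields that $D$ is sensible for $\Lam{D}$. Finally, since every DEFiM is in particular an extensional filter model, Theorem~\ref{th:app=sens} converts sensibility for $D$-tests into approximability of $D$, so both conclusions follow at once.

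The only genuinely hard part is invisible at this level of the argument: it lives in verifying the lexicographic-monotonicity of $H$ and in Lemma~\ref{lemma:dcpo}. The crux is that $H$ is \emph{not} monotone in the ordinary sense --- inputs on the negative and positive strata pull the function space $\Rea(\gamma)\mapsto\Rea(\delta)$ in opposite directions --- so Tarski's theorem is unavailable, and one must instead climb the $\preceq$-stratification, taking least fixedpoints one $\simeq$-class at a time inside the dcpos furnished by Lemma~\ref{lemma:dcpo}. The stratified positivity hypothesis is precisely what guarantees, at each class, that $\ext{D}$ only ever feeds $\Rea(\gamma)\mapsto\Rea(\delta)$ a covariant argument $\delta$ and a contravariant argument $\gamma$ of strictly lower rank or of flipped polarity at the same rank, which is exactly what makes the stratum-wise monotonicity, and hence the fixedpoint construction, go through.
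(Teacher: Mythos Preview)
Your argument is a correct proof of Theorem~\ref{th:sensibility} (the statement for DEFiMs), but it does not address the conjecture you were asked about. The conjecture is stated for a general \emph{filter model}, not for a DEFiM: the whole point of the ``Further generalization'' subsection is to drop the extensionality hypothesis. The paper explicitly says ``We strongly conjecture that this result does not fundamentally use the extensionality'' and then leaves the statement open; there is no proof in the paper to compare against, because the paper does not claim one.

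Your write-up assumes extensionality at every step. You invoke the operator $H$ of Lemma~\ref{lemma:HnMeet}, which is defined via $\ext{D}$ and only makes sense for extensional models; the paper itself remarks that without extensionality $H$ is ``no more a function, but just linear constraints''. You invoke the preceding lemma (stratified positive $\Rightarrow$ $S$-realizable by stratification), whose proof again goes through $\ext{D}$. You invoke Theorem~\ref{th:app=sens}, whose hypothesis is ``extensional filter model''. And you invoke the calculus with $D$-tests of Section~\ref{sec:D-tests}, whose reduction rules $(\tau)$ and $(\bar\tau)$ are phrased in terms of $\ext{D}$; the paper notes that in the non-extensional setting these rules would have to become infinitary sums and products, which breaks the machinery you are relying on. In short, you have reproved the theorem that immediately precedes the conjecture, not the conjecture itself; the genuine obstacle---making the test calculus, the operator $H$, and the approximability/sensibility bridge work without $\ext{D}$---is exactly what remains open.
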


This result should be obtained following the same way, but with a lot of technical hindrance. In particular the rules $(\tau)$ and $(\bar\tau)$ would become potentially infinitary:\footnote{In the sens that sum and product could be infinite.}
 \begin{align*}
   (\tau)     &&    \tau_\alpha(\lambda x.M) &\rta \sum_{A\in \mathcal A}\prod_{(\beta,\gamma)\in A}\tau_\gamma(M[\bareps_\beta/x]) &\text{where } \mathcal A =\left\{A\subseteq_f D\times D\ \middle|\ \bigwedge_{(\beta,\gamma)\in A}(\beta\cons\gamma)\le\alpha\right\} \\
   (\bar\tau) &&    (\sum_i\bar\tau_{\alpha_i}(Q_i)) N &\rta \sum_i\sum_{(\beta\cons\gamma)\ge\alpha_i} \bar\tau_{\gamma_i}(Q\pt\tau_{\beta_i}(N)) 
 \end{align*}
Another technical issue is the definition of the function $H$ of Lemma~\ref{lemma:HnMeet} that would be no more a function, but just linear constraints.

This generalization is expected for the long version; especially because it surprisingly permit to weaken the condition positive stratification by dropping the well foundedness of the strata.

\begin{proposition}
  Let $D$ a filter model satisfying all the conditions of stratified positiveness except for the well foundedness of the preorder $\preceq$.\\
  If Conjecture~\ref{conj} is true, then $D$ equates any terms with the same B\"ohm trees, and is in particular sensible.
\end{proposition}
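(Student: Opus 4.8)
The plan is to derive both conclusions from the hard inclusion of approximability for $D$, and to obtain that inclusion from Conjecture~\ref{conj} applied not to $D$ itself — which is barred from being stratified positive precisely by the missing well-foundedness — but to genuinely well-founded stratified positive models built around each individual type derivation.

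First I would record that both assertions reduce to approximability. The soundness inclusion $\bigcup_{s\in\B M}\llb s\rrb^{\vec x}\subseteq\llb M\rrb^{\vec x}$ always holds, since every $s\in\B M$ is the direct approximant of a head-reduct of $M$ and $\llb\Omega\rrb$ is the bottom interpretation. Hence it suffices to prove the reverse, hard inclusion $\llb M\rrb^{\vec x}\subseteq\bigcup_{s\in\B M}\llb s\rrb^{\vec x}$: from the resulting equality $\llb M\rrb^{\vec x}=\bigcup_{s\in\B M}\llb s\rrb^{\vec x}$ the interpretation depends only on the finite approximants, so $\BT(M)=\BT(N)$ (equivalently $\B M=\B N$) forces $\llb M\rrb=\llb N\rrb$, while a head-diverging $M$ has $\B M=\{\Omega\}$, whence $\llb M\rrb=\llb\Omega\rrb$ collapses, witnessing sensibility. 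Fix therefore a point $(\vec\alpha,\beta)\in\llb M\rrb^{\vec x}$ with $\beta\neq\omega$. By Proposition~\ref{prop:eqIT} it is witnessed by a finite intersection-type derivation, mentioning only a finite set $X\subseteq D$ and only the operations $\wedge,\cons$ and the order $\ge_D$.

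The naive move — restricting $D$ to a well-founded stratified positive sub-DEFiM containing $X$ and invoking Conjecture~\ref{conj} there — fails, and failing it is the whole point. Extensionality forces any sub-DEFiM containing $\gamma$ to contain the components of $\ext{D}(\gamma)$, which lie $\preceq\gamma$; iterating $\ext{D}$ descends along $\preceq$, and without well-foundedness this descent never terminates, so no finite, hence no well-founded, extensional sub-structure of $D$ can contain $X$. I would therefore leave $D$ and build an \emph{external} well-founded stratified positive partial DEFiM $E$ together with a filter-model morphism $\psi:E\to D$: concretely a budgeted unfolding whose carrier consists of pairs $(\gamma,r)$, with $\gamma\in D$ and $r$ an ordinal rank that strictly decreases along each $\ext{D}$-step and is truncated at the finite $\ext$-depth actually traversed by the fixed derivation. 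The rank makes $(E,\preceq_E)$ well-founded by construction; the positivity clauses of Definition~\ref{def:wpos}, being local to single $\ext$-steps, lift verbatim to $E$, so $E$ is stratified positive; and $\psi(\gamma,r)=\gamma$ is a morphism. By Proposition~\ref{prop:weakposmod} the completion $\Comp{E}$ is again stratified positive, and it is well-founded because every element of $\Comp{E}$ is generated from $E$ by finitely many arrows and intersections, so descending chains reach the well-founded $E$.

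It then remains to transport the point through $\Comp{E}$. I would lift the fixed finite derivation of $(\vec\alpha,\beta)\in\llb M\rrb^{\vec x}_D$ along $\psi$ to a derivation over $\Comp{E}$ of some $(\vec\alpha',\beta')\in\llb M\rrb^{\vec x}_{\Comp{E}}$ with $\psi(\vec\alpha',\beta')=(\vec\alpha,\beta)$; this is possible because the derivation uses only the finitely many $\ext$-steps captured by the budget and $\psi$ preserves the order on those elements. Applying Conjecture~\ref{conj} to the well-founded stratified positive $\Comp{E}$ yields approximability there, hence an approximant $s\in\B M$ with $(\vec\alpha',\beta')\in\llb s\rrb^{\vec x}_{\Comp{E}}$. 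Finally the freeness of the completion (Remark~\ref{rk:completion}), giving $\llb\cdot\rrb_{\Comp{E}}\subseteq\llb\cdot\rrb_{D}$ along $\psi$, pushes this back to $(\vec\alpha,\beta)\in\llb s\rrb^{\vec x}_{D}$, establishing the hard inclusion and with it approximability, the $\BT$-collapse and sensibility. The main obstacle is exactly the budgeted unfolding of the previous paragraph: checking that it is a bona fide stratified positive partial DEFiM and that $\psi$ transports interpretations in the two required directions — lifting the $M$-derivation up, pushing the approximant down. This is precisely where the absence of well-foundedness is absorbed, and where, as the surrounding discussion warns, one must pass through the infinitary test rules, the function $H$ of Lemma~\ref{lemma:HnMeet} degenerating into linear constraints, so that Conjecture~\ref{conj} must be invoked in its full, non-extensional generality rather than in the restricted form of Theorem~\ref{th:sensibility}.
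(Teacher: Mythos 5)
There is a genuine gap, and it sits exactly where you invested the most effort. Your diagnosis that no finite \emph{extensional} substructure of $D$ can contain the types of the derivation is correct, but your conclusion — that one must leave $D$ and build an external ranked model — is not the right fix, and your fix does not work as described. The paper's move is the ``naive'' restriction you dismiss, only with extensionality dropped: take the finite set $F$ of elements occurring in the derivation $\pi$, close it under $\wedge$ into $F^\wedge$, and restrict $\cons$ \emph{partially} (defined only when argument, result and arrow all lie in $F^\wedge$). This is a partial filter model, not a partial DEFiM, and that is precisely why Conjecture~\ref{conj} is stated for general (non-extensional) filter models rather than re-using Theorem~\ref{th:sensibility} — the point you notice only in your last sentence. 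Well-foundedness then comes for free from finiteness of $F^\wedge$: no ordinal ranks are needed. Your budgeted unfolding, by contrast, is keyed to ``the finite $\ext$-depth actually traversed by the fixed derivation'', but an intersection-type derivation in Figure~\ref{fig:tyLam1} never traverses $\ext{D}$ at all ($\ext{D}$ drives the test-calculus reduction, not the type system), so the budget is not well defined; and the assertions that the carrier of pairs $(\gamma,r)$ is closed under meets, satisfies the meet clauses of Definition~\ref{def:wpos}, and that $\psi$ extends to a morphism $\Comp{E}\rta D$ with $\llb\cdot\rrb_{\Comp{E}}\subseteq\llb\cdot\rrb_{D}$ are exactly the hard points, and they are waved through. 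The freeness of Remark~\ref{rk:completion} applies because $D$ \emph{is} a completion of the partial structure $F^\wedge\subseteq D$; for your external $E$, $D$ is not a completion of $E$, so the push-down step has no support.

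Separately, you prove more than the statement says, and more than the method delivers. You establish (or claim) the hard inclusion $\llb M\rrb^{\vec x}\subseteq\bigcup_{s\in\B{M}}\llb s\rrb^{\vec x}$ in $D$ itself, i.e.\ full approximability of $D$. The paper deliberately concludes only that terms with the same B\"ohm trees are equated: the remark following the proof insists this is strictly weaker than approximability (``it may not be the union that is considered''), and the discussion of further works exhibits a filter model of this non-well-founded kind, due to Kerth, that is conjectured sensible yet \emph{provably non-approximable} — so a proof of the union form from these hypotheses would contradict the paper's own example. The correct target is the transfer statement the paper proves: for $\B{M}=\B{N}$ and $(\vec\alpha,\beta)\in\llb M\rrb_D^{\vec x}$, one gets $(\vec\alpha,\beta)\in\llb M\rrb_{\overline{F}}^{\vec x}$ because $\pi$ is literally a derivation over $F^\wedge$, then $(\vec\alpha,\beta)\in\llb N\rrb_{\overline{F}}^{\vec x}$ by approximability of $\overline{F}$ (the conjecture), and finally $(\vec\alpha,\beta)\in\llb N\rrb_D^{\vec x}$ by freeness — with sensibility as the special case $N$ diverging. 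Your reduction of both conclusions to approximability of $D$ is therefore a misstep at the very first step of the plan, independent of the construction issues above.
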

\begin{proof}
  Let $M$ and $N$ two terms with the same set of B\"ohm approximations and let $(\vec\alpha,\beta)\in\llb M\rrb_D^{\vec x}$. We will show that  $(\vec\alpha,\beta)\in\llb N\rrb_D^{\vec x}$.

  There exists a derivation $\pi$ of $(x_i:\alpha_i)_i\vec M:\beta_i$ in the intersection type system of $D$. Since $\pi$ is finite, there is only a finite set $F\subseteq_f D$ of elements of $D$ appearing in the derivation.

  Let $F^\wedge\subseteq D$ the $\wedge$-completion $F^\wedge:=\{\bigwedge_i \gamma_i\mid \forall i, \gamma_i\in F\}$ of $F$. Let $\rta_F$ partially defined by $\gamma\rta_F\delta = \gamma\rta\delta$ when it makes sens, {\em i.e.}, when $\gamma,\delta,(\gamma\rta\delta) \in F^\wedge$. 

  Then $(F^\wedge, \wedge,\rta_F)$ is a partial filter model that can be freely completed into $\overline F$. Moreover, $(F^\wedge, \wedge,\rta_F)$ is stratified positive since it is finite and a subset of $D$; thus $\overline F$ is stratified positive.

  Since $\pi$ only use elements of $F$, it is also a derivation in $\overline F$, so that $(\vec\alpha,\beta)\in\llb M\rrb_{\overline F}^{\vec x}$. Since $\overline F$ is stratified positive and $M$ and $N$ have the same set of B\"ohm approximations, $(\vec\alpha,\beta)\in\llb N\rrb_{\overline F}^{\vec x}$. Moreover, since $D$ and $\overline F$ are two completions of $(F^\wedge, \wedge,\rta_F)$ but $\overline F$ is free, we have $\llb.\rrb_{\overline F}\subseteq \llb .\rrb_D$; so that $(\vec\alpha,\beta)\in\llb N\rrb_{D}^{\vec x}$.
\end{proof}

\begin{remark}
  Equating all terms with the same B\"ohm trees is a notion similar to approximability, but slightly weaker. This is a property that says that the interpretation of a term is characterized by the interpretations of its B\"ohm trees; but it may not be the union that is considered. Morally, however, this is a kind of approximation theorem where the ``limit'' of the interpretations can be arbitrary (and not just the union).
\end{remark}

\begin{example}\label{ex:sensWT}
  Assuming Conjecture~\ref{conj}, the filter model $U_\infty$ of Examples~\ref{example:1} equates any terms with the same B\"ohm trees.
\end{example}

}

\shortv{
\section*{Conclusion}
\input{conclusion_shortv.tex}
}
\longv{
%

\subsection*{Related Works}

The quest for sensibility and approximability of different filter models was very important in the 90's. A survey of this quest can be found in the book ``Lambda calculus with types''~\cite[Chapter 17]{BarDekSta}. 

We only have one reference to add to their survey, this is the works of Berline~\cite{Ber00} and her students Guy~\cite{Gou95}, Kerth~\cite{Ker98} and Manzonetto~\cite{Man09}. They performed deep studies on the limits and classification of the traditional classes of models. In that aspect, they follow an approach very similar to ours.

As a systematic study of a specific property in a large class of models, this article also follows recent works of Breuvart, Manzonetto and Ruopolo~\cite{Bre14,Bre16Sem,BrMaPoRu2016} that are rather studying the property of full abstraction for different reduction strategies. 

Indirectly, the (relatively) recent results of Ehrhard on the extensional collapse~\cite{Ehr09} are also linked with our result as the target of the described extensional collapse are automatically approximable (because the source is a class containing only approximable models). This gives yet a different and modern approach of approximability.

\subsection*{Further Works}

One may ponder the generality of our work considering the restriction taken on our class of model. First, the choice of filter models over usual Scott domains seems relatively safe as a Scott domain can be turned into a filter model by adding a top element; in the other side not having to consider the existence of an intersection is before all a comfort for the reader. Moreover, switching to Scott domains would make heavier the definition of tests, similarly for the others enforced restrictions: the extensionality and the distributivity. We strongly believe that the detour by tests mechanism can be removed, removing these unnatural restrictions. Nonetheless, we choose to stick with tests as they illustrate the link between sensibility and approximability in a very readable manner.

Our main regret, however, is that the final characterization is not a complete one: there is ({\em a priori}) filter models that are approximable and not positively stratifyable, or even models that are sensible but not appriximable! To illustrate this remark, we look at four filter models that are generated by the atoms $\alpha,\beta,\gamma,\delta$ and the following four sets of equations:\footnote{In the first tree systems, $\alpha=\beta$ there is just only three atoms.}
  \begin{align}
    \alpha &= \omega\cons\alpha &
    \beta &= \omega\cons\alpha &
    \gamma &= (\gamma\wedge\delta)\cons\beta &
    \delta &= \omega\cons\omega\cons\alpha\\
    \alpha &= \omega\cons\alpha &
    \beta &= \omega\cons\alpha &
    \gamma &= (\gamma\wedge\delta)\cons\beta &
    \delta &= \alpha\cons\alpha\cons\alpha\\
    \alpha &= \omega\cons\alpha &
    \beta &= \omega\cons\alpha &
    \gamma &= (\gamma\wedge\delta)\cons\beta &
    \delta &= \omega\cons\alpha\cons\alpha\\
    \alpha &= \omega\cons\alpha &
    \beta  &= (\beta\cons\alpha)\cons\alpha \!\!\!\! &
    \gamma &= (\gamma\wedge\delta)\cons\beta  &
    \delta &= \omega \cons \alpha\cons\alpha
  \end{align}
Notice that the notation $\omega\cons\omega\cons\alpha$ is simply syntactic sugar for $\omega\cons\gamma'$ for $\gamma'=\omega\cons\alpha$. Considering that we omit the full description of $\wedge$ and $\ext{D}$ which are the free ones, each of these lines forms a partial DEFiM.

In the first model, $\delta\le\gamma$ since $\delta=\omega\cons\alpha$ and $\gamma=(\gamma\wedge\delta)\cons\alpha$ with $\omega\ge (\gamma\wedge\delta)$ (remember that $\omega$ is a top). Thus the equation $\gamma=\delta\cons\alpha$ is now positive, and the generated model is positively stratified. 

On the other hand, in the second model, $\delta\ge\gamma$; thus $\gamma=\gamma\cons\beta$ is an unsafe equation breaking sensibility because $\gamma\in\llb\Omega\rrb$. The third one is more interesting; in this case, neither $\delta\ge\gamma$ not $\delta\le\gamma$; it is conjectured that this model is sensible and approximable but no proof have be found yet. 

The last example is even more surprising: it is also conjectured sensible for the same reason, but it can be shown non-approximable. This is an example that appears\footnote{In a slightly more complex form} in Kerth's thesis~\cite{KerthPhD}, he showed (more or less) that if we consider the $\lambda$-term $ V\: :=\: (\lambda xy.y(xx))\ (\lambda xy. y(xx)) $, then $\beta$ is in the interpretation of $V$, but $\tau_\beta(V)$ diverges. None of these two facts are difficult to obtain and we invite our reader to verify it as an exercise.

\subsection*{Conclusion}

With this highly theoretical and exploratory article, we only aim at questioning the limits of our models by pointing on unusual behaviors of well known semantical objects. 

Indeed, we have seen that approximability and sensibility are properties that are surprisingly hard to separate by traditional filter models. The possible causes are easy to see:
\begin{itemize}
\item Either it may rise from a new internal incompleteness of the considered class of model, which would join the incompleteness of~\cite{CaSa09}.
\item But it is more probably a logical weakness of the methods we know for proving the sensibility of a model.
\end{itemize}

In the second case, this would be an indication that the realizability methods are in fact limited when joining coinductive types and subtyping. It is, however, impossible to discern at which point level is the blockage. 

All we know is that this must be somehow related to our knowledge on the {\em non-constructive} determination of a solution for linear but non-monotonous constraints in a highly non trivial functional space. In fact, it is easy to show that in our case, the solution is unic when it exists, which means that there is still a lot of symmetry that we where unable to use.

}

\newpage

\bibliography{article}

\end{document}